\documentclass[12pt,a4paper]{article}

\usepackage[margin=1in]{geometry} %
\usepackage[utf8]{inputenc} %
\usepackage{authblk} %
\usepackage{hyperref} %
\usepackage[numbers,sort]{natbib} %
\usepackage{color}
\usepackage{array,multirow}

\usepackage{enumitem}
\setlist{itemsep=0pt}

\usepackage{amssymb,amsmath,amsthm}
\usepackage{mathtools}
\usepackage{graphicx}
\usepackage{subcaption}
\usepackage[ruled]{algorithm2e}
\usepackage[capitalise]{cleveref}

\theoremstyle{definition}
\newtheorem{theorem}{Theorem}[section]
\newtheorem{proposition}{Proposition}
\newtheorem{lemma}{Lemma}
\newtheorem{corollary}{Corollary}

\theoremstyle{remark}
\newtheorem{remark}[theorem]{Remark}

\newcommand{\R}{\mathbb{R}}
\newcommand{\E}{\mathbb{E}}

\DeclareMathOperator{\trace}{trace}

\DeclareMathOperator{\Jac}{Jac}

\DeclareMathOperator{\Cov}{Cov}
\DeclareMathOperator{\Dkl}{D_\mathrm{KL}}
\DeclareMathOperator*{\argmin}{arg\,min}
\DeclareMathOperator*{\argmax}{arg\,max}

\renewcommand{\d}{\mathrm{d}}

\usepackage{xargs}                      %
\usepackage[pdftex,dvipsnames]{xcolor}  %
\usepackage[colorinlistoftodos,prependcaption,textsize=tiny,textwidth=0.85in]{todonotes}
\newcommandx{\oz}[2][1=]{\todo[linecolor=OliveGreen,backgroundcolor=OliveGreen!25,bordercolor=OliveGreen,#1]{#2}}

\usepackage[normalem]{ulem} %

\begin{document}

\title{
Certified Bayesian optimal experimental design: from Expected Information Gain to Signal-to-Noise Ratio
}

\author{
Mohamed Doumbouya\footnote{mohamed.doumbouya@inria.fr},
Arthur Vidard\footnote{arthur.vidard@inria.fr},
Olivier Zahm\footnote{olivier.zahm@inria.fr}
}

\maketitle

\begin{abstract}
 Optimal experimental design is often formulated as the maximization of the Expected Information Gain (EIG), which measures the expected reduction in uncertainty about a parameter of interest after observing the data. For nonlinear forward models and nonGaussian priors, computing the EIG typically requires costly nested Monte Carlo estimators or sophisticated density-approximation techniques. In this work, we derive new upper and lower bounds on the EIG using dimensional logarithmic Sobolev inequalities. The proposed bounds are expressed in terms of signal-to-noise ratios (SNR), preserving the simple matrix-based structure of the EIG in the linear-Gaussian setting and depending solely on covariance matrices and Fisher information matrices. We introduce two families of bounds: conservative bounds, designed to retain the entire information content, and incremental bounds, that prioritize early information acquisition for sequential experimental design. When model gradients are available, we further derive computable approximations of the Fisher-information terms that require only a small number of gradient evaluations, yielding substantial computational savings.
 The tightness of these bounds is controlled by the nonlinearity of the forward model, and in the linear-Gaussian setting, these bounds recover the exact value of the EIG.  By eliminating the nested-sampling bottleneck, the proposed framework offers a tractable and theoretically grounded alternative to direct EIG estimation for large-scale Bayesian experimental design, while retaining the interpretability of the SNR-based bounds.
\end{abstract}

\paragraph{Keywords:}
Bayesian optimal experimental design,
Expected information gain,
Signal-to-Noise Ratio,
Fisher information,
Dimensional logarithmic Sobolev inequality.

\section{Introduction}

Optimal experimental design (OED) aims to select a limited number of potentially costly experiments that maximize the information which will be gained about a parameter of interest.
Among the many criteria used to quantify the informativeness of a design, the Expected Information Gain (EIG) has become one of the most widely used in Bayesian statistics.
It is defined as the expected Kullback-Leibler divergence between the prior and posterior distributions \cite{lindley1956measure,chaloner1995bayesian,ryan2016review}.
Formally, denoting by $\theta$ the \emph{parameters of interest} with prior density $\pi_\theta$, and by $Y$ the random vector representing the \emph{observable data}, the problem is
\begin{equation}\label{eq:EIG_intro}
 \max_{W_m}~ \mathrm{EIG}(W_m) := \E\left[\Dkl( \pi_{\theta|Y_m} || \pi_\theta ) \right] ,
\end{equation}
where $\pi_{\theta\mid Y_m}$ denotes the posterior density of $\theta$ given the observations $Y_m = W_m^\top Y$. Here, $Y_m$ represents the \emph{observed data} extracted from $Y$ under the design $W_m$, modeled as a matrix with $m$ columns. Hence, maximizing the EIG selects the design $W_m$ that induces, on average over $Y$, the largest prior-to-posterior update. We refer to \cite{huan2024optimal} for a more in-depth presentation.

Computing $\mathrm{EIG}(W_m)$ remains challenging in general. Standard estimators rely on nested Monte Carlo sampling over both the data distribution and the resulting posterior distributions, leading to high computational costs and slow convergence \cite{ryan2003estimating,huan2013simulation}. Existing approaches either accelerate the nested estimation, for example through Laplace-based importance sampling \cite{long2013fast,beck2018fast} or avoid it by approximating the relevant probability densities directly, using variational methods \cite{foster2019variational}, neural mutual-information estimators \cite{kleinegesse2020bayesian}, or transport-map surrogates \cite{li2024expected,koval2024tractable}. We refer to \cite{rainforth2024modern} for a recent review of these computational aspects. Although these methods significantly broaden the applicability of EIG-based design, they often remain computationally demanding and provide limited theoretical guarantees.
However, the situation is remarkably simpler in the linear-Gaussian setting.
There, the EIG admits the closed-form expression $\mathrm{EIG}(W_m)  = \frac{1}{2}  \ln ( \frac{|  \Cov(\theta)   | }{ |  \Cov(\theta|Y_m) |})$, where the posterior covariance $\Cov(\theta|Y_m)$  is independent of the observed data $Y_m$ but depends on the design matrix $W_m$.
This expression is at the heart of the widely used D-optimality criterion \cite{lindley1956measure,chaloner1995bayesian}.
Interestingly, the EIG of linear-Gaussian problems can also be expressed in terms of a signal-to-noise ratio (SNR) as
\begin{equation}\label{eq:EIG_linearGaussian_intro}
 \mathrm{EIG}(W_m)
 = \frac{1}{2}  \ln \left( \frac{|  W_m^\top \Cov(Y) W_m    | }{ |  W_m^\top \Cov(Y|\theta) W_m |}\right),
\end{equation}
where $\Cov(Y)$ is the covariance of the observable data (the \emph{signal}) and $\Cov(Y|\theta)$ is the covariance of $Y$ after freezing $\theta$ (the \emph{observation noise}).

In this work, we revisit the computation of the EIG for nonlinear forward models and non-Gaussian priors, following the perspective of \cite{chen2025coupled}, who cast Bayesian OED as a dimension reduction problem of the observable data.
Functional inequalities have recently become a popular tool for the gradient-based dimension reduction in uncertainty quantification tasks, for instance in function approximation \cite{constantine2014active,verdiere2025diffeomorphism,zahm2020gradient,verdiere2025mollified}, density approximation \cite{zahm2022certified,flock2024certified,li2025sharp,li2024principal}, rare event estimation \cite{uribe2021cross,breaz2026new,jiang2021recursive}
or sensitivity analysis \cite{kucherenko2016derivative,song2019derivative,heredia2026weighted,heredia2025one,roustant2017poincare}.
In information theory, the differential entropy is also often bounded by the gradient-based Fisher information matrices via \emph{e.g.} Cramér-Rao or Efroimovich's inequality \cite{aras2019family,efroimovich1980information}.
Here, by leveraging dimensional logarithmic Sobolev inequalities \cite{bakry2012dimension}, we derive new upper and lower bounds for the EIG that retain the same simple algebraic structure as in \eqref{eq:EIG_linearGaussian_intro}.
We obtain two families of SNR bounds. The first, which we refer to as \emph{conservative bounds}, quantify how much information from the entire observable dataset is preserved by a candidate design.
Optimizing the conservative bounds corresponds to solving
\begin{equation}\label{eq:optimization_con_intro}
 \max_{W_m}  ~ \frac{1}{2} \ln \left( \frac{|  W_m^\top \Cov(Y) W_m    | }{ |  W_m^\top \Sigma_\mathrm{noise} W_m |}\right),
\end{equation}
where $\Sigma_\mathrm{noise}$ relates to the Fisher information matrix of $Y|\theta$.
We note that the conservative lower-bound on the EIG was recently established in \cite{cui2025subspace} using different proof techniques.
The second family, which we call the \emph{incremental bounds}, are tailored to sequential design and quantify the additional information provided by a new observation beyond an existing dataset.
Optimizing these bounds yields
\begin{equation}\label{eq:optimization_inc_intro}
 \max_{W_m}  ~ \frac{1}{2} \ln \left( \frac{|  W_m^\top \Sigma_\mathrm{signal} W_m    | }{ |  W_m^\top \E[\Cov(Y|\theta)] W_m |}\right)
\end{equation}
where $\Sigma_\mathrm{signal}$ relates to the Fisher information matrix of $Y$.
We prove that the solutions of \eqref{eq:optimization_con_intro} and \eqref{eq:optimization_inc_intro} are quasi-optimal solutions to problem \eqref{eq:EIG_intro}, with quasi-optimality constants that are explicitly controlled by some measure of the nonlinearity and nonGaussianity of the underlying data-generating process. However, we emphasize that the solutions of \eqref{eq:optimization_con_intro} and \eqref{eq:optimization_inc_intro} should not only be viewed as approximations to \eqref{eq:EIG_intro}: they can be adopted as novel BOED criteria in their own, offering substantially more favorable computational strategies than the original criterion \eqref{eq:EIG_intro}.

Both families of bounds require computing the four covariance/Fisher matrices appearing in \eqref{eq:optimization_con_intro}--\eqref{eq:optimization_inc_intro}. We discuss several ways for estimating these quantities, depending on the structure of the relationship between $\theta$ and $Y$. We first consider a sampling-based approach, which relies on pick-and-freeze estimators that are widely used in global sensitivity analysis, see \cite{gamboa2016statistical,janon2014asymptotic}. Second, we consider an approximation-based approach that relies on estimating conditional expectations. This strategy is related in spirit to the approach developed in \cite{hoang2025scalable}, but differs significantly in the quantity being approximated: whereas \cite{hoang2025scalable} estimates the conditional expectation of the parameter of interest given the \emph{observed} data $Y_m$, our approach estimates the conditional expectation of the \emph{forward model} given the \emph{observable} data $Y$ and the parameter of interest $\theta$. Finally, when gradients of the forward model are available, we derive additional bounds on the Fisher-information-type matrices that can be estimated with a small number of gradient evaluations of the forward model. This leads to substantially lower computational costs at the expense of looser EIG bounds. We show that the resulting gap is controlled by the nonlinearity of the forward model and vanishes in the linear-Gaussian limit, where the exact EIG is recovered.

The rest of the paper is organized as follows. In Section \ref{sec:theory}, we present our main theoretical results, which is to derive upper and lower bounds for the EIG (Theorem \ref{th:Bound_EIG}) and then the incremental and conservative bounds (Corollaries \ref{cor:Bound_EIG_incremental} and \ref{cor:Bound_EIG_conservative}). In Section \ref{sec:ComputeDiagnostic}, we discuss computational strategies for estimating the four diagnostic matrices appearing in these bounds. Finally, in Section \ref{sec:numerics}, we present numerical experiments illustrating the proposed methodology on a parametrized PDE.

\section{Problem statement and main results}\label{sec:theory}

Let $\theta$ be a $\R^d$-valued random vector which represents a quantity of interest, and let $Y$ be a $\R^p$-valued observable random vector. We seek to infer $\theta$ from a partial observation of $Y$ defined by
$$Y_m= W_m^\top Y  , $$
for some (deterministic) design matrix $W_m\in\R^{p\times m}$ with $m\leq p$ columns.
Our objective is to identify $W_m$ which allows us to best infer $\theta$ from $Y_m$.
From the Bayesian perspective, the \emph{expected information gain} (EIG) measures the quantity of information on $\theta$ contained in $Y_m$, and is classically defined as the expected Kullback-Leibler (KL) divergence from the prior density $\pi_\theta$ of $\theta$ to the posterior density $\pi_{\theta|Y_m}$ of $\theta$ conditioned on $Y_m$. Formally, we aim to solve
\begin{equation}\label{eq:EIG}
 \max_{W_m\in \mathcal{K}_m\subseteq\, \R^{p\times m}} \mathrm{EIG}(W_m) := \E_{Y_m} \left[\Dkl( \pi_{\theta|Y_m} || \pi_\theta ) \right] ,
\end{equation}
where $\Dkl(\rho||\pi)=\int \log(\rho/\pi)\d\rho$ denotes the KL divergence and where $\mathcal{K}_m$ denotes the set of admissible design matrices.
For instance, when $\mathcal{K}_m = \{ W_m=[e_{\tau_1},\hdots,e_{\tau_m}] : \tau\subset\{1,\hdots,p\}, \#\tau=m \}$ is the set of all matrices whose columns are canonical vectors $e_1,\hdots,e_d$ of $\R^d$, the partial dataset $Y_m = (Y_{\tau_1},\hdots, Y_{\tau_m})$ is the vectors containing the components of $Y$ indexed by $\tau$.

It is well known that $\mathrm{EIG}(W_m)$ is equal to the \emph{mutual information} $\mathrm{EIG}(W_m) = \Dkl( \pi_{\theta Y_m} || \pi_{\theta} \pi_{Y_m} ) $ between $\theta $ and $Y_m$.
Thus, Problem \eqref{eq:EIG} is equivalent to finding the design matrix $W_m$ which maximizes the information that $Y_m$ carries about $\theta$. Similarly, we can write
\begin{align}
 \mathrm{EIG}(W_m)
 &= \int  \log\frac{\pi_{\theta|Y}}{\pi_{\theta}} \d\pi_{\theta Y} - \int  \log\frac{\pi_{\theta|Y}}{\pi_{\theta|Y_m}} \d\pi_{\theta Y} \nonumber\\
 &= \underbrace{\E_{Y} \left[\Dkl( \pi_{\theta|Y} || \pi_{\theta} ) \right]}_{=\mathrm{EIG}(I_p)} - \E_{Y} \left[\Dkl( \pi_{\theta|Y} || \pi_{\theta|Y_m} ) \right] ,  \label{eq:EIG_ErrorPost}
\end{align}
 so that Problem \eqref{eq:EIG} is also equivalent to minimizing $W_m\mapsto \E_{Y} \left[\Dkl( \pi_{\theta|Y} || \pi_{\theta|Y_m} ) \right] $ which measures how close the posterior $\pi_{\theta|Y_m}$ with partially observed data set $Y_m$ is to the reference posterior $\pi_{\theta|Y}$ in which the entire data set $Y$ is being observed.

The direct computation of $\mathrm{EIG}(W_m)$ is a difficult task in general as it involves high dimensional integrals which are not always tractable analytically nor computationally.
In this paper, we propose an affordable strategy, emphasising on the following prototypical settings of increasing complexity.

\begin{itemize}
  \item \textbf{Linear \& Gaussian setting.} The prior for $\theta$
  is Gaussian $\pi_\theta=\mathcal{N}(m_\theta,\Sigma_\theta )$ and the observable data is defined as
   $$
    Y = A\theta + \varepsilon ,
   $$
  where $A\in\R^{p\times d}$ is a (deterministic) matrix and $\varepsilon\sim\mathcal{N}(0,\Sigma_\mathrm{obs})$ a Gaussian noise which is independent of $\theta$. In this setting, the EIG can be analytically computed
 \begin{equation}\label{eq:EIG_Gaussian}
  \mathrm{EIG}(W_m)
  = \frac{1}{2}\ln \left( \frac{|  W_m^\top \Sigma_Y W_m    | }{ |  W_m^\top \Sigma_\mathrm{obs}  W_m |}\right) ,
 \end{equation}
  where $\Sigma_Y =  \Sigma_\mathrm{obs} + A \Sigma_\theta A^\top   $ is the covariance matrix of the observable data $Y$.
  We give the proof of this result in \ref{proof:EIG_Gaussian}. Intuitively, maximizing $\mathrm{EIG}(W_m)$ corresponds to finding the data components whose variance is largest relatively to the noise variance, hence the one that \emph{maximizes the Signal-to-Noise Ratio} (SNR).

  \item \textbf{Standard setting}. Given a prior density $\pi_\theta$ for $\theta$ and a forward model $G:\R^d\rightarrow\R^m$, the observable data set $Y\in\R^m$ is defined as
  \begin{equation}\label{eq:ForwardModel_nonGO}
  Y=G(\theta) + \varepsilon ,
  \end{equation}
  where $\varepsilon\sim\mathcal{N}(0,\Sigma_\mathrm{obs})$ is a Gaussian noise that is independent of $\theta$.
  This defines the likelihood function
  $
  \pi_{Y|\theta}(y|\theta)\propto \exp ( -\frac{1}{2}\| G(\theta) - y\|_{\Sigma_\mathrm{obs}^{-1}}^2 )
  $, where $\|\cdot\|_\Sigma^2=(\cdot)^\top \Sigma(\cdot)$.
  Without further assumptions on $G$ and on $\pi_\theta$, there is no closed form expression for $\mathrm{EIG}(W_m)$.

  \item \textbf{Auxiliary parameter setting.} Consider the following data generating process
  \begin{equation}\label{eq:AuxiliarySetting}
   Y = G(\theta,\eta) + \varepsilon
  \end{equation}
  where $\theta\sim\pi_\theta$ and $\eta\sim\pi_\eta$ and $ \varepsilon\sim\mathcal{N}(0,\Sigma_\mathrm{obs})$ are assumed to be independent. Here, $\eta$ is referred to as a \emph{auxiliary parameter} or a \emph{nuisance parameter}:
  the goal is to identify a design $W_m$ such that the resulting observations are primarily informative about the parameter of interest $\theta$, regardless of the auxiliary parameter $\eta$.

  \item \textbf{Goal-oriented setting}.
  In many situations, the observable data $Y$ is not a direct function of the parameter of interest $\theta$ as in  \eqref{eq:ForwardModel_nonGO} or \eqref{eq:AuxiliarySetting}, but the dependence of $Y$ on $\theta$ is rather described via a $\R^q$-valued \emph{latent variable} $\eta\sim \pi_\eta$ as in
  \begin{equation}\label{eq:GOsetting}
    \begin{split}
   Y&=G(\eta) + \varepsilon ,  \\
   \theta &= H(\eta)+\xi ,
  \end{split}.
 \end{equation}
  For simplicity, we assume that $\theta\sim\pi_\theta$ and $\eta\sim\pi_\eta$ and $\varepsilon\sim\mathcal{N}(0,\Sigma_\mathrm{obs})$ and $\xi\sim\mathcal{N}(0,\Sigma_\xi)$ are independent. Here, $G:\R^q\rightarrow\R^p$ denotes the forward model and $H:\R^q\rightarrow\R^d$ is a function which extracts the quantity of interest $\theta$, such as the energy associated with a state $\eta$ or any other physically meaningful quantity.
  In this setting, the objective is to determine the data $Y_m$ that are informative about $\theta$ primarily, hence the term \emph{goal-oriented}.
  Note that here, the prior $\pi_\theta$ is implicitly defined as $\pi_\theta(\theta)\propto\int \exp(-\frac{1}{2}\| H(\eta) - \theta\|_{\Sigma_\xi^{-1}}^2)\pi_{\eta}(\eta)\d\eta$, and the likelihood by
  \begin{equation*}\label{eq:GO_likelihood}
   \pi_{Y|\theta}(y|\theta) \propto  \frac{\int \exp(-\frac{1}{2}\| G(\eta) - y\|_{\Sigma_\mathrm{obs}^{-1}}^2) \exp(-\frac{1}{2}\| H(\eta) - \theta\|_{\Sigma_\xi^{-1}}^2 ) \pi_{\eta}(\eta) \d \eta }{ \int \exp(-\frac{1}{2}\| H(\eta) - \theta\|_{\Sigma_\xi^{-1}}^2)\pi_{\eta}(\eta)\d\eta } .
  \end{equation*}
  Since evaluating the prior and the likelihood is prohibitive (because of the integrals over $\eta$), this setting is also often referred to as the \emph{likelihood-free} setting, or as to the \emph{simulation-based} setting: one can easily draw samples of $(Y,\theta)$ via \eqref{eq:GOsetting}, but neither the likelihood $\pi_{Y|\theta}$ nor the prior $\pi_{\theta}$ can be evaluated in a straightforward manner.

\end{itemize}

It is worth noting that the Standard setting and the Auxiliary parameter settings are two particular cases of the Goal-oriented setting. Indeed, letting $H(\eta)=\eta$ and $\xi=0$ in \eqref{eq:GOsetting} yields $\theta=\eta$ and recovers $Y=G(\theta)+\varepsilon$ which is \eqref{eq:ForwardModel_nonGO}. Also, taking $\eta=(\eta_1,\eta_2)$, $H(\eta)=\eta_1$ and $\xi=0$ in \eqref{eq:GOsetting} yields $\theta = \eta_1$ and $Y=G(\theta,\eta_2)+\varepsilon$, which corresponds to \eqref{eq:AuxiliarySetting}.
Therefore, in what follows, we mainly focus on the goal-oriented setting \eqref{eq:GOsetting}.

We give now our main result. The proof relies on dimensional logarithmic Sobolev inequalities, see  \cite{bakry2012dimension} and Section \ref{sec:dLSI}, and involves \emph{Fisher information matrices}, see Section \ref{sec:Fisher} for more details.

\begin{theorem}[Main result]\label{th:Bound_EIG}
 Let $(\theta,Y)$ be a random vector on $\R^d\times \R^p$ with smooth joint probability density $\pi_{\theta,Y}$ such that $\E[\|Y\|^2]<\infty$.
 Let $\Sigma_Y=\Cov(Y)$ and $\Sigma_{Y|\theta}=\E[\Cov(Y|\theta)]$ and let $\Sigma_\mathrm{noise}\in\R^{p\times p}$ and $\Sigma_\mathrm{signal}\in\R^{p\times p}$ be any symmetric definite positive matrices that bound the (expected) Fisher information matrices as
 \begin{align}
   \Sigma_\mathrm{noise}^{-1} &\succeq \E[\mathcal{I}_{Y|\theta}] := \E_{(\theta,Y)}\left[ \left(\nabla_{Y}\ln \pi_{Y|\theta}(Y|\theta)   \right)\left(\nabla_{Y}\ln \pi_{Y|\theta}(Y|\theta)   \right)^\top  \right]  \label{eq:SigmaNoise} \\
   \Sigma_\mathrm{signal}^{-1}  &\succeq \mathcal{I}_{Y} := \E_Y\left[ \left(\nabla_{Y}\ln \pi_{Y}(Y)   \right)\left(\nabla_{Y}\ln \pi_{Y}(Y)   \right)^\top \right] . \label{eq:SigmaSignal}
 \end{align}
 Then, for any matrices $W_m\in\R^{p\times m}$ and $W_\mathrm{new}\in\R^{p\times m'}$ such that $[W_m,W_\mathrm{new}]\in\R^{p\times (m+m')}$ has rank $m+m'\leq p$, we have
  \begin{align}
  \mathrm{EIG}([W_m,W_\mathrm{new}])
 &\geq \mathrm{EIG}[W_m] + \frac{1}{2}    \ln\left( \frac{|W_\mathrm{new}^\top \Sigma_\mathrm{signal}[W_m]  W_\mathrm{new}|}{| W_\mathrm{new}^\top \Sigma_{Y|\theta}[W_m]  W_\mathrm{new} |}  \right)  \label{eq:Bound_EIG} \\
 \mathrm{EIG}([W_m,W_\mathrm{new}]) &\leq \mathrm{EIG}[W_m] + \frac{1}{2}   \ln\left( \frac{ | W_\mathrm{new}^\top \Sigma_{Y}[W_m] W_\mathrm{new} | }{ |W_\mathrm{new}^\top \Sigma_\mathrm{noise}[W_m] W_\mathrm{new}| }\right) , \label{eq:Bound_EIG_up}
 \end{align}
 where $[W_m,W_\mathrm{new}]$ denotes the concatenation of the columns of the matrices $W_m$ and $W_\mathrm{new}$ and
 where $\Sigma[W] = \Sigma - \Sigma W (W^\top \Sigma W)^{-1}W^\top \Sigma$.

\end{theorem}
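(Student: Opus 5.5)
Write $Z=W_\mathrm{new}^\top Y$. The chain rule for mutual information gives
\[
\mathrm{EIG}([W_m,W_\mathrm{new}])-\mathrm{EIG}[W_m]=I(\theta;Z\mid Y_m)=h(Z\mid Y_m)-h(Z\mid Y_m,\theta),
\]
with $h$ the differential entropy. The proof will therefore bound these two conditional entropies from above and from below. Upper bounds come from the maximum-entropy (Gaussian) inequality. Lower bounds come from the dimensional logarithmic Sobolev inequality of Section~\ref{sec:dLSI}, in its entropy--Fisher form
\[
h(X)\geq \tfrac{1}{2}\ln\bigl((2\pi e)^k/|\mathcal{I}_X|\bigr)
\]
for a $k$-dimensional vector $X$. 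This is the Efroimovich/Cram\'er--Rao type bound, which follows from the dimensional LSI by optimizing over Gaussian reference measures. Because the statement involves $\Sigma[W]$, I will first reduce to the case where $W_\mathrm{new}$ is orthogonal to $W_m$ in a suitable metric. Replacing $W_\mathrm{new}$ by $W_\mathrm{new}-W_mB$ changes neither the conditional information nor the right-hand sides. For the conditioning I will use the identity that the Schur complement $W_\mathrm{new}^\top\Sigma[W_m]W_\mathrm{new}$ is the conditional covariance of the Gaussian vector with covariance $\Sigma$, for $Z$ given $Y_m$.

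\textbf{Upper bound \eqref{eq:Bound_EIG_up}.}
\begin{itemize}
\item First term: $h(Z\mid Y_m)\leq \frac12\ln((2\pi e)^{m'}|\E\Cov(Z\mid Y_m)|)$ by max-entropy and Jensen (log-det concavity). Then $\E\Cov(Z\mid Y_m)\preceq$ the linear-regression residual covariance $W_\mathrm{new}^\top\Sigma_Y[W_m]W_\mathrm{new}$, since conditional expectation beats best linear prediction.
\item Second term: $h(Z\mid Y_m,\theta)\geq \frac12\ln((2\pi e)^{m'}/|\E\mathcal{I}_{Z\mid Y_m,\theta}|)$, where the Fisher lower bound is applied to each conditional law and averaged, using concavity of $-\ln|\cdot|$.
\item The key matrix step is to show
\[
\E[\mathcal{I}_{Z\mid Y_m,\theta}]\preceq \bigl(W_\mathrm{new}^\top\Sigma_\mathrm{noise}[W_m]W_\mathrm{new}\bigr)^{-1}.
\]
The score of a conditional marginal is a conditional expectation of the full score (data processing for Fisher information). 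So, with an invertible completion $V=[W_m,W_\mathrm{new},\ast]$, the Fisher information of $(Y_m,Z)$ given $\theta$ is $\preceq$ the corresponding block of $V^{-1}\E[\mathcal{I}_{Y|\theta}]V^{-\top}$, restricted to coordinates. The Fisher information of $Z$ given $(Y_m,\theta)$ is then the $Z$-block of the Fisher matrix of $(Y_m,Z)\mid\theta$; conditioning only removes nothing from the block.
\item The block-inverse identity then turns $\Sigma_\mathrm{noise}^{-1}$ restricted to the coordinates dual to $[W_m,W_\mathrm{new}]$ into the inverse Schur complement $(W_\mathrm{new}^\top\Sigma_\mathrm{noise}[W_m]W_\mathrm{new})^{-1}$. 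This linear-algebra lemma is precisely why the Schur complement form $\Sigma[W]$ appears.
\end{itemize}

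\textbf{Lower bound \eqref{eq:Bound_EIG}.} This is symmetric, with the roles exchanged:
\begin{itemize}
\item $h(Z\mid Y_m)\geq$ the Fisher bound, with $\E\mathcal{I}_{Z\mid Y_m}\preceq (W_\mathrm{new}^\top\Sigma_\mathrm{signal}[W_m]W_\mathrm{new})^{-1}$, obtained by the same block argument from $\mathcal{I}_Y\preceq\Sigma_\mathrm{signal}^{-1}$.
\item $h(Z\mid Y_m,\theta)\leq$ the Gaussian bound with $\E\Cov(Z\mid Y_m,\theta)$. This is bounded by $W_\mathrm{new}^\top \Sigma_{Y|\theta}[W_m]W_\mathrm{new}$ because, conditionally on $\theta$, the best linear predictor of $Z$ from $Y_m$ does no better than the conditional mean. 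Jensen then averages over $\theta$, using concavity of log-det to pass from $\E\ln|\cdot|$ to $\ln|\E\cdot|$ and the fact that a pooled linear predictor is suboptimal.
\end{itemize}

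\textbf{Main obstacle.} The main obstacle I expect is the Fisher-information step. It needs a clean statement that the conditional Fisher information of a linear projection, conditioned on other linear projections, is dominated by the inverse Schur complement of the full Fisher bound. The two operations point in different directions: marginalizing decreases Fisher information via the score's conditional expectation, while conditioning selects a block. The proof has to check that the resulting matrix is exactly $(W_\mathrm{new}^\top\Sigma[W_m]W_\mathrm{new})^{-1}$ and not its reciprocal counterpart. I would first verify this identity in the linear-Gaussian case, where the bounds must be equalities, and then prove it in general via the score-projection inequality
\[
\E[\E[s\mid\cdot]\E[s\mid\cdot]^\top]\preceq\E[ss^\top]
\]
combined with a change of variables through an invertible completion $V$.
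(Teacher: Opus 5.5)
Your overall architecture matches the paper's. The identity $\mathrm{EIG}([W_m,W_\mathrm{new}])-\mathrm{EIG}(W_m)=h(Z\mid Y_m)-h(Z\mid Y_m,\theta)$ is the decomposition \eqref{tmp:078}. Your max-entropy and Efroimovich bounds, averaged by Jensen, are Lemma~\ref{lem:dLSI_Fisher} written with entropies instead of KL divergences to the moment-matched Gaussian. Your regression bounds on $\E[\Cov(Z\mid Y_m)]$ and $\E[\Cov(Z\mid Y_m,\theta)]$ are also the paper's.

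\textbf{The gap is in the Fisher-information step}, which is exactly where the $\Sigma[W_m]$ structure has to appear. Write $U=[W_m,W_\mathrm{new}]$ and $V=[U,W_\ast]$. The plain score-projection inequality $\E[\E[s\mid\cdot]^{\otimes 2}]\preceq\E[s^{\otimes 2}]$ only shows that the Fisher information of $U^\top Y$ given $\theta$ is dominated by the $U$-block of $V^{-1}\E[\mathcal{I}_{Y|\theta}]V^{-\top}$. Hence it is dominated by the $U$-block of $V^{-1}\Sigma_\mathrm{noise}^{-1}V^{-\top}=(V^\top\Sigma_\mathrm{noise}V)^{-1}$.

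For an arbitrary completion $W_\ast$, the $Z$-block of that matrix is $(W_\mathrm{new}^\top\Sigma_\mathrm{noise}[W_m,W_\ast]W_\mathrm{new})^{-1}$, i.e.\ the notation $\Sigma[W]$ with $W=[W_m,W_\ast]$. This is the inverse conditional covariance of $Z$ given $Y_m$ \emph{and} $W_\ast^\top Y$ under the Gaussian with covariance $\Sigma_\mathrm{noise}$. It satisfies $\succeq(W_\mathrm{new}^\top\Sigma_\mathrm{noise}[W_m]W_\mathrm{new})^{-1}$, typically strictly. A block of an inverse is not the inverse of a block, so the block-inverse identity you invoke is false as stated. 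Your chain therefore yields a weaker, completion-dependent bound.

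A small example shows this. Take $p=2$, $m=0$, $W_\mathrm{new}=e_1$, $V=I_2$, and $\Sigma=\Sigma_\mathrm{signal}$. You get $\mathcal{I}_{Y_1}\preceq(\Sigma^{-1})_{11}$ rather than the required $\mathcal{I}_{Y_1}\preceq 1/\Sigma_{11}$. Your proposed Gaussian sanity check would reveal this too: there the Fisher information of $U^\top Y$ is $(U^\top\Sigma U)^{-1}$, which equals the diagonal block only when $U^\top\Sigma W_\ast=0$. The preliminary reduction $W_\mathrm{new}\mapsto W_\mathrm{new}-W_mB$ does not help. The obstruction is the coupling between $U$ and the completion coordinates, not between $W_m$ and $W_\mathrm{new}$.

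There are two ways to close the gap.

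(i) Choose the completion $\Sigma$-orthogonal: $U^\top\Sigma_\mathrm{noise}W_\ast=0$, or $U^\top\Sigma_\mathrm{signal}W_\ast=0$ for the signal bound. Then $V^\top\Sigma V$ is block diagonal and the $U$-block of its inverse is $(U^\top\Sigma U)^{-1}$. The $Z$-block of $(U^\top\Sigma U)^{-1}$ is the inverse Schur complement $(W_\mathrm{new}^\top\Sigma[W_m]W_\mathrm{new})^{-1}$. With this choice your plain projection inequality suffices.

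(ii) Do what the paper does in Proposition~\ref{prop:Fisher}. Let $s_U,s_\ast$ be the components of the score of $V^\top Y$, and project $s_U-As_\ast$ instead of $s_U$, using $\E[s_\ast\mid U^\top Y]=0$. This gives the Schur-complement bound $\mathcal{I}_{U^\top Y}\preceq(U^\top\mathcal{I}_Y^{-1}U)^{-1}\preceq(U^\top\Sigma_\mathrm{signal}U)^{-1}$ for any completion, and likewise conditionally on $\theta$ for $\Sigma_\mathrm{noise}$. This is Lemma~\ref{lem:FisherMarginal}.

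A minor point: in the Efroimovich averaging step you need concavity of $\ln|\cdot|$, i.e.\ convexity of $-\ln|\cdot|$, not concavity of $-\ln|\cdot|$.
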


\begin{proof}
 See Section \ref{proof:Bound_EIG_incremental}.
\end{proof}

The inequalities \eqref{eq:Bound_EIG}--\eqref{eq:Bound_EIG_up} involve two matrices $\Sigma_\mathrm{noise}$ and $\Sigma_\mathrm{signal}$ which are defined as bounds on the (expected) Fisher information matrices $\E[\mathcal{I}_{Y|\theta}]$ and $\mathcal{I}_Y$.
Although choosing $\Sigma_\mathrm{noise}=\E[\mathcal{I}_{Y|\theta}]^{-1}$ and $\Sigma_\mathrm{signal}=\mathcal{I}_{Y}^{-1}$ yields the sharpest bounds in \eqref{eq:Bound_EIG}--\eqref{eq:Bound_EIG_up}, the numerical computation of $\E[\mathcal{I}_{Y|\theta}]$ and $\mathcal{I}_Y$ may be prohibitively expensive depending on the context. In Section \ref{sec:ComputeDiagnostic} we propose bounds $\Sigma_\mathrm{noise}^{-1}\succeq\E[\mathcal{I}_{Y|\theta}]$ and $\Sigma_\mathrm{signal}^{-1}\succeq\mathcal{I}_{Y}$ which can be computed with significantly lower numerical effort.

\begin{corollary}[Incremental bounds]\label{cor:Bound_EIG_incremental}
Under the same assumptions as in Theorem \ref{cor:Bound_EIG_conservative}, for any rank-$m$ matrix $W_m\in\R^{p\times m}$ we have
  \begin{align}
  \mathrm{EIG}(W_m)
 &\geq   \frac{1}{2}    \ln\left( \frac{|W_m^\top \Sigma_\mathrm{signal}  W_m|}{| W_m^\top \Sigma_{Y|\theta}  W_m |}  \right)  \label{eq:Bound_EIG_incremental} \\
 \mathrm{EIG}(W_m) &\leq  \frac{1}{2}   \ln\left( \frac{ | W_m^\top \Sigma_{Y} W_m | }{ |W_m^\top \Sigma_\mathrm{noise} W_m| }\right) , \label{eq:Bound_EIG_incremental_up}
 \end{align}
\end{corollary}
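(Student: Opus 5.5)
This corollary is the special case of Theorem \ref{th:Bound_EIG} with an empty initial design. I would apply the theorem with $W_m$ set to the empty matrix ($m=0$) and $W_\mathrm{new}$ replaced by the rank-$m$ matrix of the corollary. With $m=0$, the partial observation carries no information: $Y_0$ is a constant, so $\pi_{\theta|Y_0}=\pi_\theta$. Hence $\mathrm{EIG}[W_0]=\E[\Dkl(\pi_\theta||\pi_\theta)]=0$.

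The next step is to interpret the Schur-complement notation $\Sigma[W]=\Sigma-\Sigma W(W^\top\Sigma W)^{-1}W^\top\Sigma$ for an empty $W$. The correction term is a product through a $0$-dimensional space, so it is the zero $p\times p$ matrix. Therefore
\[
\Sigma_\mathrm{signal}[W_0]=\Sigma_\mathrm{signal},\quad
\Sigma_{Y|\theta}[W_0]=\Sigma_{Y|\theta},\quad
\Sigma_{Y}[W_0]=\Sigma_{Y},\quad
\Sigma_\mathrm{noise}[W_0]=\Sigma_\mathrm{noise}.
\]
The rank condition of the theorem becomes simply $\operatorname{rank}(W_\mathrm{new})=m\le p$, which is exactly the hypothesis. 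Substituting into \eqref{eq:Bound_EIG} and \eqref{eq:Bound_EIG_up} then gives \eqref{eq:Bound_EIG_incremental} and \eqref{eq:Bound_EIG_incremental_up} directly.

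The only real obstacle is whether $m=0$ is admissible in the theorem as stated. If it is not, I would not rely on that convention and would instead revisit the proof in Section \ref{proof:Bound_EIG_incremental} with no conditioning on previously observed data. In that case the argument runs as follows.

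\textbf{Upper bound.} Write $\mathrm{EIG}(W_m)=h(Y_m)-\E[h(Y_m|\theta)]$, with differential entropies. Bound $h(Y_m)\le\frac12\ln((2\pi e)^m|W_m^\top\Sigma_YW_m|)$ by the Gaussian maximum-entropy property. Then bound $h(Y_m|\theta)$ from below with the dimensional log-Sobolev inequality (an Efroimovich-type inequality), giving $h(Y_m|\theta)\ge\frac12\ln((2\pi e)^m|\mathcal I_{Y_m|\theta}|^{-1})$. Concavity of $\ln\det$ and the data-processing monotonicity of Fisher information, $\mathcal I_{Y_m|\theta}\preceq (W_m^\top \mathcal I_{Y|\theta}^{-1}W_m)^{-1}$ or more simply $\E[\mathcal I_{Y_m|\theta}]^{-1}\succeq W_m^\top\Sigma_\mathrm{noise}W_m$, then yield the result.

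\textbf{Lower bound.} This is the symmetric argument. Bound $\E[h(Y_m|\theta)]\le\frac12\ln((2\pi e)^m|W_m^\top\Sigma_{Y|\theta}W_m|)$ using Gaussian maximum entropy together with Jensen for $\ln\det$. Bound $h(Y_m)$ from below via the Fisher information of $Y_m$, which is dominated through $\mathcal I_Y$ and hence through $\Sigma_\mathrm{signal}$.
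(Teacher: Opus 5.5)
Your proposal is correct and is essentially the paper's own argument: the paper proves the corollary by applying Theorem~\ref{th:Bound_EIG} with $W_m\leftarrow\emptyset$ and $W_\mathrm{new}\leftarrow W_m$, implicitly using the conventions $\mathrm{EIG}(\emptyset)=0$ and $\Sigma[\emptyset]=\Sigma$ that you state explicitly. Your fallback entropy-based argument is also sound; it is the paper's proof of the theorem (Lemma~\ref{lem:dLSI_Fisher} combined with Lemma~\ref{lem:FisherMarginal}) specialised to an empty initial design and rephrased in terms of differential entropies.
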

\begin{proof}
 Apply Theorem \ref{th:Bound_EIG} with $W_m\leftarrow \emptyset$ and $W_\mathrm{new}\leftarrow W_m$.
\end{proof}

The bounds \eqref{eq:Bound_EIG_incremental}--\eqref{eq:Bound_EIG_incremental_up} are similar to the analytical expression \eqref{eq:EIG_Gaussian} of the EIG in the linear-Gaussian setting, in that the dependence on the design matrix $W_\mathrm{new}$ arises through a generalized Rayleigh quotient $|W_m^\top A W_m| / |W_m^\top B W_m| $.
We note that the bound \eqref{eq:Bound_EIG_incremental_up} has been recently derived in \cite[Section 3.1.3]{cui2025subspace} in the standard setting \eqref{eq:ForwardModel_nonGO} using other proof techniques.
Fundamentally, these bounds quantify the \emph{incremental information gain} obtained by observing $Y_m=W_m^\top Y$ (meaning $\mathrm{EIG}(W_m)$) compared to observing nothing (meaning $\mathrm{EIG}(\emptyset)=0$).

In contrast, the following corollary establishes \emph{conservative bounds} for the EIG.
These bounds compare the EIG achieved with $Y_m=W_m^\top Y$ to the total EIG that is obtainable by observing the whole dataset $Y$ (meaning $\mathrm{EIG}(I_d)$). When using these bounds to select $W_m$, the approach prioritizes retaining as much information about $\theta$ from $Y$ as possible, hence the term \emph{conservative}.

\begin{corollary}[Conservative bounds]\label{cor:Bound_EIG_conservative}
 Under the same assumptions as in Theorem \ref{cor:Bound_EIG_conservative}, we have
 \begin{align}
  \mathrm{EIG}(W_m)
  &\geq \mathrm{EIG}(I_p) + \frac{1}{2} \ln \left(\frac{| W_m^\top \Sigma_Y W_m |}{| W_m^\top \Sigma_\mathrm{noise} W_m |}\right)   -  \frac{1}{2} \ln\left(\frac{| \Sigma_Y |}{|\Sigma_\mathrm{noise}| }\right)   \label{eq:Bound_EIG_conservative}   \\
  \mathrm{EIG}(W_m)  &\leq \mathrm{EIG}(I_p) + \frac{1}{2}\ln\left(\frac{|W_m^\top \Sigma_\mathrm{signal}  W_m|}{| W_m^\top \Sigma_{Y|\theta} W_m |}\right)- \frac{1}{2}\ln\left( \frac{|\Sigma_\mathrm{signal}  |}{|\Sigma_{Y|\theta}|}\right). \label{eq:Bound_EIG_conservative_up}
 \end{align}
\end{corollary}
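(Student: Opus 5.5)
We derive Corollary~\ref{cor:Bound_EIG_conservative} from Theorem~\ref{th:Bound_EIG}, this time taking the new block to be the \emph{complement} of $W_m$. Assume $W_m$ has rank $m<p$; if $m=p$, $W_m$ is invertible, $\mathrm{EIG}(W_m)=\mathrm{EIG}(I_p)$, and both bounds hold with equality since the determinant ratios cancel. Pick $W_\mathrm{new}\in\R^{p\times(p-m)}$ with $[W_m,W_\mathrm{new}]$ invertible. An invertible linear map of $Y$ does not change the mutual information with $\theta$, so $\mathrm{EIG}([W_m,W_\mathrm{new}])=\mathrm{EIG}(I_p)$. Substituting into \eqref{eq:Bound_EIG_up} and rearranging gives
\begin{equation*}
\mathrm{EIG}(W_m)\geq \mathrm{EIG}(I_p)-\frac12\ln\frac{|W_\mathrm{new}^\top\Sigma_Y[W_m]W_\mathrm{new}|}{|W_\mathrm{new}^\top\Sigma_\mathrm{noise}[W_m]W_\mathrm{new}|}.
\end{equation*}
Substituting into \eqref{eq:Bound_EIG} and rearranging gives
\begin{equation*}
\mathrm{EIG}(W_m)\leq \mathrm{EIG}(I_p)-\frac12\ln\frac{|W_\mathrm{new}^\top\Sigma_\mathrm{signal}[W_m]W_\mathrm{new}|}{|W_\mathrm{new}^\top\Sigma_{Y|\theta}[W_m]W_\mathrm{new}|}.
\end{equation*}

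The key step is a Schur-complement determinant identity. For any symmetric positive definite $\Sigma$, set $M=[W_m,W_\mathrm{new}]$ and write $M^\top\Sigma M$ in block form. Its Schur complement with respect to the $W_m^\top\Sigma W_m$ block is
\begin{equation*}
W_\mathrm{new}^\top\Sigma W_\mathrm{new}-W_\mathrm{new}^\top\Sigma W_m(W_m^\top\Sigma W_m)^{-1}W_m^\top\Sigma W_\mathrm{new}=W_\mathrm{new}^\top\Sigma[W_m]W_\mathrm{new}.
\end{equation*}
Hence
\begin{equation*}
|M|^2|\Sigma|=|M^\top\Sigma M|=|W_m^\top\Sigma W_m|\,|W_\mathrm{new}^\top\Sigma[W_m]W_\mathrm{new}|.
\end{equation*}

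Apply this identity to both $\Sigma$'s in each ratio. The factor $|M|^2$ cancels, which gives
\begin{equation*}
\frac{|W_\mathrm{new}^\top\Sigma_Y[W_m]W_\mathrm{new}|}{|W_\mathrm{new}^\top\Sigma_\mathrm{noise}[W_m]W_\mathrm{new}|}=\frac{|\Sigma_Y|}{|\Sigma_\mathrm{noise}|}\cdot\frac{|W_m^\top\Sigma_\mathrm{noise}W_m|}{|W_m^\top\Sigma_YW_m|}.
\end{equation*}
Inserting this gives \eqref{eq:Bound_EIG_conservative}. The same computation with the pair $(\Sigma_\mathrm{signal},\Sigma_{Y|\theta})$ gives \eqref{eq:Bound_EIG_conservative_up}.

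The only point needing care is that all four matrices are positive definite, so the Schur complements and logarithms are well defined. For $\Sigma_\mathrm{noise}$ and $\Sigma_\mathrm{signal}$ this is an assumption. For $\Sigma_Y$ and $\Sigma_{Y|\theta}$ it is implicit in Theorem~\ref{th:Bound_EIG}: since $\Sigma_Y\succeq\Sigma_{Y|\theta}$, it suffices that $\Sigma_{Y|\theta}$ be nonsingular, which holds under the same conditions the theorem's bounds already require.
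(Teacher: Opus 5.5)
Your proof is correct and follows the paper's route. Like the paper, you apply Theorem \ref{th:Bound_EIG} with a complementary block $W_\mathrm{new}$ so that $[W_m,W_\mathrm{new}]$ is invertible, use $\mathrm{EIG}([W_m,W_\mathrm{new}])=\mathrm{EIG}(I_p)$, and factor the determinants through the Schur complement $W_\mathrm{new}^\top\Sigma[W_m]W_\mathrm{new}$. The only difference is cosmetic: the paper chooses $W_\perp$ to be $\Sigma_\mathrm{noise}$-orthonormal and $\Sigma_\mathrm{noise}$-orthogonal to $W_m$, so only $\Sigma_Y$ needs the factorization, whereas you factor both matrices with an arbitrary completion and let $|M|^2$ cancel, which is slightly cleaner and makes clear the bound does not depend on the completion.
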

\begin{proof}
 Apply Theorem \ref{th:Bound_EIG} with $[W_m,W_\mathrm{new}]=I_p$, see details in Section \ref{proof:Bound_EIG_conservative}.
\end{proof}

The two corollaries above naturally yield two distinct strategies. When considering the incremental bounds \eqref{eq:Bound_EIG_incremental}--\eqref{eq:Bound_EIG_incremental_up}, a natural approach is to identify the design $W_m$ as the maximizer of the lower bound \eqref{eq:Bound_EIG_incremental}, namely
\begin{equation}\label{eq:incremental_optimization}
 \mathrm{(\texttt{i-SNR})}\qquad
 \max_{W_m\in\mathcal{K}_m} \frac{1}{2} \ln \left( \frac{|  W_m^\top \Sigma_\mathrm{signal} W_m    | }{ |  W_m^\top  \Sigma_{Y|\theta}  W_m |}\right).
\end{equation}
The upper bound \eqref{eq:Bound_EIG_incremental_up} can then be used to assess how sub-optimal the solution to \eqref{eq:incremental_optimization} is with respect to the original problem \eqref{eq:EIG}. Alternatively, the conservative bounds \eqref{eq:Bound_EIG_conservative}--\eqref{eq:Bound_EIG_conservative_up} naturally identify $W_m$ as the maximizer of the lower bound \eqref{eq:Bound_EIG_conservative}, that is
\begin{equation}\label{eq:conservative_optimization}
\mathrm{(\texttt{c-SNR})}\qquad
 \max_{W_m\in\mathcal{K}_m} \frac{1}{2} \ln \left( \frac{|  W_m^\top \Sigma_Y W_m    | }{ |  W_m^\top  \Sigma_\mathrm{noise}  W_m |}\right).
\end{equation}
Interestingly, the incremental strategy (\texttt{i-SNR}) is equivalent to maximizing the conservative upper bound \eqref{eq:Bound_EIG_conservative_up}, and vice versa.

\begin{remark}[Tightness of the bounds]\label{rmk:GaussianY}
 It is worth noting that both the incremental bounds \eqref{eq:Bound_EIG_incremental}--\eqref{eq:Bound_EIG_incremental_up} and the conservative bounds \eqref{eq:Bound_EIG_conservative}--\eqref{eq:Bound_EIG_conservative_up} become equalities as soon as
 \begin{equation}\label{eq:tightness}
   \Sigma_\mathrm{signal}=\Sigma_Y
   \quad\mathrm{and}\quad
   \Sigma_\mathrm{noise} = \Sigma_{Y|\theta} ,
 \end{equation}
 where we recall that $\Sigma_Y=\Cov(Y) $, $\Sigma_{Y|\theta} = \E[\Cov(Y|\theta)]$ and
 $\Sigma_\mathrm{signal}\preceq\mathcal{I}_Y^{-1}$,  $\Sigma_\mathrm{noise}\preceq \E[\mathcal{I}_{Y|\theta}]^{-1}$.
 As explained in Section \ref{sec:Fisher}, any random variable $X$ with smooth enough density satisfies the Cramér-Rao bound $\mathcal{I}_X^{-1} \preceq \Cov(X)$, with equality if and only if $X$ is Gaussian (\emph{Stein's characterization} of Gaussian distributions).
 We deduce that if \eqref{eq:tightness} holds true, then we have $\mathcal{I}_Y^{-1} = \Cov(Y)$ so that $Y$ is necessarily Gaussian. Similarly we have that $Y|\theta$ is almost surely Gaussian\footnote{By Jensen and Cramér-Rao inequalities, we have $\E[\mathcal{I}_{Y|\theta}]^{-1}\overset{\mathcal{J}}{\preceq} \E[\mathcal{I}_{Y|\theta}^{-1}] \overset{\mathcal{CR}}{\preceq} \E[\Cov(Y|\theta)]$, therefore \eqref{eq:tightness} implies $0=\E[\underbrace{\Cov(Y|\theta)-\mathcal{I}_{Y|\theta}^{-1}}_{\succeq0}]$, therefore $\mathcal{I}_{Y|\theta}^{-1}=\Cov(Y|\theta)$ a.s. and then $Y|\theta$ is a.s. Gaussian.}.
 The sharpness of the bounds therefore directly reflects the degree of non-Gaussianity of both $Y$ and $Y|\theta$.
\end{remark}

The following proposition shows that solving \eqref{eq:incremental_optimization} or \eqref{eq:conservative_optimization} in place of \eqref{eq:EIG} yields \emph{quasi-optimal solutions}, in the sense that the resulting EIG is comparable to the optimal EIG up to an additive term which directly relates to some divergence between $ \Sigma_\mathrm{signal}\leftrightarrow \Sigma_Y$ and between $\Sigma_\mathrm{noise} \leftrightarrow \Sigma_{Y|\theta}$.

\begin{proposition}[Quasi-optimality]\label{prop:quasi_opt_cons}
 Under the notations of Theorem \ref{th:Bound_EIG}, we denote by $\alpha_1\geq\hdots\geq\alpha_d $  and $\beta_1\geq\hdots\geq\beta_d $ the eigenvalues of the following generalized eigenvalue problems
 \begin{align*}
  \Sigma_Y u_i &=  \alpha_i \Sigma_\mathrm{signal} u_i \\
  \Sigma_{Y|\theta}  v_i &= \beta_i  \Sigma_\mathrm{noise}  v_i  ,
 \end{align*}
 where $u_i,v_i\in\R^p$ are the corresponding eigenvectors. Because $\Sigma_Y\succeq\Sigma_\mathrm{signal}$ and $\Sigma_{Y|\theta}\succeq\Sigma_\mathrm{noise}$ we have $\alpha_i\geq1$ and $\beta_i\geq1$.
 Then, for any $m\leq d$ we have
 \begin{align}
   \mathrm{EIG}( W_m^{\mathrm{inc}} ) &\geq \max_{W_m\in \mathcal{K}_m\subseteq\, \R^{p\times m}} \mathrm{EIG}(W_m) - \sum_{i=1}^m\frac{\ln(\alpha_i) + \ln(\beta_i)}{2}   \label{eq:quasi_opt_inc}\\
   \mathrm{EIG}( W_m^{\mathrm{cons}} ) &\geq \max_{W_m\in \mathcal{K}_m\subseteq\, \R^{p\times m}} \mathrm{EIG}(W_m) - \sum_{i=1}^{d-m}\frac{\ln(\alpha_i) + \ln(\beta_i)}{2}  ,  \label{eq:quasi_opt_cons}
 \end{align}
 where $W_m^{\mathrm{inc}}$ and $W_m^{\mathrm{cons}}$ are solutions to \eqref{eq:incremental_optimization} and \eqref{eq:conservative_optimization}, respectively.
\end{proposition}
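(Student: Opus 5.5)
The plan is a standard "sandwich" argument. The lower bound evaluated at the surrogate maximizer controls the EIG from below, and the upper bound evaluated at the true maximizer controls the optimum from above. Let $W_m^\star$ be a maximizer of \eqref{eq:EIG} over $\mathcal{K}_m$, and assume full rank so that all determinants are positive. Write
\[
L_\mathrm{inc}(W)=\tfrac12\ln\frac{|W^\top\Sigma_\mathrm{signal}W|}{|W^\top\Sigma_{Y|\theta}W|},
\qquad
U_\mathrm{inc}(W)=\tfrac12\ln\frac{|W^\top\Sigma_{Y}W|}{|W^\top\Sigma_\mathrm{noise}W|}
\]
for the incremental bounds of Corollary~\ref{cor:Bound_EIG_incremental}.

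\textbf{Incremental case.} Since $W_m^{\mathrm{inc}}$ maximizes $L_\mathrm{inc}$, Corollary~\ref{cor:Bound_EIG_incremental} gives
\[
\mathrm{EIG}(W_m^{\mathrm{inc}})\ge L_\mathrm{inc}(W_m^{\mathrm{inc}})\ge L_\mathrm{inc}(W_m^\star)\ge \mathrm{EIG}(W_m^\star)-\bigl(U_\mathrm{inc}(W_m^\star)-L_\mathrm{inc}(W_m^\star)\bigr).
\]
Regrouping the four determinants, the gap $U_\mathrm{inc}-L_\mathrm{inc}$ equals
\[
\tfrac12\ln\frac{|W^\top\Sigma_YW|}{|W^\top\Sigma_\mathrm{signal}W|}+\tfrac12\ln\frac{|W^\top\Sigma_{Y|\theta}W|}{|W^\top\Sigma_\mathrm{noise}W|}.
\]
I then use the determinantal Courant--Fischer (Ky Fan) fact for a definite pencil: for $A,B\succ0$,
\[
\max_{\mathrm{rank}\,W=m}\frac{|W^\top AW|}{|W^\top BW|}=\prod_{i=1}^m\lambda_i(A,B).
\]
To prove it, substitute $W=B^{-1/2}V$ with $V$ having orthonormal columns (this is possible since the ratio is invariant under $W\mapsto WR$ for invertible $R$). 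The ratio becomes $|V^\top B^{-1/2}AB^{-1/2}V|$, and the maximum of that determinant is the product of the top $m$ eigenvalues. Applying this with $(A,B)=(\Sigma_Y,\Sigma_\mathrm{signal})$ and with $(A,B)=(\Sigma_{Y|\theta},\Sigma_\mathrm{noise})$ bounds the gap by $\sum_{i\le m}\tfrac12(\ln\alpha_i+\ln\beta_i)$, which is \eqref{eq:quasi_opt_inc}. The facts $\alpha_i,\beta_i\ge1$ follow from Cramér--Rao, $\Sigma_\mathrm{signal}\preceq\mathcal{I}_Y^{-1}\preceq\Sigma_Y$, together with the Jensen argument in the footnote of Remark~\ref{rmk:GaussianY}.

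\textbf{Conservative case.} The same sandwich with Corollary~\ref{cor:Bound_EIG_conservative} reduces the claim to bounding the difference of \eqref{eq:Bound_EIG_conservative_up} and \eqref{eq:Bound_EIG_conservative} at $W=W_m^\star$. The constant $\mathrm{EIG}(I_p)$ cancels, and the difference regroups as
\[
\tfrac12\Bigl[\ln\tfrac{|\Sigma_Y|}{|W^\top\Sigma_YW|}-\ln\tfrac{|\Sigma_\mathrm{signal}|}{|W^\top\Sigma_\mathrm{signal}W|}\Bigr]
+\tfrac12\Bigl[\ln\tfrac{|\Sigma_{Y|\theta}|}{|W^\top\Sigma_{Y|\theta}W|}-\ln\tfrac{|\Sigma_\mathrm{noise}|}{|W^\top\Sigma_\mathrm{noise}W|}\Bigr].
\]
The key identity, obtained from the Schur complement and block inversion, is the following. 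Let $Z\in\R^{p\times(p-m)}$ have columns spanning $\mathrm{range}(W)^\perp$. Then
\[
\frac{|\Sigma|}{|W^\top\Sigma W|}=\frac{c(W,Z)}{|Z^\top\Sigma^{-1}Z|},
\]
where $c(W,Z)$ does not depend on $\Sigma$. This follows since $(\Sigma[W])$ restricted to $Z$ has inverse $Z^\top\Sigma^{-1}Z$ up to the change of basis $[W,Z]$.

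With this identity each bracket becomes
\[
\ln\frac{|Z^\top\Sigma_\mathrm{signal}^{-1}Z|}{|Z^\top\Sigma_Y^{-1}Z|}
\quad\text{or}\quad
\ln\frac{|Z^\top\Sigma_\mathrm{noise}^{-1}Z|}{|Z^\top\Sigma_{Y|\theta}^{-1}Z|}.
\]
The pencil $(\Sigma_\mathrm{signal}^{-1},\Sigma_Y^{-1})$ has the same eigenvalues $\alpha_i$ as $(\Sigma_Y,\Sigma_\mathrm{signal})$, since $\Sigma_Y u=\alpha\Sigma_\mathrm{signal}u$ is equivalent to $\Sigma_\mathrm{signal}^{-1}w=\alpha\Sigma_Y^{-1}w$ with $w=\Sigma_Y u$. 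The same holds for $(\Sigma_\mathrm{noise}^{-1},\Sigma_{Y|\theta}^{-1})$ and the $\beta_i$. The Ky Fan bound over $(p-m)$-column $Z$ then gives the gap bound $\sum_{i\le p-m}\tfrac12(\ln\alpha_i+\ln\beta_i)$. I expect the upper index in \eqref{eq:quasi_opt_cons} should read $p-m$ (the data dimension), and likewise the eigenvalue count should be $p$; I would flag this in writing the proof.

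\textbf{Main obstacle.} The main difficulty is the Schur-complement identity and checking that the $\Sigma$-independent constant $c(W,Z)$ really cancels between numerator and denominator. Concretely, I would take $Q=[W,Z]$ and apply $|Q^\top\Sigma Q|=|W^\top\Sigma W|\,|(Q^{-1}\Sigma^{-1}Q^{-\top})_{ZZ}|^{-1}$. Choosing $Z\perp W$ makes the $ZZ$ block of $Q^{-1}\Sigma^{-1}Q^{-\top}$ equal to $(Z^\top Z)^{-1}Z^\top\Sigma^{-1}Z(Z^\top Z)^{-1}$, whose prefactors cancel in the ratio of two such terms. The rest is routine monotonicity of the sandwich.
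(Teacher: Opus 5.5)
Your proof is correct. The sandwich argument and the incremental case are the same as in the paper; the paper bounds the gap by its supremum over all $W$ through a small abstract lemma instead of evaluating it at $W_m^\star$, which makes no difference.

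The conservative case is where you take a different route. The paper never passes to the orthogonal complement. It splits the $W$-dependent part of the gap as $\tfrac12\ln\frac{|W^\top\Sigma_{\mathrm{signal}}W|}{|W^\top\Sigma_YW|}+\tfrac12\ln\frac{|W^\top\Sigma_{\mathrm{noise}}W|}{|W^\top\Sigma_{Y|\theta}W|}$. It then applies the same determinantal Ky Fan bound over $m$-column $W$ to the \emph{reversed} pencils $(\Sigma_{\mathrm{signal}},\Sigma_Y)$ and $(\Sigma_{\mathrm{noise}},\Sigma_{Y|\theta})$. Their top $m$ eigenvalues are $\alpha_i^{-1},\beta_i^{-1}$ for $i=p-m+1,\dots,p$, so this part is at most $-\tfrac12\sum_{i=p-m+1}^{p}\ln(\alpha_i\beta_i)$. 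The $W$-independent part of the gap is $\tfrac12\ln\frac{|\Sigma_Y|\,|\Sigma_{Y|\theta}|}{|\Sigma_{\mathrm{signal}}|\,|\Sigma_{\mathrm{noise}}|}=\tfrac12\sum_{i=1}^{p}\ln(\alpha_i\beta_i)$. The two telescope to $\tfrac12\sum_{i=1}^{p-m}\ln(\alpha_i\beta_i)$.

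So the Schur-complement identity you single out as the main obstacle is not needed; the paper's bookkeeping is one line. What your route buys is structure. It shows the conservative gap is exactly an incremental-type gap for the inverse pencils on the $(p-m)$-dimensional complement. This mirrors how Corollary~\ref{cor:Bound_EIG_conservative} is derived from Theorem~\ref{th:Bound_EIG} with $W_{\mathrm{new}}=W_\perp$, and it explains directly why the largest $p-m$ eigenvalues appear.

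Your remark on the indices is also right. The pencils are $p\times p$ and the paper's own telescoping uses all $p$ eigenvalues, so $d$ should read $p$ both in $m\le d$ and in the upper limit $d-m$.
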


\begin{proof}
 See Section \ref{proof:quasi_opt_cons}.
\end{proof}

Note that the suboptimality constant (\emph{i.e.}, the last term in \eqref{eq:quasi_opt_inc}) increases with $m$ for the incremental approach but decreases for the conservative approach.
This highlights the complementary regimes of the two approaches: the incremental approach is better suited to small values of $m$, while the conservative approach is preferable for large values of $m$.

\section{Computing the diagnostic matrices}\label{sec:ComputeDiagnostic}

We discuss different strategies for computing the diagnostic matrices $\Sigma_Y,\Sigma_{Y|\theta},\Sigma_\mathrm{signal}$ and $\Sigma_\mathrm{noise}$ in the goal-oriented setting, depending on the problem settings.
The following proposition gives closed form expressions for the diagnostic matrices in the nonlinear goal-oriented setting \eqref{eq:GOsetting}.

\begin{proposition}\label{prop:ComputeFisher}
 Consider the goal-oriented setting $Y=G(\eta) + \varepsilon$ and $\theta = H(\eta)+\xi$,
 where $G:\R^q\rightarrow\R^p$ and $H:\R^q\rightarrow\R^d$ are any measurable functions and where $\eta\sim\pi_{\eta}$ and $\varepsilon\sim\mathcal{N}(0,\Sigma_\mathrm{obs})$ and $\xi\sim\mathcal{N}(0,\Sigma_\xi)$ are assumed independent. Then
 \begin{align}\label{eq:ComputeFisher}
\begin{array}{rl}
   \Sigma_Y &= \Sigma_\mathrm{obs} + \Cov( G(\eta) )  \\
   \Sigma_{Y|\theta} &= \Sigma_\mathrm{obs} + \E[ \Cov( G(\eta) |\theta) ] \\
   \E[\mathcal{I}_{Y|\theta}] &= \Sigma_\mathrm{obs}^{-1}  - \Sigma_\mathrm{obs}^{-1}\E[ \Cov( G(\eta) |\theta,Y)]  \Sigma_\mathrm{obs}^{-1}\\
   \mathcal{I}_{Y} &= \Sigma_\mathrm{obs}^{-1} - \Sigma_\mathrm{obs}^{-1} \E[ \Cov( G(\eta) |Y)]\Sigma_\mathrm{obs}^{-1} .
\end{array}
\end{align}

\end{proposition}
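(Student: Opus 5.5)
The first two identities follow from the law of total covariance. Since $Y=G(\eta)+\varepsilon$ with $\varepsilon$ independent of $\eta$, we get $\Cov(Y)=\Cov(G(\eta))+\Sigma_\mathrm{obs}$. For the second identity, independence of $\varepsilon$ from $(\eta,\xi)$, and hence from $(\eta,\theta)$, means that conditionally on $\theta$ the vectors $G(\eta)$ and $\varepsilon$ are still independent. Therefore $\Cov(Y|\theta)=\Cov(G(\eta)|\theta)+\Sigma_\mathrm{obs}$, and taking expectations over $\theta$ gives the claim.

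For the Fisher identities, the key step is a score formula obtained by differentiating under the integral. Write
\begin{equation*}
\pi_{Y|\theta}(y|\theta)=\int \phi(y-G(\eta))\,\pi_{\eta|\theta}(\eta|\theta)\,\d\eta ,
\end{equation*}
where $\phi$ is the $\mathcal{N}(0,\Sigma_\mathrm{obs})$ density. This holds because, given $(\eta,\theta)$, $Y$ depends only on $\eta$ through $G(\eta)+\varepsilon$. Since $\nabla_y\phi(y-g)=-\Sigma_\mathrm{obs}^{-1}(y-g)\phi(y-g)$, and $\phi(y-G(\eta))\pi_{\eta|\theta}/\pi_{Y|\theta}=\pi_{\eta|\theta,Y}$ by Bayes, we obtain
\begin{equation*}
\nabla_y\ln\pi_{Y|\theta}(y|\theta) = -\Sigma_\mathrm{obs}^{-1}\bigl(y-\E[G(\eta)|\theta,Y=y]\bigr).
\end{equation*}
In the same way, $\nabla_y\ln\pi_Y(y)=-\Sigma_\mathrm{obs}^{-1}(y-\E[G(\eta)|Y=y])$. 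Interchanging derivative and integral needs justification. Gaussian tails of $\phi$ and its gradient, together with $\E\|Y\|^2<\infty$, should give dominated convergence. I expect this to be the only technical obstacle, and it can be handled locally in $y$ by bounding $\|y-g\|\phi(y-g)$ by a constant.

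Next compute $\E[\mathcal{I}_{Y|\theta}]=\Sigma_\mathrm{obs}^{-1}\,\E[RR^\top]\,\Sigma_\mathrm{obs}^{-1}$ with $R=Y-\E[G(\eta)|\theta,Y]$. Decompose $R=\varepsilon+(G(\eta)-\E[G(\eta)|\theta,Y])=\varepsilon+D$. Then
\begin{equation*}
\E[RR^\top]=\Sigma_\mathrm{obs}+\E[\varepsilon D^\top]+\E[D\varepsilon^\top]+\E[DD^\top].
\end{equation*}
Here $\E[DD^\top]=\E[\Cov(G(\eta)|\theta,Y)]$. For the cross term, use $\varepsilon=Y-G(\eta)$ and $\E[D\mid\theta,Y]=0$. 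This gives $\E[(Y-G(\eta))D^\top]=-\E[G(\eta)D^\top]$. Writing $G(\eta)=D+\E[G(\eta)|\theta,Y]$ and using that $D$ is orthogonal to $\sigma(\theta,Y)$-measurable quantities, this equals $-\E[DD^\top]$. Summing, $\E[RR^\top]=\Sigma_\mathrm{obs}-\E[\Cov(G(\eta)|\theta,Y)]$, which yields the third formula. The fourth follows identically with the conditioning on $(\theta,Y)$ replaced by conditioning on $Y$ alone. The required square-integrability of $G(\eta)$ follows from $\E\|Y\|^2<\infty$, since $G(\eta)=Y-\varepsilon$.
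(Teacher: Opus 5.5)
Your proof is correct. The two covariance identities are handled exactly as in the paper; for the Fisher identities you take a more hands-on route to the same underlying mechanism.

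\textbf{What the paper does.} It never writes the score explicitly and instead invokes Proposition \ref{prop:Fisher}.
For $\E[\mathcal{I}_{Y|\theta}]$ it applies \eqref{eq:FisherMarginal_bis} with $Z=\eta$. It uses that the full score $\nabla_Y\ln\pi_{Y|\eta}(Y|\eta)=-\Sigma_\mathrm{obs}^{-1}\varepsilon$ has covariance $\Sigma_\mathrm{obs}^{-1}$, and that its conditional covariance given $(\theta,Y)$ is $\Sigma_\mathrm{obs}^{-1}\Cov(G(\eta)|\theta,Y)\Sigma_\mathrm{obs}^{-1}$.
For $\mathcal{I}_Y$ it applies \eqref{eq:FisherMarginal} with $X=\theta$ and writes the $(\theta,Y)$-score as $\E[\nabla_Y\ln\pi_{Y|\eta}(Y|\eta)\,|\,\theta,Y]$. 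It then combines a nested conditional law of total variance with the already-established third identity.

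\textbf{What you do.} You derive the Tweedie-type formula $\nabla_y\ln\pi_{Y|\theta}(y|\theta)=-\Sigma_\mathrm{obs}^{-1}(y-\E[G(\eta)|\theta,Y=y])$ directly. You then expand $\E[RR^\top]$ with $R=\varepsilon+D$. Your cross-term cancellation $\E[\varepsilon D^\top]=-\E[DD^\top]$ is precisely the law of total variance for $-\Sigma_\mathrm{obs}^{-1}\varepsilon$ conditioned on $(\theta,Y)$, or on $Y$ for the fourth identity. So the two proofs are equivalent at their core.

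\textbf{What each route buys.} Your version:
\begin{itemize}
\item is self-contained;
\item treats $\mathcal{I}_Y$ and $\E[\mathcal{I}_{Y|\theta}]$ symmetrically, with no detour through $\theta$;
\item shows explicitly that the score is $-\Sigma_\mathrm{obs}^{-1}$ times a denoising residual, which is exactly what Section \ref{sec:Approxiamtion_based_diagnostics} exploits;
\item addresses the interchange of derivative and integral, which the paper leaves implicit.
\end{itemize}
The paper's version is shorter because it reuses Proposition \ref{prop:Fisher}, which it needs anyway for Proposition \ref{prop:BoundFisher}.

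\textbf{A small tightening.} The interchange step does not need $\E\|Y\|^2<\infty$. The map $z\mapsto\|\Sigma_\mathrm{obs}^{-1}z\|\phi(z)$ is bounded on all of $\R^p$, and $\pi_{\eta|\theta}(\cdot|\theta)$ is a probability measure, so domination by a constant works globally in $y$. Square integrability of $G(\eta)$ is needed only for the $L^2$ orthogonality in your second step. It is implicit in the statement anyway, since $\Cov(G(\eta))$ appears there.
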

\begin{proof}
 See Section \ref{proof:ComputeFisher}.
\end{proof}

In all above expressions, the diagnostic matrices involve expectations of conditional covariances of $G(\eta)$ in the form of $\E[\Cov(G(\eta)|Z)]$ where $Z \in \big\{\emptyset, \theta,(\theta,Y),Y \big\}$.
In the next subsections, we show how numerically estimate these quantities.

\begin{remark}[Linear \& Gaussian goal-oriented setting]\label{rmk:GOsettinglinear}
If we assume that $G(\eta)=A\eta$ and $H(\eta)=B\eta$ are linear and $\eta\sim\mathcal{N}(m_\eta,\Sigma_\eta)$, $\varepsilon\sim\mathcal{N}(0,\Sigma_\mathrm{obs})$ and $\xi\sim\mathcal{N}(0,\Sigma_\xi)$ are independent, we have that $(Y,\theta,\eta,\varepsilon)$ remains jointly Gaussian. Thus \eqref{eq:ComputeFisher} simplifies as
 \begin{align}\label{eq:GOsettinglinear_diag}
\begin{array}{rl}
   \Sigma_Y  &= \Sigma_\mathrm{obs} + A\Sigma_\eta A^\top   \\
  \Sigma_{Y|\theta} &= \Sigma_\mathrm{obs} + A (\Sigma_\eta^{-1} + B^\top \Sigma_\xi^{-1} B )^{-1} A^\top \\
  \E[\mathcal{I}_{Y|\theta}]^{-1} &= \Sigma_\mathrm{obs} + A (\Sigma_\eta^{-1} + B^\top \Sigma_\xi^{-1} B )^{-1} A^\top \\
  \mathcal{I}_{Y}^{-1} &=  \Sigma_\mathrm{obs} + A\Sigma_\eta A^\top  .
\end{array}
\end{align}
\end{remark}

\subsection{Sample-based diagnostic matrices}

We start by estimating the diagnostic matrix $\Sigma_Y=\Sigma_\mathrm{obs}+\Cov(G(\eta))$.
For simplicity, we use the notation $(\cdot)^{\otimes 2}$ for the outer product so that $ v^{\otimes 2} := v v^\top \in\R^{p\times p}$ for any vector $v\in\R^p$.
While a standard sample-based estimator of the covariance matrix $\Cov(G(\eta))$ can be used, we propose here to exploit the formula $\Cov(G(\eta))=\frac{1}{2}\E[(G(\eta)-G(\eta'))^{\otimes 2}]$, where $\eta'$ is an independent copy of $\eta$. This leads to the following \emph{unbiased} estimator
\begin{equation}\label{eq:estimator_SigmaY}
 \Sigma_Y^{(N)} = \Sigma_\mathrm{obs} + \frac{1}{2N} \sum_{i=1}^N  \Big( G(\eta^{(i)}) - G(\eta'^{(i)}) \Big)^{\otimes 2}  ,
 \qquad \mathrm{where }
 \begin{cases}
  \eta^{(i)}  \sim \pi_\eta \\
  \eta'^{(i)}  \sim \pi_\eta.
 \end{cases}
\end{equation}

Estimating $\Sigma_{Y|\theta}=\Sigma_\mathrm{obs}+\E[\Cov(G(\eta)|\theta)]$ can be done the same way by redefining $\eta'$ such that $(\eta',\eta,\theta)$ has joint density $\pi_{\eta',\eta,\theta}(\eta',\eta,\theta) := \pi_{\eta|\theta}(\eta'|\theta)  \pi_{\eta,\theta}(\eta,\theta) $. With this definition, $\eta'$ and $\eta$ are \emph{conditionally independent} given $\theta$ so that we can write  $\E[ \Cov(G(\eta)|\theta)] = \E[ \frac{1}{2}\E[( G(\eta) - G(\eta') )^{\otimes 2}|\theta]] = \frac{1}{2} \E[( G(\eta) - G(\eta') )^{\otimes 2}]$.
This leads to the following \emph{unbiased} estimator
\begin{equation}\label{eq:estimator_SigmaYTheta}
 \Sigma_{Y|\theta}^{(N)} = \Sigma_\mathrm{obs} + \frac{1}{2N} \sum_{i=1}^N  \Big( G(\eta^{(i)}) - G(\eta'^{(i)}) \Big)^{\otimes 2}  ,
 \quad \mathrm{where }
 \begin{cases}
  \eta^{(i)}  \sim \pi_\eta \\
  \theta^{(i)} \sim \pi_{\theta|\eta}(\cdot|\eta^{(i)}) \\
  \eta'^{(i)}  \sim \pi_{\eta|\theta}(\cdot|\theta^{(i)}) .
 \end{cases}
\end{equation}
While $\eta^{(i)}  \sim \pi_\eta$ and  $\theta^{(i)} \sim \pi_{\theta|\eta}(\cdot|\eta^{(i)})$ are readily generated via \eqref{eq:GOsetting}, drawing $\eta'^{(i)}  \sim \pi_{\eta|\theta}(\cdot|\theta^{(i)})$ can be more challenging. This is the bottleneck of the estimator \eqref{eq:estimator_SigmaYTheta}.

The problem of sampling from $\pi_{\eta|\theta}(\cdot|\theta^{(i)})$ interprets as a Bayesian inverse problem on $\theta = H(\eta)+\xi$. In principle, one can apply any MCMC algorithm targeting $\pi_{\eta|\theta}(\cdot|\theta^{(i)})$ where each iteration requires the evaluation of the prior $\pi_\eta(\cdot)$ and of the likelihood function $\pi_{\theta|\eta}(\theta^{(i)}|\cdot)\propto\exp(-\frac{1}{2}\|H(\cdot)-\theta^{(i)}\|_{\Sigma_\xi^{-1}}^2)$. For the applications we have in mind, evaluating the function $H$ is inexpensive (compared to evaluating the forward model $G$), therefore we neglect the cost of sampling from $\pi_{\eta|\theta}(\cdot|\theta^{(i)})$.
In some cases, samples from $\pi_{\eta|\theta}(\cdot|\theta^{(i)})$ can be obtained in closed form, as shown by the following remarks.

\begin{remark}[Pick-freeze estimators]
 Another relevant scenario where $\pi_{\eta|\theta}(\cdot|\theta^{(i)})$ is available in closed form is when $\theta = \eta_1$ is defined as one component (or a block of components) of $\eta = (\eta_1,\eta_2)$, and when $\eta_1$ and $\eta_2$ are independent. In that case $\eta|\theta = \eta|\eta_1 $ reduces to $ (\eta_1,\eta_2')$, where $\eta_2'\sim\pi_{\eta_2}$, and yields the following estimator
\begin{equation}\label{eq:estimator_SigmaYTheta_PF}
 \Sigma_{Y|\theta}^{(N)} = \Sigma_\mathrm{obs} + \frac{1}{2N} \sum_{i=1}^N  \Big( G(\eta_1^{(i)},\eta_2^{(i)}) - G(\eta_1^{(i)},\eta_2'^{(i)}) \Big)^{\otimes 2}  ,
 \qquad \mathrm{where }
 \begin{cases}
  \eta_1^{(i)}  \sim \pi_{\eta_1} \\
  \eta_2^{(i)}  \sim \pi_{\eta_2} \\
  \eta_2'^{(i)}  \sim \pi_{\eta_2}.
 \end{cases}
\end{equation}
 This estimator is called the \emph{pick-freeze} estimator and is widely used in Global sensitivity Analysis, see \cite{janon2014asymptotic,gamboa2016statistical}.
\end{remark}

While, in principle, sample-based estimators can also be used to compute the conditional expectations $\E[\Cov(G(\eta)| Y)]$ and $\E[\Cov(G(\eta)| Y,\theta)]$, as in \eqref{eq:estimator_SigmaYTheta}, this approach is not viable in practice. The difficulty lies in the need to generate the samples $\eta'^{(i)}\sim \pi_{\eta| Y}(\cdot| Y^{(i)})$ (or $\pi_{\eta | \theta,Y}(\cdot | \theta^{(i)},Y^{(i)})$) for $i=1,\hdots,N$. This would require running MCMC algorithms $N$ times, whose computational cost is prohibitive since each MCMC iteration will involves evaluating the forward model $G$. For this reason, we propose in the next subsections two alternative strategies.

\subsection{Approximation-based diagnostic matrices}
\label{sec:Approxiamtion_based_diagnostics}

In this section we estimate $\E[\Cov(G(\eta)| Y)]$ and $\E[\Cov(G(\eta)| Y,\theta)]$ by approximating the conditional expectations $\E[G(\eta)|Y]$ and $\E[G(\eta)|Y,\theta]$.
Recall that $\E[G(\eta)|Y]$ is defined as the orthogonal projection of $G(\eta)$ onto the Hilbert space $L^2_{\pi_Y}(\R^p):=\{f:\R^p\rightarrow\R^p: \E[\|f(Y)\|^2]<\infty \}$ of square integrable functions.
Thus, given an approximation space $\mathcal{F}\subset L^2_{\pi_Y}(\R^p)$, the function
\begin{equation}\label{eq:conditionalExpectation_Y}
 f\in \argmin_{f\in \mathcal{F}} \E[\| G(\eta) - f(Y)\|_2^2] ,
\end{equation}
is an approximation of the conditional expectation $y\mapsto\E[G(\eta)|Y=y]$.
By construction, $f$ approximately removes the noise $\varepsilon$, in the sense that $f( G(\eta)+\varepsilon ) \approx G(\eta)$, hence $f$ is commonly referred to as a \emph{denoiser} in machine learning.

Once the function $f$ solution to \eqref{eq:conditionalExpectation_Y} (or to a sampled-based discretization of it) is computed, one can approximate the expected conditional covariance as follow
\begin{align*}
 \E[\Cov(G(\eta)| Y)]
 &= \E[( G(\eta) - \E[G(\eta)|Y] )^{\otimes 2}] \\
 &= \E[( G(\eta) - f(Y))^{\otimes 2}] -\E[( f(Y) - \E[G(\eta)|Y] )^{\otimes 2}] \\
 &\preceq \E[( G(\eta) - f(Y) )^{\otimes 2}] \\
 &\approx \frac{1}{N} \sum_{i=1}^N  ( G(\eta^{(i)}) - f(Y^{(i)}) )^{\otimes 2} ,
\end{align*}
where $(\eta^{(i)},Y^{(i)})$ are independent samples from $\pi_{\eta,Y}$ generated via \eqref{eq:GOsetting}.
It is worth to notice that a similar approach has been recently employed in \cite{hoang2025scalable}, albeit for approximating $\E[\theta|Y_m]$ in order to perform A-optimal experimental design.

In the above estimator, there are three sources of error: (i) the sample error, which decreases as $N\rightarrow\infty$, (ii) the reconstruction error of the conditional expectation by an element in $\mathcal{F} \subsetneq L^2_{\pi_Y}(\R^p)$ and (iii) the discretization error introduced by the numerical solution of \eqref{eq:conditionalExpectation_Y} via \emph{e.g.} discrete least-squares.
Achieving an appropriate balance between these error contributions is critical for the overall efficiency of the method. Such detailed error analysis, however, is beyond the scope of the present work and is therefore left for future investigation.

Estimating $\E[\Cov(G(\eta)| Y,\theta)]$ can be done in the same way by introducing
\begin{equation}\label{eq:conditionalExpectation_Ytheta}
 g\in \argmin_{g\in \mathcal{G}} \E[\| G(\eta) - g(Y,\theta)\|_2^2] ,
\end{equation}
for some $\mathcal{G}\subset L^2_{\pi_{Y,\theta}}(\R^p):=\{g:\R^p\rightarrow\R^p: \E[\|g(Y,\theta)\|^2]<\infty \}$.
Together with the function $f$, this naturally yields the following diagnostic matrices
\begin{align}
 \Sigma_\mathrm{signal}^{(N,\mathcal{F})} &= \left(\Sigma_\mathrm{obs}^{-1} - \Sigma_\mathrm{obs}^{-1} \left(  \frac{1}{N} \sum_{i=1}^N  ( G(\eta^{(i)}) - f(Y^{(i)}) )^{\otimes 2}  \right)\Sigma_\mathrm{obs}^{-1}\right)^{-1} \label{eq:SigmaSignal_Approx} \\
 \Sigma_\mathrm{noise}^{(N,\mathcal{G})} &=  \left(\Sigma_\mathrm{obs}^{-1} - \Sigma_\mathrm{obs}^{-1} \left(  \frac{1}{N} \sum_{i=1}^N  ( G(\eta^{(i)}) - g(Y^{(i)},\theta^{(i)}) )^{\otimes 2}  \right)\Sigma_\mathrm{obs}^{-1}\right)^{-1} .
 \label{eq:SigmaNoise_Approx}
\end{align}
In order to be well defined, the expressions inside the parenthesis above must be invertible.
When considering $\Sigma_\mathrm{signal}^{(N,\mathcal{F})}$ we need to ensure $\frac{1}{N} \sum_{i=1}^N  ( G(\eta^{(i)}) - f(Y^{(i)}) )^{\otimes 2} \prec \Sigma_\mathrm{obs}$ for large enough $N$.
A necessary condition for this is $\E[(G(\eta) - f(Y) )^{\otimes 2}] \prec \Sigma_\mathrm{obs}$.
The next proposition shows that this condition is satisfied whenever $\mathcal{F}$ and $\mathcal{G}$ contain all affine functions.
Finally, standard concentration inequalities imply that $\Sigma_\mathrm{signal}^{(N,\mathcal{F})}$ (and similarly $\Sigma_\mathrm{noise}^{(N,\mathcal{G})}$) is well defined with high probability, provided $N$ is sufficiently large. We leave the detailed analysis of these finite-sample conditions to future work.

\begin{proposition}\label{prop:Sigma_FG}
 Let $\mathcal{F} \subset L^2_{\pi_{Y}}(\R^p)$ and $\mathcal{G}\subset L^2_{\pi_{Y,\theta}}(\R^p)$ be such that $f\in\mathcal{F} \Rightarrow (\alpha f+\ell)\in\mathcal{F}$ and  $g\in\mathcal{G} \Rightarrow (\alpha g+\ell)\in\mathcal{G}$ for any scalar $\alpha\in\R$ and any affine function $\ell$. Let $f$ and $g$ be defined by \eqref{eq:conditionalExpectation_Y} and \eqref{eq:conditionalExpectation_Ytheta}, respectively.
 If $\Cov(G(\eta))\succ0$ and $\Sigma_\mathrm{obs}\succ0$, then $\E[(G(\eta) - f(Y) )^{\otimes 2}] \prec \Sigma_\mathrm{obs}$ and $\E[(G(\eta) - g(Y,\theta) )^{\otimes 2}] \prec \Sigma_\mathrm{obs}$ so that
 \begin{align*}
  \Sigma_\mathrm{signal}^{(\infty,\mathcal{F})} &= \left(\Sigma_\mathrm{obs}^{-1} - \Sigma_\mathrm{obs}^{-1}  \E[(G(\eta) - f(Y) )^{\otimes 2}] \Sigma_\mathrm{obs}^{-1}\right)^{-1}  \\
  \Sigma_\mathrm{noise}^{(\infty,\mathcal{G})} &=\left(\Sigma_\mathrm{obs}^{-1} - \Sigma_\mathrm{obs}^{-1}  \E[(G(\eta) - g(Y,\theta) )^{\otimes 2}] \Sigma_\mathrm{obs}^{-1}\right)^{-1}  ,
 \end{align*}
 are well defined.
\end{proposition}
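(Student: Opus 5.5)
The plan is to compare the optimal $f$ with a well-chosen affine competitor in $\mathcal{F}$, namely the linear (Kalman-type) estimator of $G(\eta)$ from $Y$. Write $\Sigma_G := \Cov(G(\eta))\succ0$ and $\mu := \E[G(\eta)]$. Define the affine map
\[
\ell(y) = \mu + \Sigma_G(\Sigma_G+\Sigma_\mathrm{obs})^{-1}(y-\E[Y]).
\]
This map belongs to $\mathcal{F}$: the hypothesis gives $0\cdot f+\ell\in\mathcal{F}$. Using that $\varepsilon$ is independent of $\eta$, a direct computation gives the error covariance of $\ell$:
\[
\E[(G(\eta)-\ell(Y))^{\otimes2}] = \Sigma_G - \Sigma_G(\Sigma_G+\Sigma_\mathrm{obs})^{-1}\Sigma_G = (\Sigma_G^{-1}+\Sigma_\mathrm{obs}^{-1})^{-1}.
\]
The last equality is the Woodbury identity. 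Since $\Sigma_G^{-1}\succ0$, we have $\Sigma_G^{-1}+\Sigma_\mathrm{obs}^{-1}\succ\Sigma_\mathrm{obs}^{-1}$, and therefore $(\Sigma_G^{-1}+\Sigma_\mathrm{obs}^{-1})^{-1}\prec\Sigma_\mathrm{obs}$. This strict inequality is where $\Cov(G(\eta))\succ0$ is used.

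The main step is to transfer this matrix bound from $\ell$ to the minimizer $f$. Minimality of $f$ only controls the trace $\E\|G(\eta)-f(Y)\|^2$, not the full matrix, so closure of $\mathcal{F}$ under affine combinations must be exploited further.

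First, $f$ is optimal against all perturbations $f+t\,\ell'$ with $\ell'$ affine (take $\alpha=1$). This gives the normal equations $\E[(G(\eta)-f(Y))\,\ell'(Y)^\top]$ having zero trace for every affine $\ell'$, and since $\ell'(y)=Ay+b$ is arbitrary, we get
\[
\E[G(\eta)-f(Y)]=0 \quad\text{and}\quad \E[(G(\eta)-f(Y))(Y-\E[Y])^\top]=0 .
\]

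Second, consider the family $h_{\alpha}=\alpha f+\ell''$, which also lies in $\mathcal{F}$. I would aim for the decomposition
\[
\E[(G-\ell)^{\otimes2}] = \E[(G-f)^{\otimes2}] + \E[(f-\ell)^{\otimes2}] + \text{cross terms},
\]
and show the cross terms vanish. The cross term is $\E[(G-f)(f(Y)-\ell(Y))^\top]$. Its part involving $\ell$ vanishes by the normal equations. Its part involving $f$ requires $\E[(G-f)f(Y)^\top]=0$ as a matrix, whereas varying $\alpha$ (and adding affine maps) only yields the scalar identity $\E[(G-f)^\top f]=0$.

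I expect this to be the obstacle. To get the full matrix orthogonality I would use that $\mathcal{F}$ is closed under $\ell\circ$-type modifications. Specifically, I would interpret "$\alpha$ scalar" together with componentwise affine maps as allowing $Af$ for matrices $A$; if that reading is not permitted, I would fall back on the case $\mathcal{F}$ a linear space. Under either reading, optimality against $f+tAf$ gives $\operatorname{trace}(A\,\E[f(G-f)^\top])=0$ for all $A$, hence the matrix identity.

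With orthogonality established, $\E[(G-f)^{\otimes2}]\preceq\E[(G-\ell)^{\otimes2}]\prec\Sigma_\mathrm{obs}$. Consequently $\Sigma_\mathrm{obs}^{-1}-\Sigma_\mathrm{obs}^{-1}\E[(G-f)^{\otimes2}]\Sigma_\mathrm{obs}^{-1} = \Sigma_\mathrm{obs}^{-1/2}\bigl(I-\Sigma_\mathrm{obs}^{-1/2}\E[(G-f)^{\otimes2}]\Sigma_\mathrm{obs}^{-1/2}\bigr)\Sigma_\mathrm{obs}^{-1/2}\succ0$, so it is invertible and $\Sigma_\mathrm{signal}^{(\infty,\mathcal{F})}$ is well defined.

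The case of $g$ is identical. The affine competitor depending only on $Y$, namely $(y,\theta)\mapsto\ell(y)$, belongs to $\mathcal{G}$ as an affine function of $(y,\theta)$. Hence the same bound $\E[(G-g)^{\otimes2}]\preceq(\Sigma_G^{-1}+\Sigma_\mathrm{obs}^{-1})^{-1}\prec\Sigma_\mathrm{obs}$ holds, and $\Sigma_\mathrm{noise}^{(\infty,\mathcal{G})}$ is well defined.
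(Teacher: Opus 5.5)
Your argument follows the paper's route. You compare $f$ with the best affine (Kalman) predictor, whose error covariance $(\Cov(G(\eta))^{-1}+\Sigma_\mathrm{obs}^{-1})^{-1}$ is $\prec\Sigma_\mathrm{obs}$, and transfer this bound to $f$ through an orthogonal decomposition. Your Kalman computation, the matrix normal equations against affine maps, and the final invertibility step are all correct.

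The obstacle you flag is real, and the paper's proof does not get past it either. From perturbations $(1+\alpha)f+\beta\ell$ it obtains only scalar identities. These suffice for the affine part, as you show, but the paper then replaces $\E[(G-\ell)(\ell-f)^\top]_\mathrm{sym}$ by $\E[(f-\ell)(\ell-f)^\top]_\mathrm{sym}$. That step needs the symmetric part of the matrix $\E[(G(\eta)-f(Y))f(Y)^\top]$ to vanish, and only its trace does. Under the hypothesis as literally stated this cannot be repaired, because the claimed inequality is false. Here is a counterexample:
\begin{itemize}
\item Take $p=2$, $G(\eta)=\eta$ and $\Sigma_\mathrm{obs}=I_2$, with $\eta_1,\eta_2$ independent, $\eta_1=\pm1$ with probability $1/2$, and $\eta_2\sim\mathcal{N}(0,v)$.
\item Put $h(y_1)=\tanh(y_1)-y_1/2=\E[\eta_1|Y_1=y_1]-y_1/2$ and $c^2=\E[h(Y_1)^2]>0$.
\item Let $\mathcal{F}=\{y\mapsto\alpha\,(h(y_1),h(y_1))+\ell(y):\alpha\in\R,\ \ell\ \text{affine}\}$. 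It satisfies the hypothesis and is even a linear space.
\item Since $h(Y_1)$ is orthogonal to all affine functions of $Y$, the minimizer is $f=\alpha^\star(h,h)+L$, where $L(Y)=\mathrm{diag}(1/2,v/(v+1))\,Y$ is the best affine predictor.
\item Writing $r=G(\eta)-L(Y)$, one has $\E[r_1h]=c^2$ and $\E[r_2h]=0$. The objective is therefore $\mathrm{const}-2\alpha c^2+2\alpha^2c^2$, so $\alpha^\star=1/2$.
\item Then $\E[(G_2(\eta)-f_2(Y))^2]=v/(v+1)+c^2/4$, which exceeds $1=(\Sigma_\mathrm{obs})_{22}$ whenever $v>4/c^2-1$.
\end{itemize}
The trace-optimal scalar $\alpha$ buys accuracy on the first component at the expense of the second.

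Consequently your fallback to ``$\mathcal{F}$ a linear space'' does not close the gap. A linear space of $\R^p$-valued maps need not contain $(I+tA)f$, and the example above is such a space. Your first reading is exactly the needed correction, but it strengthens the hypothesis rather than interpreting it. Suppose $\mathcal{F}$ and $\mathcal{G}$ are closed under $f\mapsto Af+\ell$ for every $A\in\R^{p\times p}$ and every affine $\ell$. Examples are $\mathcal{F}=V^p$ for a linear space $V$ of scalar functions containing the affine ones, or networks with a free final linear layer and an affine skip connection. Then first-order optimality along $(I+tA)f$ gives $\E[(G(\eta)-f(Y))f(Y)^\top]=0$, and your proof is complete for both $f$ and $g$. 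You should state it that way, as a corrected hypothesis, rather than as a reading of the one given.
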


\begin{proof}
 See Section \ref{proof:Sigma_FG}.
\end{proof}

The strength of these estimators is that they are \emph{consistent} in the sense that $\Sigma_\mathrm{signal}^{(N,\mathcal{F})}\rightarrow (\mathcal{I}_Y)^{-1}$ and $\Sigma_\mathrm{noise}^{(N,\mathcal{G})}\rightarrow (\E[\mathcal{I}_{Y|\theta}])^{-1}$ almost surely when $N\rightarrow\infty$ and when $f(Y)\rightarrow\E[G(\eta)|Y]$, $g(Y,\theta)\rightarrow\E[G(\eta)|Y,\theta]$ in $L^2$. However, they are such that
\begin{align*}
 \left(\Sigma_\mathrm{signal}^{(N,\mathcal{F})}\right)^{-1} \approx \left(\Sigma_\mathrm{signal}^{(\infty,\mathcal{F})}\right)^{-1} \preceq \mathcal{I}_Y
 \quad\mathrm{and}\quad
 \left(\Sigma_\mathrm{noise}^{(N,\mathcal{G})}\right)^{-1} \approx \left(\Sigma_\mathrm{noise}^{(\infty,\mathcal{G})}\right)^{-1} \preceq \E[\mathcal{I}_{Y|\theta}] ,
\end{align*}
where the inequalities above are in the opposite direction compared to \eqref{eq:SigmaNoise} and \eqref{eq:SigmaSignal}.
As a result, they cannot be safely used in Theorem \ref{th:Bound_EIG}, unless $\mathcal{F}$ and $\mathcal{G}$ are sufficiently rich to accurately recover the conditional expectations, which can be difficult to guarantee in practice. In our numerical experiments, we did not encountered such difficulties. In the next section, we introduce alternative diagnostic matrices which provide bounds on the Fisher information in the required direction.

\subsection{Gradient-based diagnostic matrices}

We have the following result.

\begin{proposition}[Gradient-based diagnostic matrices]\label{prop:BoundFisher}
 Consider the setting $Y=G(\eta)+\varepsilon$ and $\theta = H(\eta)+\xi$,
 where $G:\R^q\rightarrow\R^p$ and $H:\R^q\rightarrow\R^d$ are continuously differentiable functions and where $\eta\sim\pi_{\eta}$ and $\varepsilon\sim\mathcal{N}(0,\Sigma_\mathrm{obs})$ and $\xi\sim\mathcal{N}(0,\Sigma_\xi)$ are independent.
 Let
  \begin{align}
 \mathcal{L}(G) &=  \E[ \Jac  G(\eta) ]  \label{eq:Lu}\\
 \mathcal{H}(G)
 &=  \E\left[ \Big( \Jac  G(\eta) -\mathcal{L}(G) \Big)^\top \Sigma_\mathrm{obs}^{-1} \Big( \Jac  G(\eta) -\mathcal{L}(G)\Big) \right]  \label{eq:Hu} \\
 \mathcal{J}(H) &= \E[  \Jac  H(\eta)^\top \Sigma_\xi^{-1}\Jac  H(\eta) ] \label{eq:Jh} \\
 \mathcal{I}_\eta &= \Cov( \nabla_\eta \ln\pi_\eta(\eta)) ,
\end{align}
where $\Jac G(\eta)\in\R^{p\times q}$ and $\Jac H(\eta)\in\R^{p\times d}$ are the Jacobian matrices of $G$ and $H$ evaluated at $\eta$.
Then the diagnostic matrices
\begin{align}
 \Sigma_\mathrm{signal} &= \Sigma_\mathrm{obs}   + \mathcal{L}(G)   \Big(  \mathcal{H}(G) + \mathcal{I}_\eta  \Big)^{-1} \mathcal{L}(G)^\top \label{eq:SigmaSignal_boundFisher} \\
 \Sigma_\mathrm{noise} &= \Sigma_\mathrm{obs}   + \mathcal{L}(G)   \Big(  \mathcal{H}(G) + \mathcal{I}_\eta + \mathcal{J}(H)  \Big)^{-1} \mathcal{L}(G)^\top , \label{eq:SigmaNoise_boundFisher}
\end{align}
 satisfy $\Sigma_\mathrm{signal}^{-1}\succeq \mathcal{I}_{Y}$ and $\Sigma_\mathrm{noise}^{-1}\succeq \E[\mathcal{I}_{Y|\theta}]$.

\end{proposition}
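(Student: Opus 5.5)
The plan is to start from the closed-form expressions of Proposition~\ref{prop:ComputeFisher}:
$$\mathcal{I}_Y=\Sigma_\mathrm{obs}^{-1}-\Sigma_\mathrm{obs}^{-1}\E[\Cov(G(\eta)|Y)]\Sigma_\mathrm{obs}^{-1},\qquad \E[\mathcal{I}_{Y|\theta}]=\Sigma_\mathrm{obs}^{-1}-\Sigma_\mathrm{obs}^{-1}\E[\Cov(G(\eta)|Y,\theta)]\Sigma_\mathrm{obs}^{-1}.$$
Upper bounds on these Fisher matrices then reduce to \emph{lower} bounds on the expected posterior covariances of $G(\eta)$. I will obtain those lower bounds from a Cramér--Rao inequality applied to the posterior of $\eta$. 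The last step is to recognize the Woodbury identity in the result.

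\textbf{Step 1 (conditional Cramér--Rao for a function).} For a random vector $X$ with smooth density $\rho$ and score $s=\nabla\ln\rho(X)$, integration by parts gives $\E[(f(X)-\E f(X))\,s^\top]=-\E[\Jac f(X)]$. Cauchy--Schwarz (a Schur complement of the joint covariance of $(f(X),s)$) then yields
$$\Cov(f(X))\succeq \E[\Jac f(X)]\,\E[ss^\top]^{-1}\,\E[\Jac f(X)]^\top.$$
I apply this to $X\sim\pi_{\eta|Y}(\cdot|Y)$ with $f=G$. The posterior score is
$$s=\nabla_\eta\ln\pi_\eta(\eta)+\Jac G(\eta)^\top\Sigma_\mathrm{obs}^{-1}(Y-G(\eta))=\nabla_\eta\ln\pi_\eta(\eta)+\Jac G(\eta)^\top\Sigma_\mathrm{obs}^{-1}\varepsilon.$$
This gives $\Cov(G(\eta)|Y)\succeq A_Y B_Y^{-1}A_Y^\top$, where $A_Y=\E[\Jac G(\eta)|Y]$ and $B_Y=\E[ss^\top|Y]$.

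\textbf{Step 2 (averaging over $Y$).} The map $(A,B)\mapsto AB^{-1}A^\top$ is jointly matrix-convex on $B\succ0$: it is the minimal $C$ with $\begin{pmatrix}C&A\\A^\top&B\end{pmatrix}\succeq0$, and that constraint set is convex. Jensen's inequality therefore gives
$$\E[\Cov(G(\eta)|Y)]\succeq \E[A_Y]\,\E[B_Y]^{-1}\E[A_Y]^\top .$$
Here $\E[A_Y]=\mathcal{L}(G)$. For $\E[B_Y]=\E[ss^\top]$ over the joint law, the cross terms vanish because $\varepsilon$ is centered and independent of $\eta$. Hence
$$\E[ss^\top]=\mathcal{I}_\eta+\E[\Jac G^\top\Sigma_\mathrm{obs}^{-1}\Jac G]=\mathcal{I}_\eta+\mathcal{H}(G)+\mathcal{L}(G)^\top\Sigma_\mathrm{obs}^{-1}\mathcal{L}(G).$$
I use that $\mathcal{I}_\eta=\Cov(\nabla\ln\pi_\eta)=\E[(\nabla\ln\pi_\eta)^{\otimes2}]$, since the score has mean zero. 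I also use the bias--variance split of $\E[\Jac G^\top\Sigma_\mathrm{obs}^{-1}\Jac G]$ around $\mathcal{L}(G)$.

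\textbf{Step 3 (Woodbury).} Write $M=\mathcal{H}(G)+\mathcal{I}_\eta$ and $L=\mathcal{L}(G)$. Substituting Step 2 into the expression for $\mathcal{I}_Y$ gives
$$\mathcal{I}_Y\preceq \Sigma_\mathrm{obs}^{-1}-\Sigma_\mathrm{obs}^{-1}L\big(M+L^\top\Sigma_\mathrm{obs}^{-1}L\big)^{-1}L^\top\Sigma_\mathrm{obs}^{-1}.$$
The Woodbury formula identifies the right-hand side as $(\Sigma_\mathrm{obs}+LM^{-1}L^\top)^{-1}=\Sigma_\mathrm{signal}^{-1}$.

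For $\Sigma_\mathrm{noise}$, I repeat Steps 1--3 with the posterior $\pi_{\eta|Y,\theta}$. Its score contains the additional term $\Jac H(\eta)^\top\Sigma_\xi^{-1}\xi$. Since $\xi$ is centered and independent of $(\eta,\varepsilon)$, its cross terms again vanish, and it adds exactly $\mathcal{J}(H)$ to $\E[ss^\top]$. This gives $\E[\mathcal{I}_{Y|\theta}]\preceq\Sigma_\mathrm{noise}^{-1}$.

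\textbf{Main obstacle.} I expect the hardest part to be justifying Step 1 rigorously, in particular the vanishing of the boundary terms in the integration by parts under the posterior. This needs decay and integrability of $G$, $\Jac G$ and $\pi_\eta$, so I would state mild regularity assumptions. Invertibility of $\E[ss^\top]$ is harmless, since $\mathcal{I}_\eta\succ0$ can be assumed or handled by a limiting argument.
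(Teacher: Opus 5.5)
Your proof is correct and yields exactly the paper's matrices, but it is organized differently.

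\textbf{The paper's route.} It does not go through Proposition~\ref{prop:ComputeFisher}. It applies the Schur-complement bound \eqref{eq:FisherMarginal_bound} (resp.\ \eqref{eq:FisherMarginal_bound_bis}) directly to the joint Fisher information of $(Y,\eta)$ (resp.\ $(Y,\eta,\theta)$). It then computes the blocks $\E[\mathcal{I}_{Y|\eta}]=\Sigma_\mathrm{obs}^{-1}$, $\mathcal{I}_{Y,\eta}=-\Sigma_\mathrm{obs}^{-1}\mathcal{L}(G)$ and $\E[\mathcal{I}_{\eta|Y}]=\E[\Jac G(\eta)^\top\Sigma_\mathrm{obs}^{-1}\Jac G(\eta)]+\mathcal{I}_\eta$, and concludes with Woodbury exactly as in your Step~3.

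\textbf{Why the two are the same inequality.} Since $\nabla_Y\ln\pi_{Y,\eta}=\Sigma_\mathrm{obs}^{-1}(G(\eta)-Y)$, the term $\E[\Cov(\nabla_Y\ln\pi_{Y,\eta}|Y)]$ subtracted in \eqref{eq:FisherMarginal} equals $\Sigma_\mathrm{obs}^{-1}\E[\Cov(G(\eta)|Y)]\Sigma_\mathrm{obs}^{-1}$. The bound \eqref{eq:FisherMarginal_bound} lower-bounds this term by projecting onto the $\eta$-score, which is precisely your Bayesian Cram\'er--Rao step.

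\textbf{Where they differ.}
\begin{itemize}
\item \emph{Averaging.} You bound per $Y$, then average using joint convexity of $(A,B)\mapsto AB^{-1}A^\top$. The paper works on the joint law from the start, so it needs no Jensen step. It also only needs the averaged matrix $\E[\mathcal{I}_{\eta|Y}]$ to be invertible, not every $B_Y$.
\item \emph{Cross term.} You integrate $G$ by parts against $\pi_{\eta|Y}$. The paper instead reads the cross block off the Gaussian moments of $\varepsilon$.
\end{itemize}

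\textbf{The obstacle you flag goes away.} With $s$ the posterior score, $\E[s\,|\,Y]=0$ gives
\[
\E[G(\eta)s^\top]=\E[(Y-\varepsilon)s^\top]=-\E[\varepsilon s^\top]=-\mathcal{L}(G).
\]
So no growth or decay condition on $G$ relative to the posterior is needed, only that the conditional score is centered.

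\textbf{What your route buys.} It gives the sharper pointwise-in-$Y$ bound $\Cov(G(\eta)|Y)\succeq A_YB_Y^{-1}A_Y^\top$ before averaging. It also makes the link with the van Trees (Bayesian Cram\'er--Rao) inequality explicit.
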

\begin{proof}
 The proof consists essentially in applying Proposition \ref{prop:Fisher}, see details in Section \ref{proof:BoundFisher}.
\end{proof}

Proposition \ref{prop:BoundFisher} introduces two diagnostic matrices $\Sigma_\mathrm{signal}$ and $\Sigma_\mathrm{noise}$ which satisfy the conditions \eqref{eq:SigmaNoise} and \eqref{eq:SigmaSignal} of Theorem \ref{th:Bound_EIG}. These matrices can be readily approximated using Monte Carlo sampling and evaluations of the Jacobians of $H$ and $G$, see Section \ref{sec:certification}.
Note that \eqref{eq:SigmaSignal_boundFisher} and \eqref{eq:SigmaNoise_boundFisher} resemble the expressions obtained in the linear setting $G(\eta)=A\eta$ and $H(\eta) = B\eta$ of Remark \ref{rmk:GOsettinglinear} for which
$\E[\mathcal{I}_{Y|\theta}]^{-1} = \Sigma_\mathrm{obs} + A (\Sigma_\eta^{-1} + B^\top \Sigma_\xi^{-1} B )^{-1} A^\top $ and  $\mathcal{I}_{Y}^{-1} =  \Sigma_\mathrm{obs} + A\Sigma_\eta A^\top $, remember \eqref{eq:GOsettinglinear_diag}.

\begin{remark}[Best affine approximation to the forward model]
 Consider the problem of finding the matrix $A\in\R^{q\times p}$ and the vector $b\in\R^p$ which minimize the mean square error $\E\left[\| G(\eta)- (A\eta+b) \|^2\right]$. It is well known that the solution is $b^\star=\E[G(\eta)]-Am$ and $A^\star=\E[G(\eta) (\eta-m)^\top]\Sigma^{-1}$, where $m$ and $\Sigma$ are the mean and the covariance of $\eta$. If we further assume that $\eta=\mathcal{N}(m,\Sigma)$ is Gaussian, one integration by part yields
 $$
  A^\star
  =\int G(\eta) (\underbrace{\Sigma^{-1}(\eta-m)}_{-\nabla\ln\pi_\eta(\eta)})^\top \pi_\eta(\eta) \d\eta = \int \Jac G(\eta) \pi_\eta(\eta) \d\eta = \mathcal{L}(G).
 $$
 This means that $\mathcal{L}(G)$ is the slope of the best affine approximation of $G$ in $L^2_{\pi_\eta}$.
\end{remark}

It is worth emphasizing that the matrix $\mathcal{H}(G)$ quantifies the mean-square deviation of $\Jac G(\eta)$ from its mean $\mathcal{L}(G)$ (somehow the covariance of $\Jac G(\eta)^\top$). In particular, $\mathcal{H}(G)$ provides a measure of the non-affinity of $G$: for continuously differentiable $G$, $\mathcal{H}(G)=0$ implies that $G$ is affine. Note that the larger $\mathcal{H}(G)$ is, the closer $\Sigma_\mathrm{signal}$ and $\Sigma_\mathrm{noise}$ are to $\Sigma_\mathrm{obs}$.

\begin{corollary}
 Assume the data generating process $Y=G(\theta,\eta)+\varepsilon$ where $G:\R^d\times\R^q\rightarrow\R^p$ is continuously differentiable and $\theta\sim\pi_\theta$, $\eta\sim\pi_\eta$ and $\varepsilon\sim\mathcal{N}(0,\Sigma_\mathrm{obs})$ are independent such that $\mathcal{I}_\eta = \Cov( \nabla_\eta \ln\pi_\eta(\eta))$ and $\mathcal{I}_{\theta,\eta} = \Cov( \nabla_{\theta,\eta} \ln\pi_{\theta,\eta}(\theta,\eta))$ are well defined. Let
 $\mathcal{L}_\eta(G) =  \E[ \Jac_\eta  G(\theta,\eta) ]$, $\mathcal{L}_{\theta,\eta}(G) =  \E[ \Jac_{\theta,\eta}  G(\theta,\eta) ]$ and
 \begin{align*}
 \mathcal{H}_\eta(G) &=  \E\left[ \Big( \Jac_\eta  G(\theta,\eta) -\mathcal{L}_\eta(G) \Big)^\top \Sigma_\mathrm{obs}^{-1} \Big( \Jac_\eta  G(\theta,\eta) -\mathcal{L}_\eta(G)\Big) \right]  \\
 \mathcal{H}_{\theta,\eta}(G) &=  \E\left[ \Big( \Jac_{\theta,\eta}  G(\theta,\eta) -\mathcal{L}_{\theta,\eta}(G) \Big)^\top \Sigma_\mathrm{obs}^{-1} \Big( \Jac_{\theta,\eta}  G(\theta,\eta) -\mathcal{L}_{\theta,\eta}(G)\Big) \right] ,
\end{align*}
Then
\begin{align}
 \Sigma_\mathrm{signal} &= \Sigma_\mathrm{obs}   + \mathcal{L}_{\theta,\eta}(G)   \Big(  \mathcal{H}_{\theta,\eta}(G) + \mathcal{I}_{\theta,\eta} \Big)^{-1} \mathcal{L}_{\theta,\eta}(G)^\top \label{eq:SigmaSignal_boundFisher_Auxiliary} \\
 \Sigma_\mathrm{noise} &= \Sigma_\mathrm{obs}   + \mathcal{L}_{\eta}(G)   \Big(  \mathcal{H}_{\eta}(G) + \mathcal{I}_{\eta} \Big)^{-1} \mathcal{L}_{\eta}(G)^\top , \label{eq:SigmaNoise_boundFisher_Auxiliary}
\end{align}
satisfy $\Sigma_\mathrm{signal}^{-1}\succeq \mathcal{I}_{Y}$ and $\Sigma_\mathrm{noise}^{-1}\succeq \E[\mathcal{I}_{Y|\theta}]$.
\end{corollary}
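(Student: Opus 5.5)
The corollary is a direct specialization of Proposition \ref{prop:BoundFisher} to the auxiliary-parameter setting. The plan is to recast $Y=G(\theta,\eta)+\varepsilon$ in the goal-oriented form \eqref{eq:GOsetting} twice: once for the signal bound and once for the noise bound.

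\textbf{Signal bound.} The matrix $\mathcal{I}_Y$ depends only on the law of $Y$, not on $\theta$. Introduce the latent variable $\tilde\eta=(\theta,\eta)\in\R^{d+q}$, with density $\pi_{\theta,\eta}=\pi_\theta\pi_\eta$, and the forward map $\tilde G(\tilde\eta)=G(\theta,\eta)$. Then $Y=\tilde G(\tilde\eta)+\varepsilon$ has exactly the form required by Proposition \ref{prop:BoundFisher}. Any admissible $H$ and $\xi$ will do, for instance $H(\tilde\eta)=\theta$ with small noise, since $\Sigma_\mathrm{signal}$ in \eqref{eq:SigmaSignal_boundFisher} does not involve $H$. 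With these identifications, $\Jac\tilde G=\Jac_{\theta,\eta}G$, so $\mathcal{L}(\tilde G)=\mathcal{L}_{\theta,\eta}(G)$, $\mathcal{H}(\tilde G)=\mathcal{H}_{\theta,\eta}(G)$, and $\mathcal{I}_{\tilde\eta}=\mathcal{I}_{\theta,\eta}$. Formula \eqref{eq:SigmaSignal_boundFisher} then gives \eqref{eq:SigmaSignal_boundFisher_Auxiliary} and the inequality $\Sigma_\mathrm{signal}^{-1}\succeq\mathcal{I}_Y$.

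\textbf{Noise bound.} Here I would condition on $\theta$ rather than use the $\mathcal{J}(H)$ term. The reason is that with $\xi=0$ the matrix $\mathcal{J}(H)$ is infinite, so \eqref{eq:SigmaNoise_boundFisher} cannot be applied literally.

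1. Fix a value of $\theta$. By independence, $Y|\theta=G(\theta,\eta)+\varepsilon$ with $\eta\sim\pi_\eta$. This is the standard setting with forward map $G_\theta:=G(\theta,\cdot)$.
2. Apply the signal part of Proposition \ref{prop:BoundFisher} to this conditional model, with no $\theta$ at all. Equivalently, use the underlying Fisher bound of Proposition \ref{prop:Fisher}, written in its variational form:
\[
\mathcal{I}_{Y|\theta}\preceq \Sigma_\mathrm{obs}^{-1}-\Sigma_\mathrm{obs}^{-1}\E[\Cov(G(\theta,\eta)\,|\,\theta,Y)]\Sigma_\mathrm{obs}^{-1}.
\]
3. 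Average over $\theta$. The quantity $\E[\mathcal{I}_{Y|\theta}]$ is the $\theta$-average of the conditional Fisher informations.

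\textbf{Main obstacle.} A naive per-$\theta$ bound involves the $\theta$-dependent quantities $\E[\Jac_\eta G(\theta,\eta)|\theta]$ and the corresponding conditional $\mathcal{H}$. Inverting these pointwise and then averaging does not directly give the formula with the global averages $\mathcal{L}_\eta(G)$ and $\mathcal{H}_\eta(G)$.

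To handle this, I would go back to the proof of Proposition \ref{prop:Fisher}. The bound there comes from a Cauchy--Schwarz or variational step in which a test matrix is optimised. If that step is run with a single, $\theta$-independent test matrix $L$ and the resulting quadratic expressions are averaged jointly over $(\theta,\eta)$, the relevant moments become joint moments. Under a constant test matrix, the conditional first moment and the conditional deviation combine, via the law of total variance, into exactly
\[
\E\big[(\Jac_\eta G-\mathcal{L}_\eta)^\top\Sigma_\mathrm{obs}^{-1}(\Jac_\eta G-\mathcal{L}_\eta)\big]=\mathcal{H}_\eta(G).
\]
Optimising over $L$ afterwards yields \eqref{eq:SigmaNoise_boundFisher_Auxiliary}, giving $\Sigma_\mathrm{noise}^{-1}\succeq\E[\mathcal{I}_{Y|\theta}]$. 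Since $\theta$ and $\eta$ are independent, $\mathcal{I}_\eta$ is unchanged by conditioning on $\theta$.

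This averaging step is the delicate point. If the variational structure of Proposition \ref{prop:Fisher} is not available in that form, I would fall back on matrix convexity of $(\mathcal{L},M)\mapsto\mathcal{L}^\top M^{-1}\mathcal{L}$ together with Jensen's inequality to pass from the conditional quantities to the averaged ones.
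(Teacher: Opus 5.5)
Your proof is correct. The signal bound is the paper's argument: take the latent variable $(\theta,\eta)$ with prior $\pi_{\theta,\eta}$ in Proposition~\ref{prop:BoundFisher}.

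For the noise bound you take a different route. The paper keeps $H(\theta,\eta)=\theta$ with $\xi\sim\mathcal{N}(0,\sigma_\xi^2 I_d)$, so that $\mathcal{J}(H)=\sigma_\xi^{-2}\,\mathrm{diag}(I_d,0)$, and lets $\sigma_\xi\to0$ in \eqref{eq:SigmaNoise_boundFisher}. The penalty suppresses the $\theta$-block, so $(\mathcal{H}_{\theta,\eta}(G)+\mathcal{I}_{\theta,\eta}+\mathcal{J}(H))^{-1}$ tends to the inverse of the $\eta\eta$-block $\mathcal{H}_\eta(G)+\mathcal{I}_\eta$ padded with zeros, and \eqref{eq:SigmaNoise_boundFisher_Auxiliary} appears in the limit.

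You instead stay at $\xi=0$ and use one constant test matrix with joint averaging over $(\theta,\eta,\varepsilon)$. This is precisely \eqref{eq:FisherMarginal_bound_bis} with $(X,Y,Z)=(\theta,Y,\eta)$. Its ingredients are $\E[\mathcal{I}_{Y|\theta,\eta}]=\Sigma_\mathrm{obs}^{-1}$, $\E[\mathcal{I}_{Y,\eta|\theta}]=-\Sigma_\mathrm{obs}^{-1}\mathcal{L}_\eta(G)$ and $\E[\mathcal{I}_{\eta|\theta,Y}]=\mathcal{H}_\eta(G)+\mathcal{L}_\eta(G)^\top\Sigma_\mathrm{obs}^{-1}\mathcal{L}_\eta(G)+\mathcal{I}_\eta$, followed by the same Woodbury step as in the proof of Proposition~\ref{prop:BoundFisher}.

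So the obstacle you flag is already handled by citing \eqref{eq:FisherMarginal_bound_bis}, and no per-$\theta$ bound is needed. Your Jensen fallback is also valid, but only if you apply joint convexity of $(L,M)\mapsto LM^{-1}L^\top$ to the Schur-complement form with $M_\theta=\E[\Jac_\eta G^\top\Sigma_\mathrm{obs}^{-1}\Jac_\eta G\,|\,\theta]+\mathcal{I}_\eta$. Applied to the Woodbury form, or with the conditional $\mathcal{H}$, it gives the wrong direction.

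The paper's route is a one-liner that reuses Proposition~\ref{prop:BoundFisher} verbatim. However, it tacitly needs $\E[\mathcal{I}_{Y|\theta+\xi}]\to\E[\mathcal{I}_{Y|\theta}]$ as $\sigma_\xi\to0$. Data processing only gives $\E[\mathcal{I}_{Y|\theta+\xi}]\preceq\E[\mathcal{I}_{Y|\theta}]$, which is the unhelpful direction, so passing to the limit needs a short extra argument that the paper leaves implicit. Your direct route needs no limit and is complete given Proposition~\ref{prop:Fisher}.

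A minor point: once conditioned on $\theta$, the display in your step 2 is the exact identity of Proposition~\ref{prop:ComputeFisher} rather than a variational bound. It plays no role in your final argument.
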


\begin{proof}
 Use $H(\eta_1,\eta_2) = \eta_1$ and $\xi=\mathcal{N}(0,\sigma_\xi^2 I_d)$ Proposition \ref{prop:BoundFisher} and let $\sigma_\xi^2\rightarrow0$.
\end{proof}

\subsection{Summary}

Table \ref{tab:formula} summarizes the different ways to compute the four diagnostic matrices in practice.

\begin{table}[h]
\centering  \footnotesize
\begin{tabular}{|>{\centering\arraybackslash}p{3cm}|c|c|c|c|}
\cline{2-5}
\multicolumn{1}{c|}{} & \multicolumn{2}{c|}{Incremental SNR \eqref{eq:incremental_optimization}} & \multicolumn{2}{c|}{Conservative SNR \eqref{eq:conservative_optimization}} \\[0.1cm]
\multicolumn{1}{c|}{}  &  $\Sigma_\mathrm{signal} \preceq \mathcal{I}_Y^{-1} $ &  $\Sigma_{Y|\theta} = \E[\Cov(Y|\theta)]$ & $\Sigma_Y=\Cov(Y)$ & $\Sigma_\mathrm{noise}\preceq\E[\mathcal{I}_{Y|\theta}]$ \\[0.1cm] \hline
\textbf{Standard}  &Approx: \eqref{eq:SigmaSignal_Approx} & $\Sigma_\mathrm{obs}$ (given)& Sample: \eqref{eq:estimator_SigmaY}&$\Sigma_\mathrm{obs}$ (given) \\ $Y=G(\theta)+\varepsilon $  &  Grad: \eqref{eq:SigmaSignal_boundFisher} &  &   &  \\ \hline
\textbf{Auxiliary}& Approx: \eqref{eq:SigmaSignal_Approx}& Sample: \eqref{eq:estimator_SigmaYTheta}& Sample: \eqref{eq:estimator_SigmaY}&Approx: \eqref{eq:SigmaNoise_Approx}\\ $Y=G(\theta,\eta)+\varepsilon $ &Grad: \eqref{eq:SigmaSignal_boundFisher_Auxiliary}&&&Grad: \eqref{eq:SigmaNoise_boundFisher_Auxiliary}  \\ \hline
\textbf{Goal-oriented} &Approx: \eqref{eq:SigmaSignal_Approx}&&&Approx: \eqref{eq:SigmaNoise_Approx} \\ $Y=G(\eta)+\varepsilon $,&Grad: \eqref{eq:SigmaSignal_boundFisher}&Sample:  \eqref{eq:estimator_SigmaYTheta}&Sample: \eqref{eq:estimator_SigmaY}&Grad: \eqref{eq:SigmaNoise_boundFisher}  \\ $\theta = H(\eta) + \xi$ &&&&   \\ \hline
\end{tabular}
\caption{Formula to compute $\Sigma_{Y}$, $\Sigma_{Y|\theta}$, $\Sigma_\mathrm{signal}$ and $\Sigma_\mathrm{noise}$ in practice.}
\label{tab:formula}
\end{table}

The primary computational advantage of the proposed SNR bounds is that they completely decouple the evaluation of the forward model $G$ from the design optimisation procedure.
Once the four $p \times p$ diagnostic matrices are assembled, finding the optimal design $W_m$ (e.g., via a greedy algorithm) reduces to purely algebraic operations, specifically, computing determinants of $m \times m$ matrices, requiring no further evaluations of $G$.
This contrasts with standard EIG estimators, which rely on nested Monte Carlo sampling and typically demand $\mathcal{O}(N^2)$ evaluations of $G$ for each candidate design.

Consequently, the cost of our method is entirely concentrated in the offline assembly of the diagnostic matrices (summarized in Table \ref{tab:formula}), which scale as $\mathcal{O}(N)$. The number of forward model evaluations for $N$ samples breaks down as follows:
\begin{itemize}
    \item Approximation-based diagnostics: Generating the training dataset requires $N$ evaluations of $G$. Training the surrogate functions $f$ and $g$ incurs an offline computational overhead but requires no additional forward solves.
    \item Sample-based diagnostics: The pick-and-freeze estimators (e.g., \eqref{eq:estimator_SigmaY} and \eqref{eq:estimator_SigmaYTheta_PF}) require strictly $2N$ evaluations of $G$ per matrix, corresponding to the paired samples $\eta$ and $\eta'$.
    \item Gradient-based diagnostics: Evaluating the bounds requires $N$ evaluations of the Jacobian $\Jac G(\eta)$. While this provides mathematical certification, its relative cost may become prohibitive (e.g., typically costing $\mathcal{O}(p)$ times the cost of $G$ using adjoint methods, or $\mathcal{O}(q)$ using forward sensitivities).
\end{itemize}

\section{Numerical illustration}
\label{sec:numerics}

The following numerical experiments are computed with code provided at \url{https://github.com/mohamed495/CBOED}.

\subsection{Problem setup}
\label{sec:test-case}

We consider the solution $u=u(t,x;\eta)$ to the viscous Burgers equation
\begin{equation}
	\partial_t u + \lambda u \partial_x u = \nu \partial_{xx} u,
	\qquad x \in \Omega = [0,1],\ t \in [0,T],
	\label{eq:burgers}
\end{equation}
with initial condition $u(0,x;\eta) = \eta(x)$ and Dirichlet boundary conditions $u(t,0;\eta)=u(t,1;\eta)=1$.
Here, the initial condition $\eta$ is modelled as a Gaussian process with constant mean $1$ and a Matérn-$3/2$ covariance kernel (with length scale $\ell=0.2$ and pointwise standard deviation $\sigma=0.3$), conditioned on the boundary values $\eta(0)=\eta(1)=1$.
The diffusivity parameter is set to $\nu = 0.003$ and the nonlinearity parameter $\lambda \ge 0$ interpolates between the linear heat equation ($\lambda = 0$) and the standard viscous Burgers equation ($\lambda = 1$).

We discretize \eqref{eq:burgers} in space using a finite-difference scheme on a uniform grid with $N=200$ interior points. Time integration is performed using a Crank--Nicolson scheme with time step $\Delta t=10^{-3}$ over $n_t=100$ time steps, yielding a final time $T=0.1$. Denoting by $u(t;\eta)\in\mathbb{R}^N$ the vector containing the spatial degrees of freedom at discrete time $t$, we define the forward map $G:\mathbb{R}^{q}\to\mathbb{R}^{p}$ as
\begin{equation}
G(\eta)=u(T;\eta),
\end{equation}
which maps the discretized initial condition $\eta$, now a $p$-dimensional Gaussian random vector $\mathcal{N}(m_\eta,\Sigma_\eta)$, to the numerical solution to \eqref{eq:burgers} at the time $t=T$. In this setting we have $q=p=N=200$.
The observable data are subject to additive Gaussian noise
\begin{equation}
	Y = G(\eta) + \varepsilon, \qquad
	\varepsilon \sim \mathcal{N}(0, \Sigma_{\rm obs}),
	\qquad \Sigma_{\rm obs} = 0.01 \times I_N .
\end{equation}
We consider the following definitions of the parameter of interest $\theta$.
\begin{itemize}
 \item \textbf{Standard setting} where $\theta=\eta$ corresponds to the entire initial condition. In this case, $d=N=200$.
 \item \textbf{Auxiliary parameter} where $\theta=(\eta_1,\dots,\eta_{100})$ corresponds to the first half of the initial condition, the other half being considered as a auxiliary parameter. Here, $d=100$.
 \item \textbf{Goal-oriented setting} where $\theta = \|\eta\|^2 $ is the $\ell^2$ norm squared of the initial condition. Here, $d=1$.
\end{itemize}

We compute the diagnostic matrices $\Sigma_{Y}$, $\Sigma_{Y|\theta}$, $\Sigma_\mathrm{signal}$ and $\Sigma_\mathrm{obs}$ using the formula given in Table \ref{tab:formula}. For each estimator, we use a sufficiently large number of samples (here, $N=10^4$ samples) so that the resulting sampling error can be neglected.
We consider point observation of the final state, which corresponds to the set of design $\mathcal{K}_m = \{ W_m=[e_{\tau_1},\hdots,e_{\tau_m}] : \tau\subset\{1,\hdots,p\}, \#\tau=m \}$.
Finally, we employ a greedy procedure to maximize the SNR ratio, that is
\begin{equation}\label{eq:Greedy}
W_{m+1} = [W_{m},w_{w+1}]
\quad\mathrm{where}\quad
w_{m+1} \in \argmax_{w\in\{e_1,\hdots,e_d\}} \mathrm{SNR}([W_m,w]) ,
\end{equation}
where $\mathrm{SNR}(W)=\frac{1}{2}\ln\frac{|W^\top A W|}{|W^\top B W|}$ denotes either the incremental SNR \eqref{eq:incremental_optimization} or the conservative SNR \eqref{eq:conservative_optimization}.
More sophisticated algorithms, not considered here, can also be employed, see \emph{e.g.} \cite{eswar2026bayesian}.

\subsection{Linear forward model}

We first consider $\lambda=0$ which corresponds to a linear forward model $G$.
In this linear-Gaussian setting, $\eta|Y_m$ remains Gaussian for any design $W_m$ and is known in closed form. In addition, for both the standard and auxiliary parameter settings, the EIG can be computed in closed form using the diagnostic matrices \eqref{eq:GOsettinglinear_diag}. Figure \ref{fig:LinearCase} represents prior and posterior realizations of the initial condition $\eta$. We observe that, in the goal-oriented setting $\theta =(\eta_1,\hdots, \eta_{100})$, the resulting sensor locations yield a well-identified initial condition within the region of interest (the left part, corresponding to the first 100 components of $\eta$), while little information is propagated to the right part of the domain.

\begin{figure}
    \centering
    \begin{subfigure}{0.48\textwidth}
        \includegraphics[width=\linewidth]{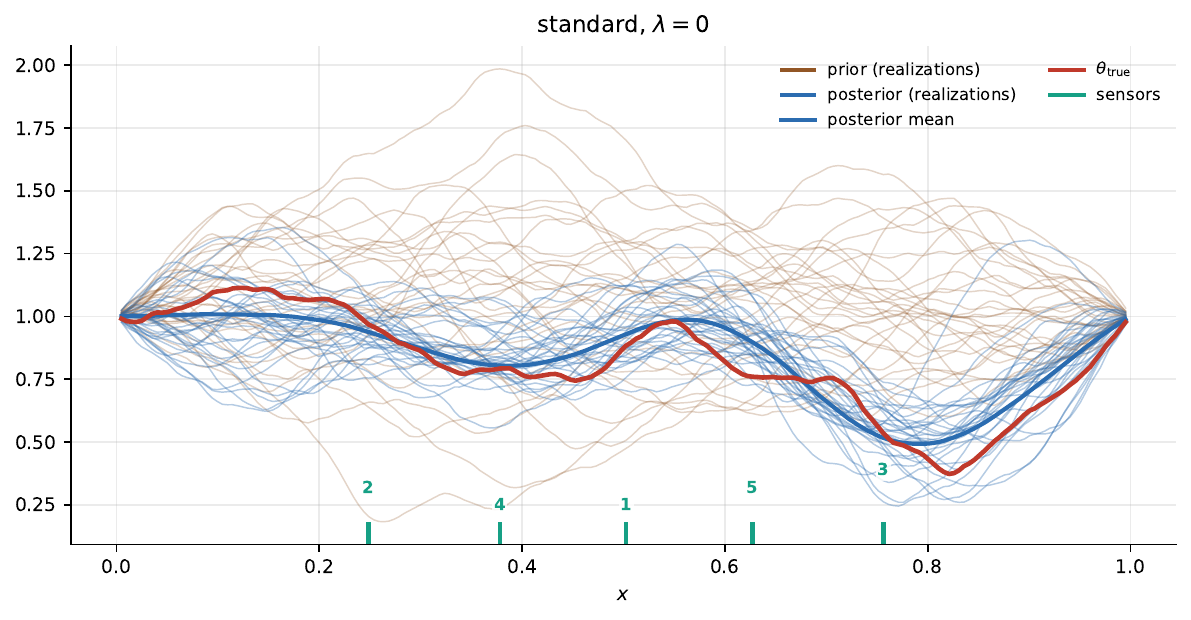}
        \caption{Standard setting}
        \label{fig:LinearCase_a}
    \end{subfigure}
    \begin{subfigure}{0.48\textwidth}
        \includegraphics[width=\linewidth]{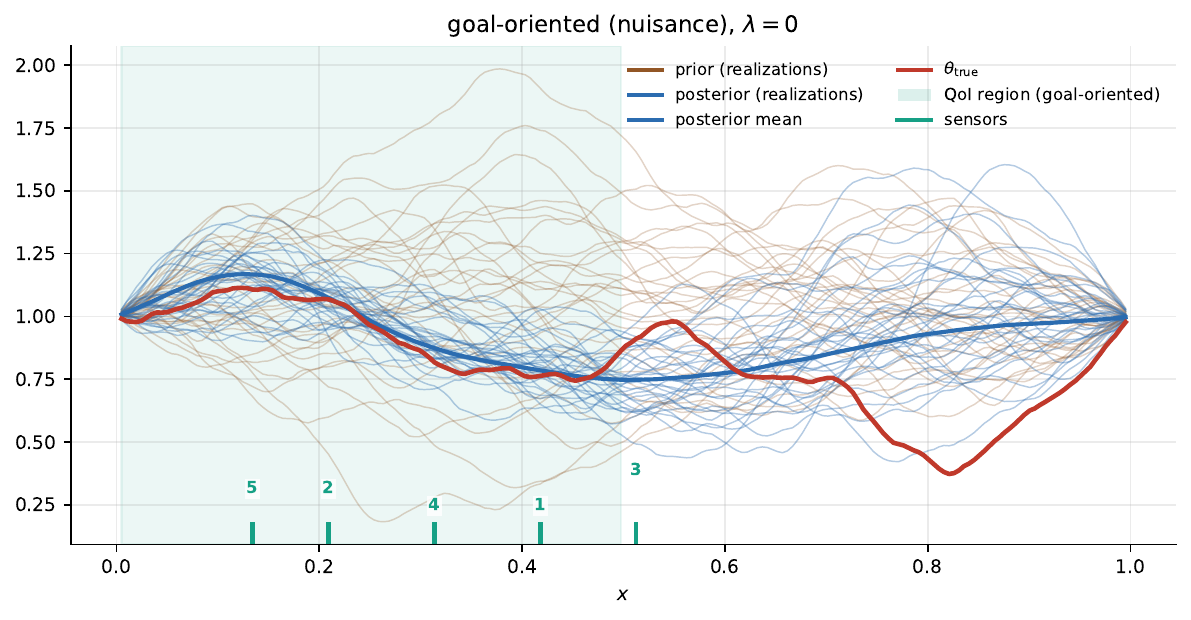}
        \caption{Auxiliary parameter}
        \label{fig:LinearCase_b}
    \end{subfigure}

    \caption{Linear forward model $\lambda=0$. Prior and posterior samples of the initial condition $\eta$ using a design of $m=5$ experiments obtained by a greedy algorithm \eqref{eq:Greedy} to optimize the EIG (closed form expression using \eqref{eq:GOsettinglinear_diag}). The location of the sensors is represented by the green dashes.}
    \label{fig:LinearCase}
\end{figure}

The goal-oriented setting $\theta=\|\eta\|^2$ presents a nonlinearity which does not permit to compute the EIG in closed form (because $\theta|Y_m$ is no longer Gaussian).
We consider here the gradient-based diagnostics matrices \eqref{eq:SigmaSignal_boundFisher} and \eqref{eq:SigmaNoise_boundFisher} given by Proposition \ref{prop:BoundFisher}. Because the forward model $G$ is linear, and by taking $H(\eta)=\|\eta\|^2$ and $\xi=0$, these matrices simplify to $\Sigma_\mathrm{signal} = \Cov(Y)$ and $\Sigma_\mathrm{noise} = \Sigma_\mathrm{obs}$. As a consequence, the resulting \texttt{c-SNR} problem \eqref{eq:conservative_optimization} is the same as the one obtained by the standard setting, namely $\max_{W_m}\frac{1}{2}\ln(\frac{|W_m^\top \Cov(Y) W_m|}{|W_m^\top \Sigma_\mathrm{obs} W_m|})$. The \texttt{i-SNR}, in turn, becomes $\max_{W_m}\frac{1}{2}\ln(\frac{|W_m^\top \Cov(Y) W_m|}{|W_m^\top \Sigma_{Y|\theta} W_m|})$, where $\Sigma_{Y|\theta}=\E[\Cov(Y|\theta)]$ can be estimated using \eqref{eq:estimator_SigmaYTheta}. This estimator requires samples from $\eta|\theta$, which we generate using a rejection algorithm: given a tolerance $\Delta\theta>0$, we draw a candidate $\eta^\dagger\sim\pi_\eta$ and accept it if $\theta-\Delta\theta\leq|\eta^\dagger|^2\leq\theta+\Delta\theta$. If the candidate is rejected, we draw a new $\eta^\dagger\sim\pi_\eta$ and repeat the procedure.
Here, we take $\Delta \theta= 1$ for an acceptance rate of $\sim 1\%$.
The results are presented on Figure \ref{fig:LinearCase_GO}.
We observe that, compared to the conservative bound (which again simplifies to the standard setting of Figure \ref{fig:LinearCase_a}), the incremental one positions observation locations differently and yields a slightly more concentrated posterior density which better represents the true value of $\theta$.

\begin{figure}
    \centering
    \begin{subfigure}{0.48\textwidth}
        \includegraphics[width=\linewidth]{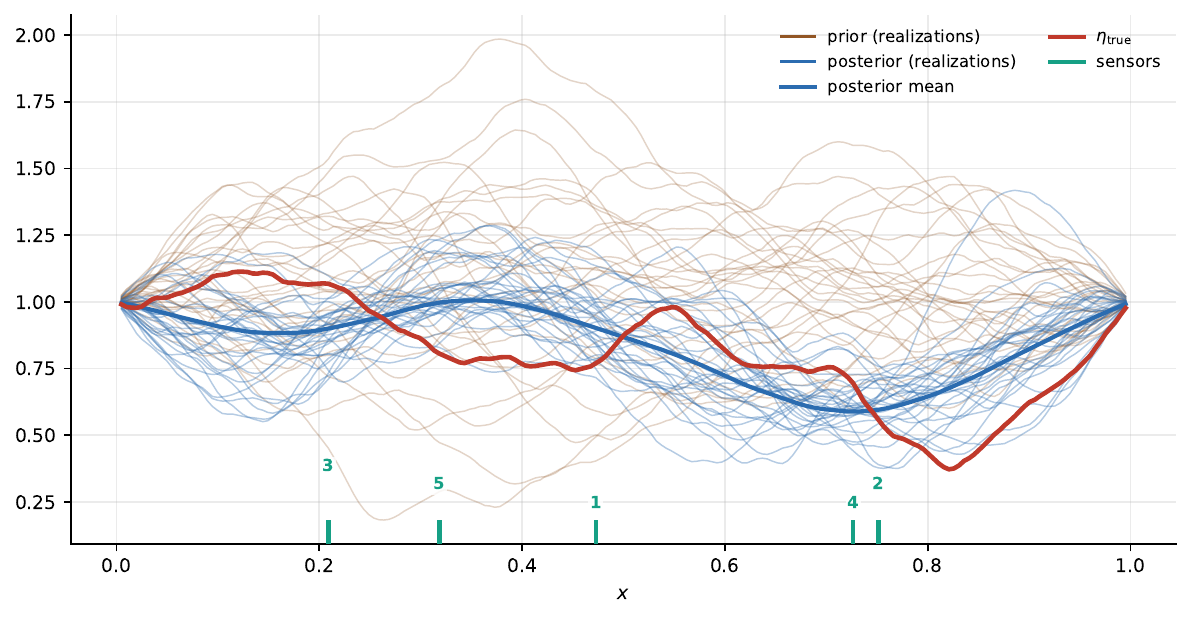}
        \caption{Prior and posterior realizations of the initial condition $\eta$ using \texttt{i-SNR} designs}
        \label{fig:energy_reconstruction}
    \end{subfigure}
    \begin{subfigure}{0.45\textwidth}
        \includegraphics[width=\linewidth]{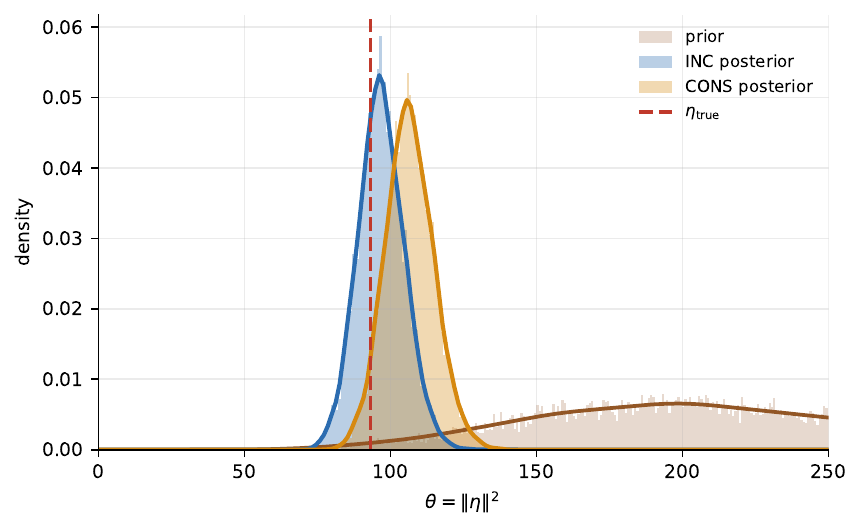}
        \caption{Histograms of the prior and posterior of $\theta$ using \texttt{i-SNR} and \texttt{c-SNR} designs.}
        \label{fig:energy_histogram}
    \end{subfigure}

    \caption{Linear forward model $\lambda=0$, goal-oriented setting $\theta=\|\eta\|^2$. Left: prior and posterior samples of the initial condition $\eta$ using a design of $m=5$ experiments obtained by the greedy algorithm \eqref{eq:Greedy} to optimize the incremental SNR (\texttt{i-SNR}) $W_m\mapsto \frac{1}{2}\ln(\frac{|W_m^\top \Cov(Y) W_m|}{|W_m^\top \Sigma_{Y|\theta} W_m|})$ with $\Sigma_{Y|\theta}$ estimated using \eqref{eq:estimator_SigmaYTheta}. Right: comparison of the posteriors for $\theta$ obtained by either the \texttt{i-SNR} design or the \texttt{c-SNR} design.}
    \label{fig:LinearCase_GO}
\end{figure}

\subsection{SNRs as certified bounds for the EIG}\label{sec:certification}
In this section, the certified bounds for the EIG are assessed as follows. For a given setting (here, standard or auxiliary) and a given $\lambda\in[0,1]$, we employ the greedy algorithm \eqref{eq:Greedy} to identify a design $W_m$ with $m$ observations using either the incremental SNR (\texttt{i-SNR}, \eqref{eq:incremental_optimization}) or the conservative SNR (\texttt{c-SNR}, \eqref{eq:conservative_optimization}), where $\Sigma_\mathrm{signal}$ and $\Sigma_\mathrm{noise}$ are computed either with the gradient-based approach (\texttt{gradient}, \Cref{prop:BoundFisher}) or with the approximation-based approach (\texttt{affine+nn}: affine plus neural network correction, as in \Cref{prop:Sigma_FG}), remember Table \ref{tab:formula}. {The neural network we use consists of three fully connected layers of width $2p$, each with a GELU activation function.}
Once $W_m$ is computed, we plot the four bounds \texttt{inc\_lb}, \texttt{inc\_ub}, \texttt{cons\_lb},
\texttt{cons\_ub} of $\mathrm{EIG}(W_m)$ given by Corollaries \ref{cor:Bound_EIG_incremental} and \ref{cor:Bound_EIG_conservative} as function of $m$. For the conservative bounds, the quantity $\mathrm{EIG}(I_p)$ is computed using a nested Monte Carlo estimator with 1000 samples at each level (hence, a total of $1000^2$ samples).

\begin{figure}
	\centering
	\begin{subfigure}[t]{0.48\textwidth}
		\includegraphics[width=\linewidth]{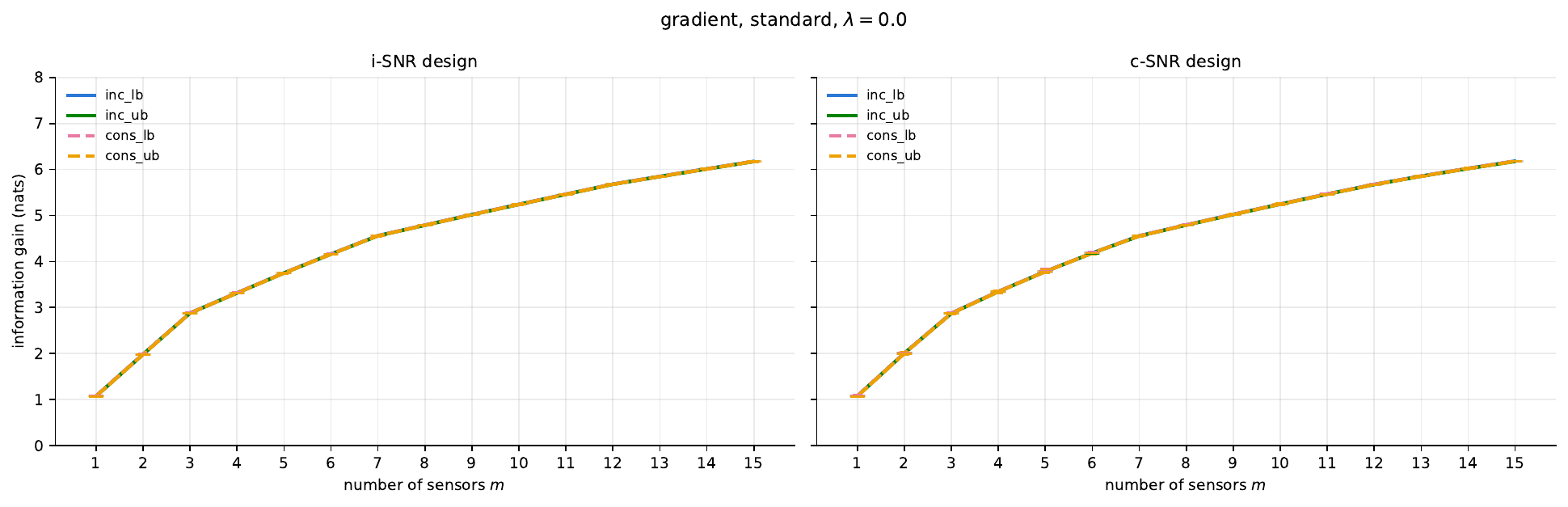}
		\caption{$\lambda = 0$}
	\end{subfigure}
	\hfill
	\begin{subfigure}[t]{0.48\textwidth}
		\includegraphics[width=\linewidth]{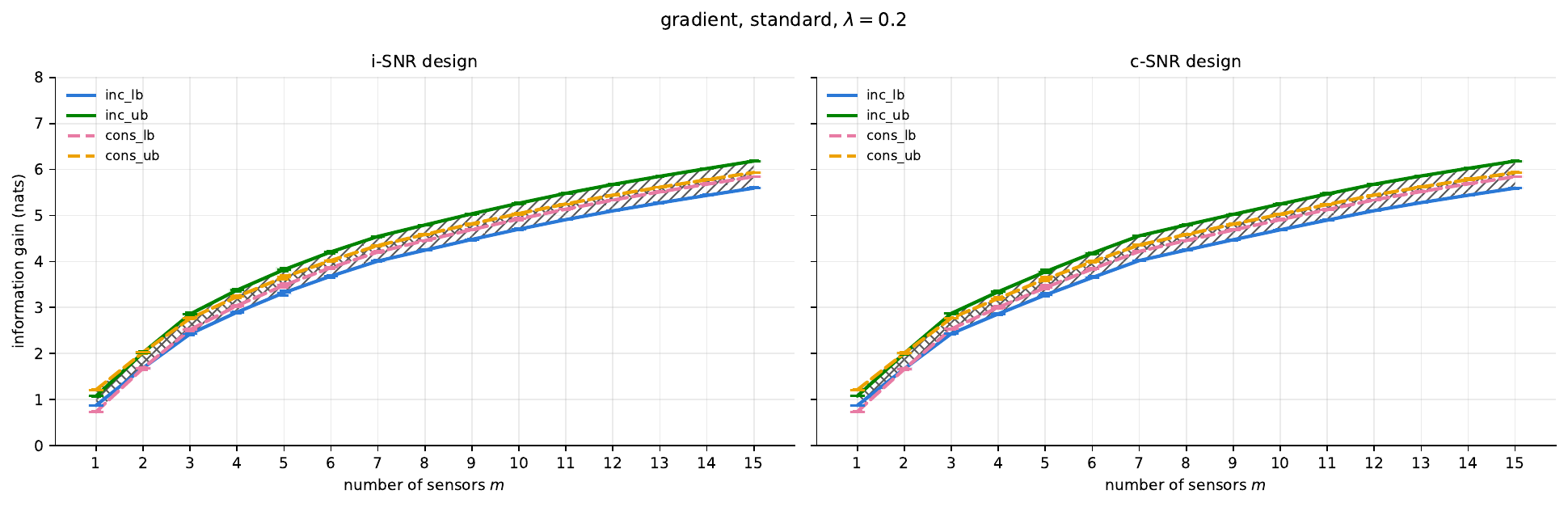}
		\caption{$\lambda = 0.2$}
	\end{subfigure}

	\vspace{1em}

	\begin{subfigure}[t]{0.48\textwidth}
		\includegraphics[width=\linewidth]{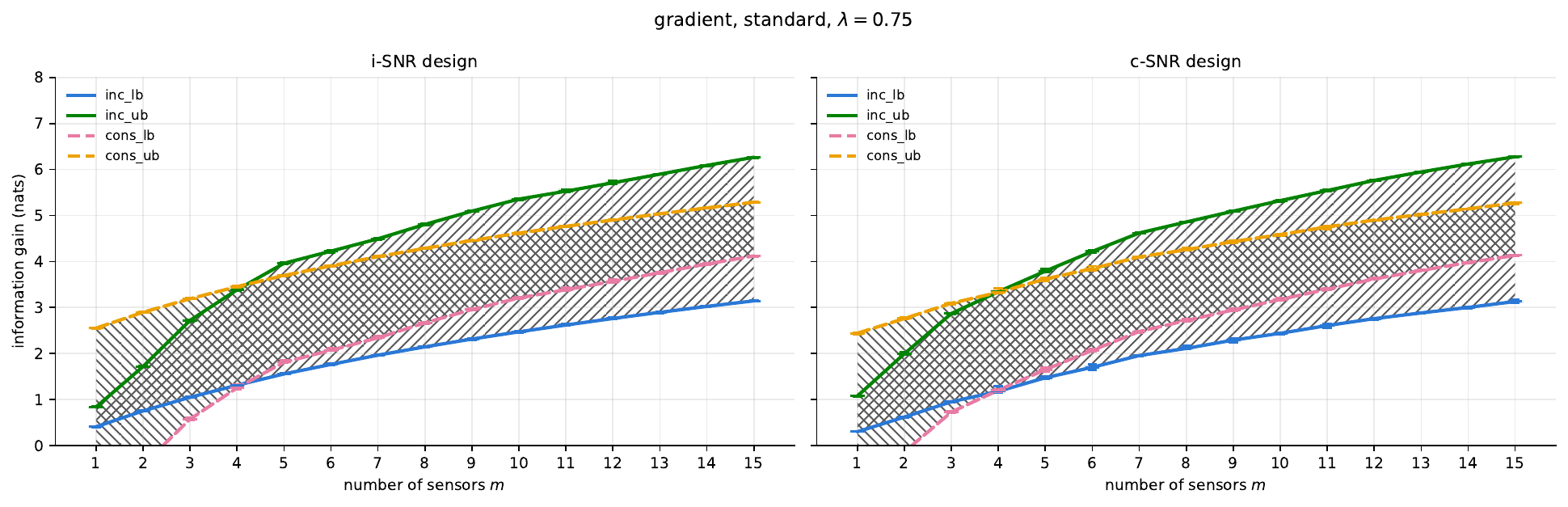}
		\caption{$\lambda = 0.75$}
	\end{subfigure}
	\hfill
	\begin{subfigure}[t]{0.48\textwidth}
		\includegraphics[width=\linewidth]{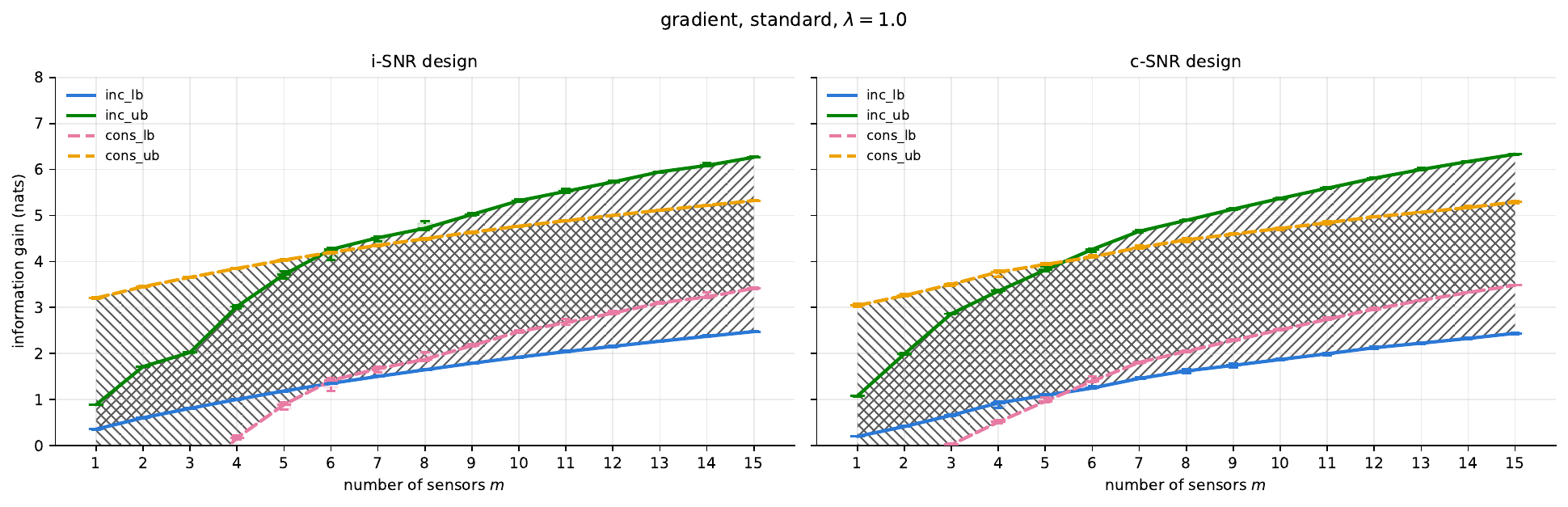}
		\caption{$\lambda = 1$}
		\label{fig:bounds-gradient-standard_d}
	\end{subfigure}
	\caption{%
	Standard setting \& \texttt{gradient}-based diagnostics: certified bounds (\texttt{inc\_lb}, \texttt{inc\_ub}, \texttt{cons\_lb}, \texttt{cons\_ub}) versus number of observations $m$, under incremental (\texttt{i-SNR}, left) and conservative (\texttt{c-SNR}, right) designs, for different $\lambda$.
	}
	\label{fig:bounds-gradient-standard}
\end{figure}

Figure \ref{fig:bounds-gradient-standard} represents the standard setting for different $\lambda$, where we use the \texttt{gradient} approach to compute $\Sigma_\mathrm{signal}$ and $\Sigma_\mathrm{noise}$.
For $\lambda = 0$, the four estimates are numerically indistinguishable, which is consistent with Remark \ref{rmk:GaussianY} since $(Y,\theta)$ is jointly Gaussian.
The intervals then grow with $\lambda$, with the conservative interval growing faster than the incremental one.
At strong nonlinearity ($\lambda = 1$,\Cref{fig:bounds-gradient-standard_d}) the conservative estimates bracket a wider interval; \texttt{cons\_lb} and \texttt{cons\_ub} span almost the full plotted range at small budgets, while the incremental estimates \texttt{inc\_lb} and \texttt{inc\_ub} remain comparatively tight.

\begin{figure}
	\centering
	\begin{subfigure}[t]{0.48\textwidth}
		\includegraphics[width=\linewidth]{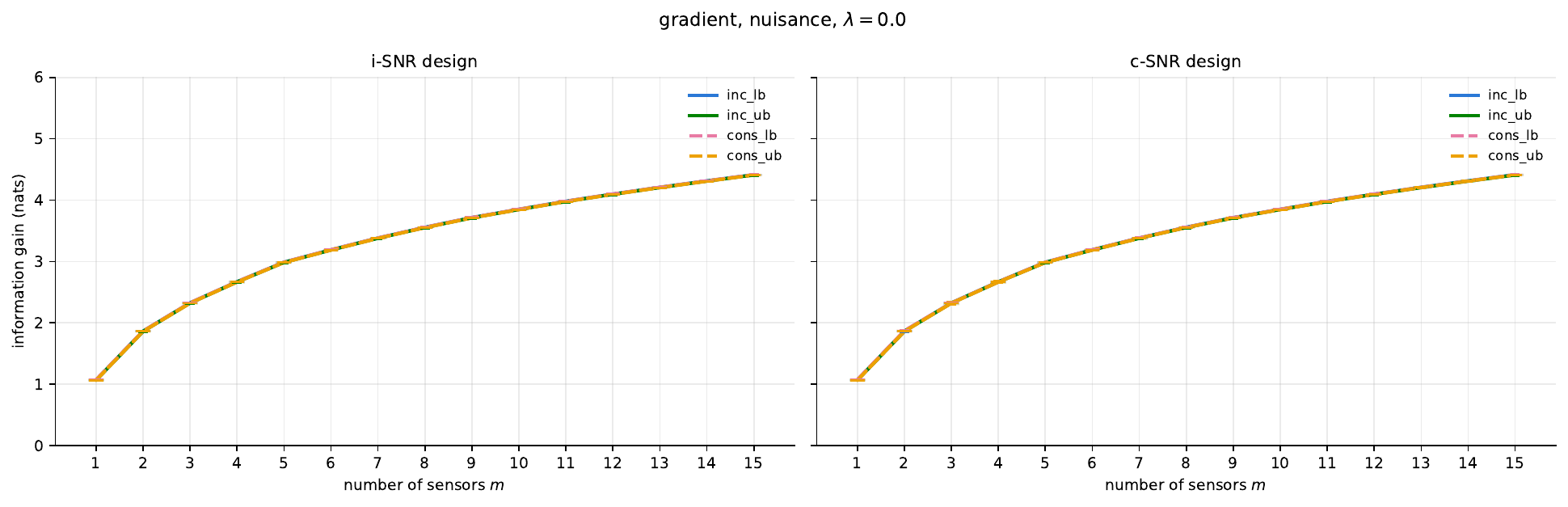}
		\caption{$\lambda = 0$}
	\end{subfigure}
	\hfill
	\begin{subfigure}[t]{0.48\textwidth}
		\includegraphics[width=\linewidth]{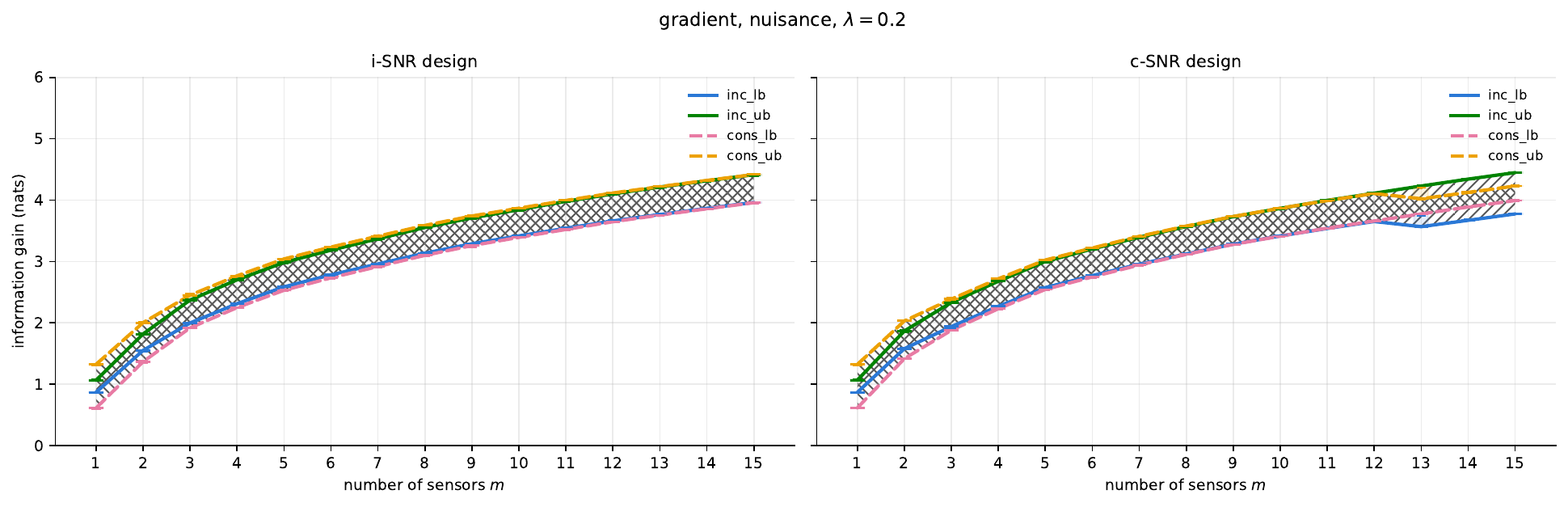}
		\caption{$\lambda = 0.2$}
	\end{subfigure}

	\vspace{1em}

	\begin{subfigure}[t]{0.48\textwidth}
		\includegraphics[width=\linewidth]{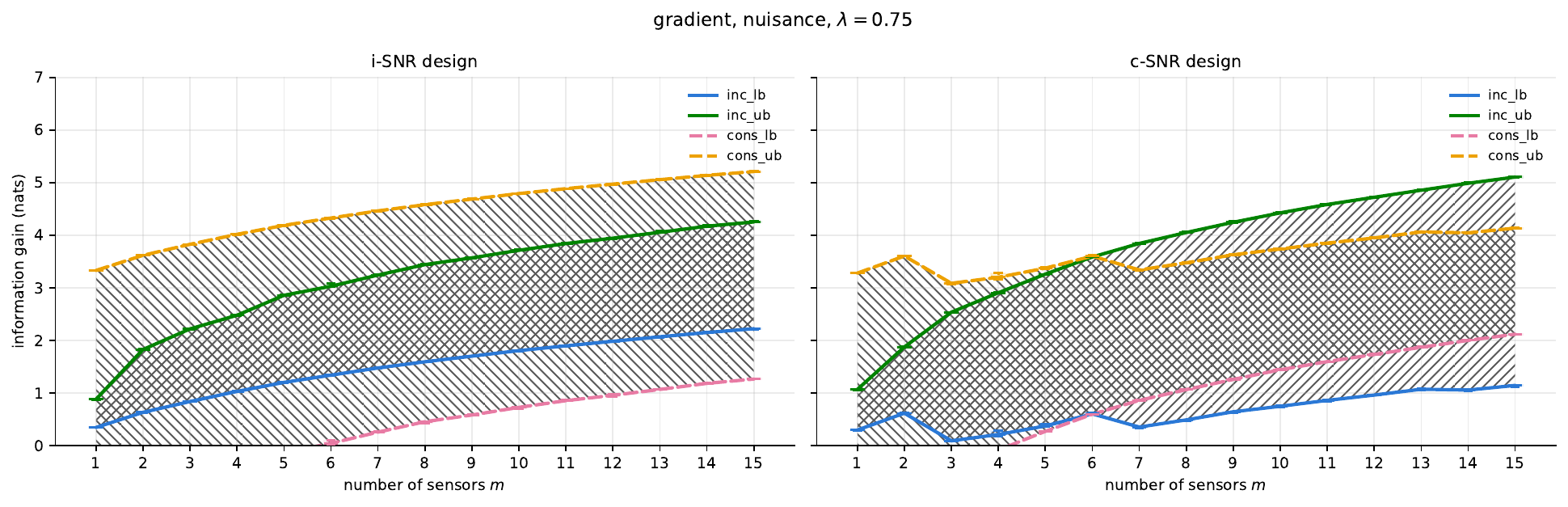}
		\caption{$\lambda = 0.75$}
		\label{fig:bounds-gradient-go_075}
	\end{subfigure}
	\hfill
	\begin{subfigure}[t]{0.48\textwidth}
		\includegraphics[width=\linewidth]{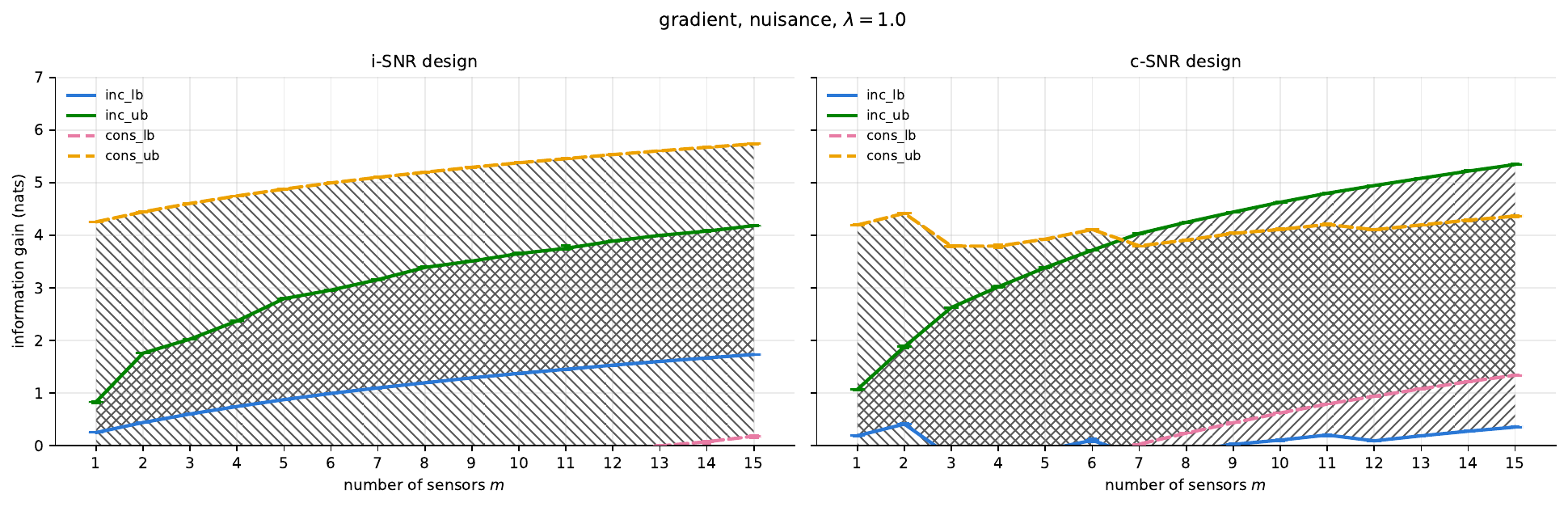}
		\caption{$\lambda = 1$}
	\end{subfigure}
	\caption{%
		Auxiliary setting \& \texttt{gradient}-based diagnostics: certified bounds (\texttt{inc\_lb}, \texttt{inc\_ub}, \texttt{cons\_lb}, \texttt{cons\_ub}) versus number of observations $m$, under incremental (\texttt{i-SNR}, left) and conservative (\texttt{c-SNR}, right) SNR designs, for different $\lambda$.
	}
	\label{fig:bounds-gradient-go}
\end{figure}

\Cref{fig:bounds-gradient-go} corresponds to the auxiliary setting, with the diagnostics computed via the \texttt{gradient} approach.
The behavior is similar to that observed in the standard case (\Cref{fig:bounds-gradient-standard}): the intervals collapse when $\lambda = 0$ and widen as $\lambda$ increases. Moreover, the incremental intervals are relatively narrow for small values of $m$ and widen as $m$ increases, while the opposite trend is observed for the conservative bounds.
It is worth noting on Figure \ref{fig:bounds-gradient-go_075} that the lower-bound \texttt{inc\_lb} is not monotonically increasing with $m$ when the design $W_m$ is identified by maximizing the \texttt{c-SNR}: maximizing one objective (the conservative lower bound) doesn't ensure being optimal for another objective (the incremental lower bound).

\begin{figure}
	\centering
	\begin{subfigure}[t]{0.48\textwidth}
		\includegraphics[width=\linewidth]{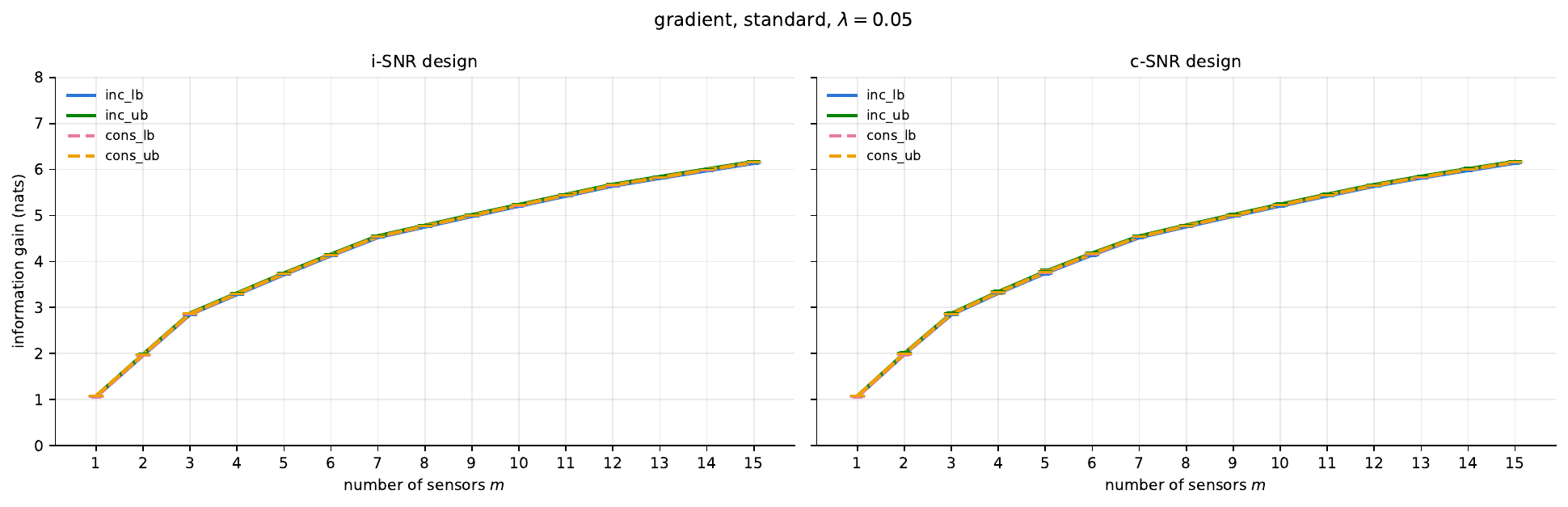}
		\caption{\texttt{gradient}, $\lambda = 0.05$}
	\end{subfigure}
	\hfill
	\begin{subfigure}[t]{0.48\textwidth}
		\includegraphics[width=\linewidth]{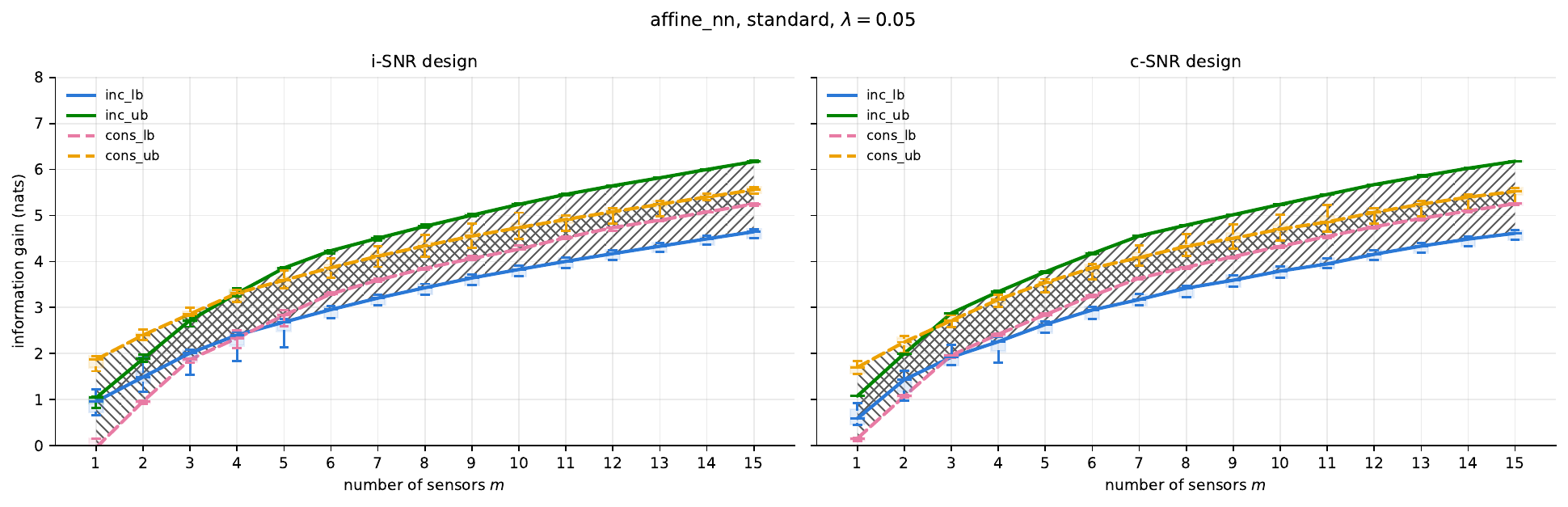}
		\caption{\texttt{affine+nn}, $\lambda = 0.05$}
	\end{subfigure}

	\vspace{1em}

	\begin{subfigure}[t]{0.48\textwidth}
		\includegraphics[width=\linewidth]{images/03_boxplot_gradient_standard_lambda_1.00.pdf}
		\caption{\texttt{gradient}, $\lambda = 1$}
	\end{subfigure}
	\hfill
	\begin{subfigure}[t]{0.48\textwidth}
		\includegraphics[width=\linewidth]{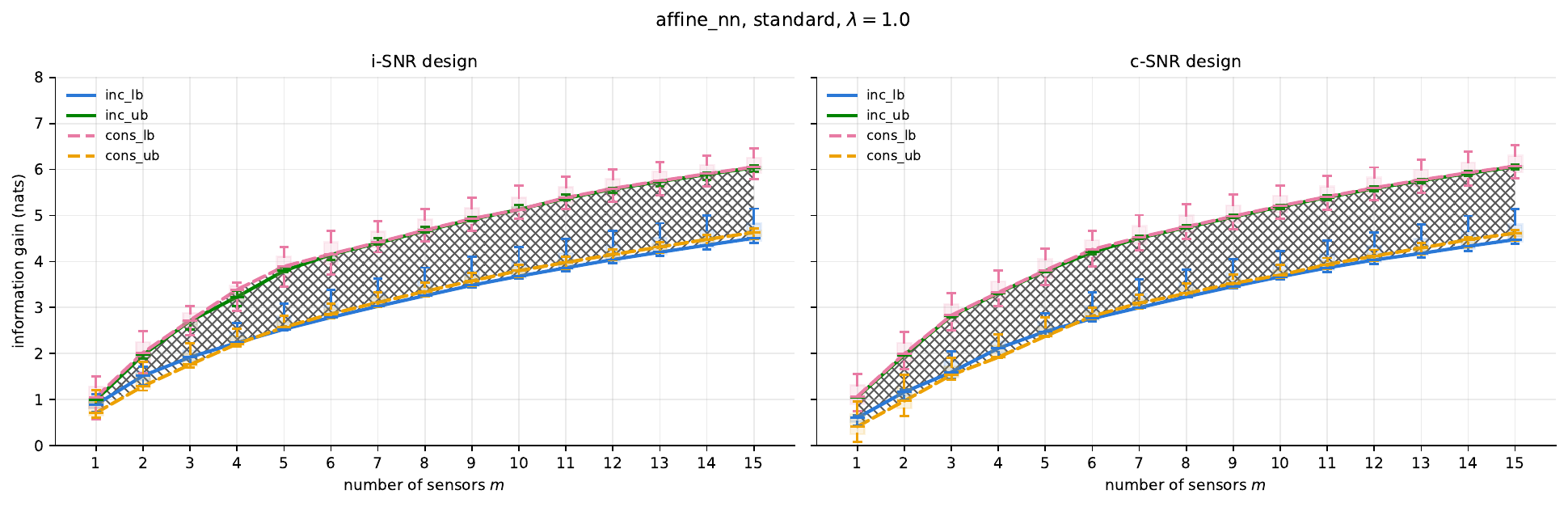}
		\caption{\texttt{affine+nn}, $\lambda = 1$}
	\end{subfigure}
	\caption{%
        Standard setting: comparison of the \texttt{gradient} approach (left) and the \texttt{affine+nn} approach (right) for computing the diagnostics $\Sigma_\mathrm{noise}$ and $\Sigma_\mathrm{signal}$ at weak ($\lambda = 0.05$, top) and strong ($\lambda = 1$, bottom)
		nonlinearity.
	}
	\label{fig:cost-of-certification}
\end{figure}

Figure \ref{fig:cost-of-certification} compares the \texttt{gradient} and the \texttt{affine+nn} approach for computing the diagnostics $\Sigma_\mathrm{noise}$ and $\Sigma_\mathrm{signal}$.
We observe that the intervals obtained with the \texttt{gradient} approach are tighter than those obtained with \texttt{affine+nn} for small values of the nonlinearity parameter $\lambda$. However, the \texttt{gradient}-based intervals widen much faster as $\lambda$ increases: for large values of $\lambda$, the \texttt{affine+nn} approach provides a more robust alternative.
This comparison also highlights the \emph{cost of certification}. Unlike the \texttt{gradient} approach, which provides certified bounds (remember Proposition \ref{prop:BoundFisher}), the \texttt{affine+nn} approach does not yield certified intervals unless the approximation error is neglected, as discussed at the end of Section \ref{sec:Approxiamtion_based_diagnostics}. Thus, while \texttt{affine+nn} can provide tighter intervals, this comes at the cost of losing the certification of the EIG.

\section{Conclusion}
In this work, we developed tractable upper and lower bounds for the expected information gain (EIG) that take the form of signal-to-noise ratios (SNR). These bounds coincide with the true EIG in the linear \& Gaussian setting, and we show that the gap between the bounds is directly related to the nonGaussianity of the problem. Since the bounds depend only on four diagnostic matrices, involving covariance and Fisher information, they provide a promising route toward scalable experimental design for high-dimensional parameter and observable spaces.

We proposed two approaches for computing these diagnostic matrices. The first is a gradient-based approach, which preserves the certification of the bounds. The second is an approximation-based approach, which is more robust in strongly nonlinear regimes but does so at the cost of losing certification when the approximation error is not accounted for. This leads to a practical guideline: when gradients are available and certified bounds are required, the gradient-based approach is appropriate; when gradients are unavailable and certification is not essential, the approximation-based approach provides an interesting alternative.

Several directions remain open for future work. First, an important challenge is to develop efficient ways of computing the diagnostic matrices for very large-scale systems, with, for example, $p=\mathcal{O}(10^6)$ observables. Such applications would also provide an opportunity to better characterize the differences between the conservative and incremental SNR bounds, particularly in regimes where these differences are expected to be substantial.
Second, reversing the roles of $\theta$ and $Y$ may provide a complementary perspective for identifying the parameters that are most informative given an observed dataset.
Finally, transport-based techniques could be used to precondition the joint distribution of $(Y,\theta)$, with the aim of making it more Gaussian and thereby improving the tightness of the proposed bounds.

\section*{Acknowledgments}
The authors would like to thank Matt Li and Mathieu Le Provost for helpful discussions on this work.
As part of the "France 2030" initiative, this work has benefited from a national grant managed by the French National Research Agency (Agence Nationale de la Recherche) attributed to the Exa-MA project of the NumPEx PEPR program, under the reference ANR-22-EXNU-0002.

\appendix

\section{Basic properties of Fisher information matrices}\label{sec:Fisher}
For simplicity, we will use the notation $(\cdot)^{\otimes 2}$ for the outer product so that $ v^{\otimes 2} := v v^\top \in\R^{p\times p}$ for any vector $v\in\R^p$.
Let $X$ be a random vector in $\R^d$ with smooth probability density function $\pi_X$ such that $\mathrm{supp}(\pi_X)=\R^d$ and with a score $\nabla\ln\pi_X(X)$ square integrable. The Fisher information of $X$ is defined as
$$
\mathcal{I}_{X} := \E[ ( \nabla\ln\pi_{X}(X) )^{\otimes 2} ]
= \Cov( \nabla\ln\pi_{X}(X) ) ,
$$
where we used the fact that $\E[ \nabla\ln\pi_{X}(X)  ] = \int \nabla \pi_{X}(x) \d x = 0$.
One integration by parts yields $\E[(X-m)\nabla\ln\pi_{X}(X)^\top] = \int I_d \pi_X(x)\d x = I_d $, where $m=\E[X]$. Then, a Cauchy-Schwarz inequality yields the Cramér-Rao bound
$$
  (\mathcal{I}_X)^{-1}   \preceq \Cov(X)  .
$$
A well known result is that equality in the above relation implies that $\ln\pi_{X}(X)$ and $X-m$ are collinear, therefore $X$ is necessarily Gaussian:
$$
  I_d   = \Cov(X) \mathcal{I}_X
  \qquad\Leftrightarrow\qquad
  X = \mathcal{N}( m,\Cov(X) )
$$

Consider now a pair of random vectors $(X,Y)$ in $\R^d\times \R^p$ with smooth probability density function $\pi_{X,Y}$. The Fisher information matrix of $(X,Y)$ admits the block decomposition
\begin{align*}
 \mathcal{I}_{(X,Y)}
 &= \E\left[
 \begin{pmatrix}
  \nabla_X\ln\pi_{X,Y}(X,Y) \\
  \nabla_Y\ln\pi_{X,Y}(X,Y) \\
 \end{pmatrix} ^{\otimes 2} \right]
 =\begin{pmatrix}
   \E[\mathcal{I}_{X|Y}]  & \mathcal{I}_{X , Y} \\
   \mathcal{I}_{Y , X}  & \E[\mathcal{I}_{Y|X}] \\
  \end{pmatrix} ,
\end{align*}
where
\begin{align*}
 \mathcal{I}_{X , Y}
 &= \E[ ( \nabla_X\ln\pi_{X,Y}(X,Y) ) ( \nabla_Y\ln\pi_{X,Y}(X,Y) ) ^\top ] ,
\end{align*}
is the cross-information matrix and
\begin{align*}
 \E[\mathcal{I}_{Y|X}]
 &=\E\Big[ \E[( \nabla_Y\ln\pi_{X|Y}(X|Y) )^{\otimes 2}|X] \Big]
 = \E[ ( \nabla_Y\ln\pi_{X,Y}(X,Y) )^{\otimes 2} ] ,
\end{align*}
is the expected Fisher information matrix, that is, the expectation over $Y$ of the (random) Fisher information matrix $\mathcal{I}_{X|Y}$ of $X$ conditioned on $Y$.
The following key proposition shows that the Fisher information of a marginal density is upper-bounded by the Shur complement of the Fisher information of the joint density.

\begin{proposition}\label{prop:Fisher}
 Let $(X,Y,Z)$ a random vector in $\R^{d}\times \R^{p} \times \R^{q}$.
 Then
 \begin{align}
  \mathcal{I}_{Y} &=  \E[\mathcal{I}_{Y|X}]  - \E[\Cov(\nabla_{Y} \ln\pi_{X,Y}(X,Y) | Y)] \label{eq:FisherMarginal} \\
  \E[\mathcal{I}_{Y|X}] &= \E[\mathcal{I}_{Y|X,Z}]  -  \E[\Cov(\nabla_{Y} \ln\pi_{X,Y,Z}(X,Y,Z) | X,Y)] , \label{eq:FisherMarginal_bis}
 \end{align}
 and
\begin{align}
  \mathcal{I}_{Y}&\preceq  \E[\mathcal{I}_{Y|X}]  - \mathcal{I}_{Y,X}   \E[\mathcal{I}_{X|Y}] ^{-1} \mathcal{I}_{X,Y}  \label{eq:FisherMarginal_bound}\\
 \E[\mathcal{I}_{Y|X}]  &\preceq \E[\mathcal{I}_{Y|X,Z}] -  \E[\mathcal{I}_{Y,Z|X}] \E[\mathcal{I}_{Z|X,Y}]^{-1} \E[\mathcal{I}_{Z,Y|X}] .
 \label{eq:FisherMarginal_bound_bis}
\end{align}

\end{proposition}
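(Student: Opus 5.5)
We prove the identities \eqref{eq:FisherMarginal}--\eqref{eq:FisherMarginal_bis} by writing the marginal score as a conditional expectation of the joint score, and then deduce the bounds \eqref{eq:FisherMarginal_bound}--\eqref{eq:FisherMarginal_bound_bis} from a best-linear-predictor (Schur complement) argument.

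\textbf{Step 1: the marginal score is a conditional expectation.} Differentiating $\pi_Y(y)=\int \pi_{X,Y}(x,y)\,\d x$ under the integral sign, assuming enough smoothness and integrability to do so, gives
\[
\nabla_y \ln\pi_Y(y) = \int \nabla_y \ln\pi_{X,Y}(x,y)\, \pi_{X|Y}(x|y)\,\d x .
\]
In other words, $\nabla_Y\ln\pi_Y(Y)=\E[S\mid Y]$ with $S:=\nabla_Y\ln\pi_{X,Y}(X,Y)$. Note also that $\nabla_Y \ln\pi_{X,Y}=\nabla_Y\ln\pi_{Y|X}$, so $\E[S^{\otimes 2}]=\E[\mathcal{I}_{Y|X}]$. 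The law of total covariance then gives
\[
\E[S^{\otimes 2}] = \E\big[\E[S|Y]^{\otimes 2}\big] + \E[\Cov(S|Y)],
\]
which is exactly \eqref{eq:FisherMarginal}.

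\textbf{Step 2: the conditional version.} The same argument, applied conditionally on $X$ with $Z$ marginalized out, gives $\nabla_Y\ln\pi_{X,Y}(X,Y)=\E[\nabla_Y\ln\pi_{X,Y,Z}\mid X,Y]$, since $\pi_{X,Y}=\int\pi_{X,Y,Z}\,\d z$. The law of total covariance conditional on $(X,Y)$ then yields \eqref{eq:FisherMarginal_bis}.

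\textbf{Step 3: the bounds.} We lower bound $\E[\Cov(S|Y)]$ by restricting to linear predictors. Let $T:=\nabla_X\ln\pi_{X,Y}(X,Y)$. Two facts are needed:
\begin{itemize}
\item $\E[T\mid Y=y]=\int\nabla_x\pi_{X,Y}(x,y)\,\d x/\pi_Y(y)=0$ under decay at infinity.
\item $\E[T^{\otimes 2}]=\E[\mathcal{I}_{X|Y}]$ and $\E[ST^\top]=\mathcal{I}_{Y,X}$.
\end{itemize}
For any matrix $K$, the conditional covariance is the minimal conditional mean-square error over $Y$-measurable predictors. Since $\E[T|Y]=0$, the variable $\E[S|Y]+KT$ has conditional mean $\E[S|Y]$, and we get
\[
\E[\Cov(S|Y)] \succeq \E[\Cov(S-KT\mid Y)]\ \text{is not the right direction;}
\]
instead we use $\Cov(S|Y)\succeq \Cov(S|Y)$ restricted to the span of $T$. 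Concretely, we use the matrix inequality $\E[\Cov(S|Y)]\succeq \E[\Cov(S,T|Y)]\,\E[\Cov(T|Y)]^{-1}\E[\Cov(T,S|Y)]$. This follows from positive semidefiniteness of $\E[\Cov((S,T)|Y)]$ and its Schur complement. Because $\E[T|Y]=0$, we have $\E[\Cov(S,T|Y)]=\E[ST^\top]-\E[\E[S|Y]\E[T|Y]^\top]=\mathcal{I}_{Y,X}$ and $\E[\Cov(T|Y)]=\E[\mathcal{I}_{X|Y}]$. Substituting into \eqref{eq:FisherMarginal} gives \eqref{eq:FisherMarginal_bound}.

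\textbf{Step 4: the conditional bound.} Repeating Step 3 conditionally on $(X,Y)$ with $T'=\nabla_Z\ln\pi_{X,Y,Z}$ gives \eqref{eq:FisherMarginal_bound_bis}. Here $\E[T'|X,Y]=0$, and the cross and auto terms are the conditional informations $\E[\mathcal{I}_{Y,Z|X}]$ and $\E[\mathcal{I}_{Z|X,Y}]$.

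The main obstacle is Step 3. Its core is the elementary fact that for a zero-mean $\sigma(Y)$-conditioned $T$, the cross-covariance with $S$ given $Y$ integrates to the unconditional cross moment. The delicate part is justifying differentiation under the integral and the vanishing of the boundary terms; these would be assumed via the smoothness, full support and square-integrability hypotheses. Invertibility of $\E[\mathcal{I}_{X|Y}]$ is also assumed; a pseudo-inverse would work otherwise.
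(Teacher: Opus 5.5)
Your proof is correct and follows essentially the paper's route: the marginal score is the conditional expectation of the joint score, the law of total covariance gives the two identities, and a linear correction by the conditionally centered score $T$ gives the bounds. The paper writes the bound step as $\mathcal{I}_Y=\Cov(\E[S-AT|Y])\preceq\Cov(S-AT)$ and optimizes over $A$, which is the same inequality as your Schur complement of $\E[\Cov((S,T)|Y)]\succeq 0$; you should only delete the false start in Step 3 (the lines about $\E[\Cov(S-KT\mid Y)]$ and ``$\Cov(S|Y)$ restricted to the span of $T$''), since the Schur-complement argument alone suffices.
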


\begin{proof}
 Because the marginal density writes $\pi_Y(y)=\int \pi_{X,Y}(x,y)\d x$, we have
 \begin{align*}
  \nabla_Y\ln\pi_Y(y)
  &= \frac{\int \nabla_Y\pi_{X,Y}(x,y)\d x }{\pi_Y(y)}
  = \E[\nabla_Y\ln\pi_{X,Y}(X,Y)|Y=y] ,
 \end{align*}
 and then $\mathcal{I}_Y=\Cov(\nabla_Y\ln\pi_Y(Y)) = \Cov( \E[\nabla_Y\ln\pi_{X,Y}(X,Y)|Y] )$.
 Then, using the law of total variance, we have
 \begin{align*}
  \E[\mathcal{I}_{Y|X}]
  &=  \Cov( \nabla_Y\ln\pi_{X,Y}(X,Y) )\\
  &= \Cov( \E[ \nabla_Y\ln\pi_{X,Y}(X,Y) |Y] )  + \E[\Cov( \nabla_Y\ln\pi_{X,Y}(X,Y) |Y )] \\
  &=  \mathcal{I}_Y + \E[\Cov( \nabla_Y\ln\pi_{X,Y}(X,Y) |Y )] ,
 \end{align*}
 which is \eqref{eq:FisherMarginal}. Relation \eqref{eq:FisherMarginal_bis} is derived using the same calculation.
 To show \eqref{eq:FisherMarginal_bound}, let us note that for any matrix $A\in\R^{d\times p}$ we have
 \begin{align*}
  \mathcal{I}_Y
  &= \Cov\Big( \E[\nabla_Y\ln\pi_{X,Y}(X,Y)|Y] \Big) \\
  &= \Cov\Big( \E[\nabla_Y\ln\pi_{X,Y}(X,Y)|Y] - A \underbrace{\E[\nabla_X\ln\pi_{X,Y}(X,Y)|Y]}_{=0} \Big) \\
  &= \Cov\Big( \E\Big[\nabla_Y\ln\pi_{X,Y}(X,Y) - A  \nabla_X\ln\pi_{X,Y}(X,Y) \Big|Y\Big] \Big) \\
  &\preceq \Cov\Big(  \nabla_Y\ln\pi_{X,Y}(X,Y) - A  \nabla_X\ln\pi_{X,Y}(X,Y)  \Big) \\
  &= \E[\mathcal{I}_{Y|X}] - \mathcal{I}_{Y , X}A^\top - A \mathcal{I}_{X , Y} +  A \E[\mathcal{I}_{X|Y}] A^\top  .
 \end{align*}
 Taking $A = \mathcal{I}_{Y,X} \E[\mathcal{I}_{X|Y}]^{-1}  $ gives \eqref{eq:FisherMarginal_bound}.
 Relation \eqref{eq:FisherMarginal_bound_bis} is derived using the same calculation.
 This concludes the proof.
\end{proof}

\section{Dimensional logarithmic Sobolev inequalities}\label{sec:dLSI}
Let $\mu_0=\mathcal{N}( 0,I_d )$ be the standard Gaussian measure on $\R^d$ and let $\nu$ be any smooth density on $\R^d$. Following \cite{bakry2012dimension}, the dimensional logarithmic Sobolev inequalities (LSI) read
\begin{align*}
 \Dkl(\nu||\mu_0) &\leq \frac{1}{2} \int \| x \|^2 \d\nu - \frac{d}{2} + \frac{1}{2}\ln\left|\int (\nabla\ln \nu) ^{\otimes 2} \d\nu \right| \\
 \Dkl(\nu||\mu_0) &\geq \frac{1}{2} \int \| x \|^2 \d\nu - \frac{d}{2} - \frac{1}{2}\ln\left| \int x^{\otimes 2} \d\nu - \left( \int x \d\nu \right)^{\otimes 2} \right| .
\end{align*}
We refer to \cite[Appendix C.2]{li2025sharp} for a direct proof that the standard (reverse) Gaussian logarithmic Sobolev inequality implies the above dimensional LSI.
These inequalities allow us to derive the following key result.

\begin{proposition}\label{prop:TheResult}
 Let $(\theta,Y,Y')$ be a random vector on $\R^d\times \R^m\times \R^{m'}$ with smooth joint probability density function $\pi_{\theta,Y,Y'}$.
 Then
 \begin{equation}\label{eq:TheResult}
 \frac{1}{2}    \ln\left(\frac{| \E[\mathcal{I}_{Y' | Y}]^{-1} |}{|\E[\Cov(Y'| \theta,Y)]|}\right)
 \leq
 \E_Y[\Dkl( \pi_{\theta|Y,Y'} || \pi_{\theta|Y} )]
 \leq
 \frac{1}{2}   \ln\left( \frac{| \E[\Cov(Y'| Y)] |}{|\E[\mathcal{I}_{Y' | \theta,Y}]^{-1}|} \right) ,
 \end{equation}
 where $\E[\mathcal{I}_{Y' | Y}] = \E[ (\nabla_{Y'} \ln\pi_{Y,Y'}(Y,Y') )^{\otimes 2} ]$ and $\E[\mathcal{I}_{Y' |\theta,Y}] = \E[ (\nabla_{Y'} \ln\pi_{\theta,Y,Y'}(\theta,Y,Y') )^{\otimes 2} ]$ are the expected Fisher information matrix of $(Y'|Y)$ and $(Y'|\theta,Y)$, respectively.

\end{proposition}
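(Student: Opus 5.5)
The plan is to reduce the expected KL divergence to a difference of conditional differential entropies of $Y'$, and then bound each entropy pointwise using the two dimensional logarithmic Sobolev inequalities of Section \ref{sec:dLSI}. The pointwise bounds are then averaged, and Jensen's inequality for $\ln|\cdot|$ moves the expectations inside the determinants.

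\emph{Step 1 (symmetrization).} The left-hand quantity, with the outer expectation taken over $(Y,Y')$, is the conditional mutual information $I(\theta;Y'\mid Y)$. By the chain rule for densities, $\pi_{\theta|Y,Y'}/\pi_{\theta|Y}=\pi_{Y'|\theta,Y}/\pi_{Y'|Y}$. Hence
$$
\E[\Dkl(\pi_{\theta|Y,Y'}||\pi_{\theta|Y})]
=\E_{(\theta,Y)}[\Dkl(\pi_{Y'|\theta,Y}||\pi_{Y'|Y})]
= h(Y'\mid Y)-h(Y'\mid\theta,Y),
$$
where $h(Y'\mid Z)=-\E[\ln\pi_{Y'|Z}(Y'|Z)]$ is the conditional differential entropy. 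This step assumes the entropies are finite, which holds under the smoothness and moment assumptions. It therefore suffices to find upper and lower bounds on each conditional entropy.

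\emph{Step 2 (entropy bounds from the dimensional LSI).} For a smooth density $\nu$ on $\R^{m'}$,
$$
\Dkl(\nu||\mu_0)=-h(\nu)+\tfrac12\int\|x\|^2\d\nu+\tfrac{m'}{2}\ln(2\pi).
$$
Substituting this into the two inequalities of Section \ref{sec:dLSI} cancels the second-moment terms and gives the dimension-free form
$$
c_{m'}-\tfrac12\ln\Big|\int(\nabla\ln\nu)^{\otimes 2}\d\nu\Big| \;\le\; h(\nu)\;\le\; c_{m'}+\tfrac12\ln|\Cov_\nu|,
\qquad c_{m'}=\tfrac{m'}{2}\ln(2\pi e).
$$
This is the Stam-type lower bound together with the Gaussian maximum-entropy upper bound. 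We apply it with $\nu=\pi_{Y'|Y=y}$ and with $\nu=\pi_{Y'|\theta=t,Y=y}$, and then take expectations in $y$ and in $(t,y)$.

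\emph{Step 3 (Jensen and identification of the Fisher matrices).} Since $A\mapsto\ln|A|$ is concave on positive definite matrices, $\E\ln|A|\le\ln|\E A|$. This lets us replace $\E\ln|\Cov(Y'|Y)|$ by $\ln|\E[\Cov(Y'|Y)]|$ in the upper bound on $h(Y'|Y)$. In the same way, $-\E\ln|\mathcal I_{Y'|\theta,Y}|\ge-\ln|\E[\mathcal I_{Y'|\theta,Y}]|$ in the lower bound on $h(Y'|\theta,Y)$. Subtracting the two bounds yields the right inequality of \eqref{eq:TheResult}, since the constants $c_{m'}$ cancel. Symmetrically, the lower bound on $h(Y'|Y)$ via $\E[\mathcal I_{Y'|Y}]$ and the upper bound on $h(Y'|\theta,Y)$ via $\E[\Cov(Y'|\theta,Y)]$ give the left inequality.

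The remaining check is that the averaged conditional Fisher matrices coincide with the expressions in the statement. This holds because $\nabla_{Y'}\ln\pi_{Y'|Y}=\nabla_{Y'}\ln\pi_{Y,Y'}$, as $\pi_Y$ does not depend on $Y'$; likewise $\nabla_{Y'}\ln\pi_{Y'|\theta,Y}=\nabla_{Y'}\ln\pi_{\theta,Y,Y'}$.

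The main obstacle I expect is Step 2 applied to conditional densities rather than to a fixed $\nu$. The dimensional LSI must hold for almost every $y$ (respectively $(t,y)$), which requires the conditional densities to be smooth with finite second moments and square-integrable scores. In addition, exchanging integration with the pointwise bounds requires the relevant quantities to be integrable so that no $\infty-\infty$ occurs. I would handle this by assuming the regularity directly, which is implicit in the statement's smoothness hypothesis. If a bound is infinite, the corresponding inequality holds trivially.
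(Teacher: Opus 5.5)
Your proposal is correct and follows essentially the same route as the paper: rewrite the expected KL via $\pi_{\theta|Y,Y'}/\pi_{\theta|Y}=\pi_{Y'|\theta,Y}/\pi_{Y'|Y}$, apply the two dimensional LSIs to the conditional laws of $Y'$, and move the expectations inside $\ln|\cdot|$ by Jensen. The only cosmetic difference is that the paper subtracts KL divergences to the common Gaussian reference $\mathcal{N}(\E[Y'],\Cov(Y'))$ (Lemma \ref{lem:dLSI_Fisher}), where the quadratic terms cancel only after averaging, whereas in your entropy formulation they cancel pointwise.
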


To prove Proposition \ref{prop:TheResult}, we will need the following lemma.

\begin{lemma}\label{lem:dLSI_Fisher}
 Let $(X,Z)$ be a random vector in $\R^{d} \times \R^{p}$ with smooth joint density $\pi_{X,Z}$. Assume $\E[\|X\|^2]<\infty$ and let  $ \widetilde\pi_{X} = \mathcal{N}( \E[X] , \Cov(X) ) $. Then
\begin{align}
 \frac{1}{2}   \ln\left(\frac{|\Cov(X)|}{|\E[\Cov(X| Z)]|}\right)
 \leq \E_{Z}[\Dkl( \pi_{X|Z} || \widetilde\pi_{X})]
 &\leq \frac{1}{2}   \ln\left( \left| \Cov(X) \E[\mathcal{I}_{X|Z}] \right| \right) , \label{eq:dLSI_Fisher_Ytheta}
\end{align}
where $\E[\Cov(X|Z)] = \E[( X - \E[X|Z] )^{\otimes 2}] $ is the expected conditional covariance of $X | Z$ and $\E[\mathcal{I}_{X|Z}] = \E[ (\nabla_{X} \ln\pi_{X,Z}(X,Z) )^{\otimes 2} ]$ is the expected Fisher information matrix of $X|Z$.
\end{lemma}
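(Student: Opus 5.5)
Fix $z$ and apply the two dimensional LSIs of Section \ref{sec:dLSI} to the standardized conditional density, then average over $Z$. Write $m=\E[X]$, $\Sigma=\Cov(X)$, and change variables $x = m + \Sigma^{1/2} u$, so that $\widetilde\pi_X$ becomes $\mu_0=\mathcal{N}(0,I_d)$. KL divergence is invariant under this common affine bijection, so $\Dkl(\pi_{X|Z=z}||\widetilde\pi_X) = \Dkl(\nu_z||\mu_0)$, where $\nu_z$ is the law of $U=\Sigma^{-1/2}(X-m)$ given $Z=z$.

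For the lower bound, apply the reverse dimensional LSI to $\nu_z$:
\begin{equation*}
\Dkl(\nu_z||\mu_0) \geq \tfrac12 \E[\|U\|^2|Z=z] - \tfrac d2 - \tfrac12\ln|\Cov(U|Z=z)|.
\end{equation*}
Take the expectation over $Z$. Since $\E\|U\|^2 = \trace(\Sigma^{-1}\Sigma)=d$, the first two terms cancel. Because $-\ln|\cdot|$ is convex on positive definite matrices, Jensen gives $-\E\ln|\Cov(U|Z)| \geq -\ln|\E\Cov(U|Z)|$. Finally $\E\Cov(U|Z)=\Sigma^{-1/2}\E[\Cov(X|Z)]\Sigma^{-1/2}$, whose log-determinant equals $\ln|\E\Cov(X|Z)|-\ln|\Sigma|$. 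Together these give the left inequality.

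For the upper bound, apply the first dimensional LSI:
\begin{equation*}
\Dkl(\nu_z||\mu_0)\le \tfrac12\E[\|U\|^2|z]-\tfrac d2+\tfrac12\ln|\mathcal{I}_{U|Z=z}|.
\end{equation*}
Averaging again cancels the quadratic and $d/2$ terms. By concavity of $\ln|\cdot|$, Jensen gives $\E\ln|\mathcal{I}_{U|Z}|\le\ln|\E\mathcal{I}_{U|Z}|$. The score transforms as $\nabla_u\ln\pi_{U|Z}=\Sigma^{1/2}\nabla_x\ln\pi_{X|Z}$, and $\nabla_x\ln\pi_{X|Z}=\nabla_x\ln\pi_{X,Z}$ since $\pi_Z$ does not depend on $x$. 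Hence $\E\mathcal{I}_{U|Z}=\Sigma^{1/2}\E[\mathcal{I}_{X|Z}]\Sigma^{1/2}$, whose determinant is $|\Sigma\,\E[\mathcal{I}_{X|Z}]|$. This gives the right inequality.

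The main obstacle is rigor rather than algebra. The Jensen steps need the direction of convexity checked carefully, and they need the conditional covariances and Fisher matrices to be a.s. positive definite with the relevant expectations finite; I would assume this or handle degenerate cases by the convention $\ln 0=-\infty$. The LSIs also require enough regularity and integrability of $\nu_z$ for each $z$, which follows from the smoothness of $\pi_{X,Z}$ and $\E\|X\|^2<\infty$.
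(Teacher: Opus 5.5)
Your proposal is correct and follows essentially the same route as the paper: standardize by the affine map so that $\widetilde\pi_X$ becomes $\mathcal{N}(0,I_d)$, apply both dimensional LSIs to the conditional law given $Z=z$, average over $Z$ so the quadratic term cancels with $d/2$, then use Jensen for the concave $\ln|\cdot|$ and transform back. The only cosmetic difference is that you change variables on the random vector $U=\Sigma^{-1/2}(X-m)$, whereas the paper first rewrites the dLSIs for a general Gaussian reference measure $\mathcal{N}(m,\Sigma)$ and then applies them directly to $\pi_{X|Z}$.
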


We mention here that the left inequality in \eqref{eq:dLSI_Fisher_Ytheta} is related to the upper bound on differential entropy given in \cite[Theorem 9.6.5]{cover1991elements}, whereas the right inequality in \eqref{eq:dLSI_Fisher_Ytheta} relates to Efroimovich inequality \cite{efroimovich1980information} (see also \cite{courtadeinformation} for a link with Bayesian Cramér--Rao Bound), which gives a lower bound on differential entropy.

\begin{proof}[Proof of Lemma \ref{lem:dLSI_Fisher}]
 First we note that the dimensional logarithmic Sobolev inequalities (dLSI) can be naturally extended to any Gaussian measure $\mu=\mathcal{N}(m,\Sigma)$ with full rank covariance matrix $\Sigma\in\R^{d\times d}$ as follows. Let $T(x) =  \Sigma^{-1/2}(x-m)$ be the affine transformation such that $T_\sharp\mu = \mu_0$. We have $\Dkl(\nu||\mu) = \Dkl( T_\sharp\nu ||\mu_0)$ so that the dLSI yields
 \begin{equation}
  \Dkl(\nu||\mu) \leq \frac{1}{2} \int \| x-m \|_{\Sigma^{-1}}^2 \d\nu - \frac{d}{2} + \frac{1}{2}\ln\left| \Sigma \int  (\nabla\ln\nu)^{\otimes 2} \d\nu \right| \label{eq:dLSI_up}
 \end{equation}
and
\begin{align}
 \Dkl(\nu||\mu) &\geq \frac{1}{2} \int \| x-m \|_{\Sigma^{-1}}^2 \d\nu - \frac{d}{2} - \frac{1}{2}\ln\frac{\left| \int ( x-m)^{\otimes 2} \d\nu - \left( \int (x-m) \d\nu \right)^{\otimes 2} \right| }{|\Sigma|}  \nonumber\\
 &= \frac{1}{2} \int \| x-m \|_{\Sigma^{-1}}^2 \d\nu - \frac{d}{2} - \frac{1}{2}\ln\frac{\left| \int x^{\otimes 2} \d\nu - \left( \int x \d\nu \right)^{\otimes 2} \right| }{|\Sigma|}  \label{eq:dLSI_down}.
\end{align}
 Thus we can write
 \begin{align*}
 \E_{Z}[\Dkl( \pi_{X|Z} || \widetilde\pi_{X})]
 &\overset{\eqref{eq:dLSI_up}}{\leq} \E_{Z}\Bigg[  \frac{1}{2} \int \| y_\perp-\E[X] \|_{\Cov(X)^{-1}}^2 \d\pi_{X|Z} - \frac{d}{2} \\
 &\qquad\qquad + \frac{1}{2}\ln\left(\left| \Cov(X) \int  (\nabla_{X}\ln\pi_{X|Z})^{\otimes 2} \d\pi_{X|Z} \right|\right)  \Bigg] \\
 &= \frac{1}{2}  \E_{Z}\left[ \ln\left(\left| \Cov(X) \int  (\nabla_{X}\ln\pi_{X|Z})^{\otimes 2} \d\pi_{X|Z} \right|\right)  \right] \\
 &\leq \frac{1}{2}   \ln\left(\left| \Cov(X) \int  (\nabla_{X}\ln\pi_{X|Z})^{\otimes 2} \d\pi_{X,Z} \right|\right) ,
\end{align*}
where in the last step we used the Jensen inequality $\E[\varphi(X)]\leq \varphi(\E[X])$ with $\varphi(X)=\ln|X|$ a concave function on the space of SPD matrices. This gives the right-hand side of \eqref{eq:dLSI_Fisher_Ytheta}. To prove the left-hand side, we write
 \begin{align*}
 \E_{Z}\left[ \Dkl( \pi_{X|Z} || \widetilde\pi_{X} )\right]
 &\overset{\eqref{eq:dLSI_down}}{\geq} \E_{Z}\Bigg[  \frac{1}{2} \int \| y_\perp-\E[X] \|_{\Cov(X)^{-1}}^2 \d\pi_{X|Z} - \frac{d}{2} \\
 &\qquad\qquad - \frac{1}{2}\ln\left(\frac{\left| \int  y_\perp^{\otimes 2} \d\pi_{X|Z} - \left( \int y_\perp \d\pi_{X|Z} \right)^{\otimes 2} \right| }{|\Cov(X)|} \right)  \Bigg] \\
 &= - \frac{1}{2} \E_{Z}\Bigg[\ln\left(\frac{\left| \int  y_\perp^{\otimes 2} \d\pi_{X|Z} - \left( \int y_\perp \d\pi_{X|Z} \right)^{\otimes 2} \right| }{|\Cov(X)|} \right)  \Bigg] \\
 &\geq - \frac{1}{2} \ln\left(\frac{\left| \int  y_\perp^{\otimes 2} \d\pi_{X} - \E_{Z}[( \int y_\perp \d\pi_{X|Z} )^{\otimes 2} ]\right| }{|\Cov(X)|} \right) ,
\end{align*}
where we again used the Jensen inequality in the last step. Because $\E[\Cov(X|Z)]=\E[X^{\otimes2}]-\E[\E[X|Z]^{\otimes2}]$, this yields the left-hand side of \eqref{eq:dLSI_Fisher_Ytheta} and concludes the proof.
\end{proof}

We are now able to prove Proposition \ref{prop:TheResult}.

\begin{proof}[Proof of Proposition \ref{prop:TheResult}]
 We can write
 \begin{align*}
 \E_Y[\Dkl( \pi_{\theta|Y,Y'} || \pi_{\theta|Y} )]
 &= \int \ln\left(\frac{\pi_{\theta|Y,Y'}}{\pi_{\theta|Y}}\right) \d\pi_{\theta,Y,Y'} \\
 &= \int \ln\left(\frac{\pi_{Y'|\theta,Y}}{\pi_{Y'|Y}}\right) \d\pi_{\theta,Y,Y'} \\
 &= \int \ln\left(\frac{\pi_{Y'|\theta,Y}}{\widetilde\pi_{Y'}}\right) \d\pi_{\theta,Y,Y'}  - \int \ln\left(\frac{\pi_{Y'|Y}}{\widetilde\pi_{Y'}}\right) \d\pi_{Y,Y'}\\
 &= \E_{\theta Y}[\Dkl( \pi_{Y'|\theta Y} || \widetilde\pi_{Y'})]  - \E_{Y}[\Dkl( \pi_{Y'|Y} || \widetilde\pi_{Y'} )].
\end{align*}
 Applying Lemma \ref{lem:dLSI_Fisher} with $X=Y'$ and with either $Z=(\theta,Y)$ or $Z=Y$ yields
 \begin{align*}
 \E_Y[\Dkl( \pi_{\theta|Y,Y'} || \pi_{\theta|Y} )]
 &\leq \frac{1}{2}   \ln\left( \left| \Cov(Y') \E[\mathcal{I}_{Y'|\theta,Y}] \right| \right)  - \frac{1}{2}   \ln\left(\frac{|\Cov(Y')|}{|\E[\Cov(Y'| Y)]|}\right) \\
 \E_Y[\Dkl( \pi_{\theta|Y,Y'} || \pi_{\theta|Y} )]&\geq \frac{1}{2}    \ln\left(\frac{|\Cov(Y')|}{|\E[\Cov(Y'| \theta,Y)]|}\right) -  \frac{1}{2}  \ln\left( \left| \Cov(Y') \E[\mathcal{I}_{Y'|Y}] \right| \right) ,
\end{align*}
which gives the result.
\end{proof}

\section{Proof of Theorem \ref{th:Bound_EIG}}\label{proof:Bound_EIG_incremental}
Let $W_m\in\R^{p\times m}$ and $W_\mathrm{new}\in\R^{p\times m'}$ be two matrices such that $[W_m,W_\mathrm{new}]\in\R^{p\times (m+m')}$ has rank $m+m'\leq p$.
We denote by $W_\perp\in\R^{p\times (p-m-m')}$ any matrix such that the block matrix $W=[W_\mathrm{new},W_m,W_\perp]\in\R^{p\times p}$ is invertible.
Consider the matrix $V=W^{-\top}$ and its block decomposition $V=[V_\mathrm{new},V_m,V_\perp]$.
By construction, the random variables $Y_\mathrm{new}=W_\mathrm{new}^\top Y$, $Y_m=W_m^\top Y$ and $Y_\perp = W_\perp^\top Y$ are such that
\begin{equation}\label{tmp:20756187}
 \begin{pmatrix}
  Y_\mathrm{new} \\ Y_m \\ Y_\perp
 \end{pmatrix}
 = W^\top Y
 \qquad\Leftrightarrow\qquad
 Y= V_\mathrm{new} Y_\mathrm{new} + V_m Y_m +  V_\perp Y_\perp .
\end{equation}
The following Lemma will be useful.
\begin{lemma}\label{lem:FisherMarginal}
With the above notations we have
  \begin{align}
  \E[\mathcal{I}_{Y_\mathrm{new}|Y_m}] &\preceq \Big(  W_\mathrm{new}^\top \Sigma_\mathrm{signal}[W_m] W_\mathrm{new}\Big)^{-1}  \label{tmp:104763} \\
  \E[\mathcal{I}_{Y_\mathrm{new}|\theta,Y_m}] &\preceq \Big(  W_\mathrm{new}^\top \Sigma_\mathrm{noise}[W_m] W_\mathrm{new}\Big)^{-1} \label{tmp:103571} .
 \end{align}
 where we recall the notations
 \begin{align*}
  \Sigma_\mathrm{signal}[W_m] &= \Sigma_\mathrm{signal} - \Sigma_\mathrm{signal} W_m ( W_m^\top \Sigma_\mathrm{signal} W_m )^{-1} W_m^\top \Sigma_\mathrm{signal} \\
  \Sigma_\mathrm{noise}[W_m] &= \Sigma_\mathrm{noise} - \Sigma_\mathrm{noise} W_m ( W_m^\top \Sigma_\mathrm{noise} W_m )^{-1} W_m^\top \Sigma_\mathrm{noise} ,
 \end{align*}
 where $\Sigma_\mathrm{signal}$ and $\Sigma_\mathrm{noise}$ are any matrix such that $\mathcal{I}_Y\preceq\Sigma_\mathrm{signal}^{-1}$ and $\E[\mathcal{I}_{Y|\theta}]\preceq\Sigma_\mathrm{noise}^{-1}$.
\end{lemma}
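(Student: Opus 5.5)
The plan is to express the Fisher information of the coordinates $W^\top Y$ through that of $Y$, and then use the Schur-complement bound \eqref{eq:FisherMarginal_bound} of Proposition \ref{prop:Fisher} to marginalize out $Y_\perp$. First, since $Y = VZ$ with $Z=(Y_\mathrm{new},Y_m,Y_\perp)=W^\top Y$, the densities satisfy $\pi_Z(z)=|V|\,\pi_Y(Vz)$, so $\nabla_z\ln\pi_Z = V^\top\nabla_y\ln\pi_Y$. Hence $\mathcal{I}_Z = V^\top \mathcal{I}_Y V \preceq V^\top\Sigma_\mathrm{signal}^{-1}V = (W^\top\Sigma_\mathrm{signal}W)^{-1}$, using $V=W^{-\top}$. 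The same computation conditionally on $\theta$ gives $\E[\mathcal{I}_{Z|\theta}]\preceq (W^\top\Sigma_\mathrm{noise}W)^{-1}$.

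Next, I would marginalize. The quantity $\E[\mathcal{I}_{Y_\mathrm{new}|Y_m}]$ is the $(\mathrm{new},\mathrm{new})$ block of the Fisher information of $(Y_\mathrm{new},Y_m)$, taken in the sense $\E[(\nabla_{Y_\mathrm{new}}\ln\pi_{Y_\mathrm{new},Y_m})^{\otimes2}]$. Applying \eqref{eq:FisherMarginal_bound} with $X=Y_\perp$ shows that the full Fisher information of $(Y_\mathrm{new},Y_m)$ is dominated by the Schur complement of the $Y_\perp$ block in $\mathcal{I}_Z$. The Schur complement is monotone under $\preceq$, so this is in turn dominated by the Schur complement of $(W^\top\Sigma_\mathrm{signal}W)^{-1}$. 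That Schur complement equals the inverse of the leading $2\times2$ block of $W^\top\Sigma_\mathrm{signal}W$, namely
\[
M=\begin{pmatrix} W_\mathrm{new}^\top\Sigma W_\mathrm{new} & W_\mathrm{new}^\top\Sigma W_m\\ W_m^\top\Sigma W_\mathrm{new} & W_m^\top\Sigma W_m\end{pmatrix}, \qquad \Sigma=\Sigma_\mathrm{signal}.
\]
Therefore $\mathcal{I}_{(Y_\mathrm{new},Y_m)}\preceq M^{-1}$. Taking the diagonal $(\mathrm{new},\mathrm{new})$ block of both sides gives $\E[\mathcal{I}_{Y_\mathrm{new}|Y_m}]\preceq (M^{-1})_{\mathrm{new},\mathrm{new}}$. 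By the block inversion formula, $(M^{-1})_{\mathrm{new},\mathrm{new}}$ is the inverse of the Schur complement $W_\mathrm{new}^\top\Sigma W_\mathrm{new}-W_\mathrm{new}^\top\Sigma W_m(W_m^\top\Sigma W_m)^{-1}W_m^\top\Sigma W_\mathrm{new} = W_\mathrm{new}^\top\Sigma[W_m]W_\mathrm{new}$. This is \eqref{tmp:104763}.

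For \eqref{tmp:103571} I would repeat the argument with every density conditioned on $\theta$. The bound of Proposition \ref{prop:Fisher}, applied with the extra conditioning variable (as in \eqref{eq:FisherMarginal_bound_bis}, with $Z=Y_\perp$), gives the conditional Schur-complement inequality for the block $(Y_\mathrm{new},Y_m)$ given $\theta$. Equivalently, the proof of \eqref{eq:FisherMarginal_bound} goes through verbatim with all expectations taken jointly with $\theta$. The final step is then the same linear algebra with $\Sigma_\mathrm{noise}$ in place of $\Sigma_\mathrm{signal}$.

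The main obstacle I expect is bookkeeping rather than substance. One must check that the full joint Fisher matrix of $(Y_\mathrm{new},Y_m)$ given $\theta$ is indeed the object bounded by the Schur complement, and that the expected Fisher information of $Y_\mathrm{new}|Y_m$ is exactly its diagonal block. One must also justify monotonicity of Schur complements, which follows from writing the Schur complement as the inverse of a block of the inverse and using that inversion is order-reversing. Finally, invertibility of $W_m^\top\Sigma W_m$ holds because $W$ has full rank and $\Sigma\succ0$.
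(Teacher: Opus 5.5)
Your proposal is correct and follows essentially the same route as the paper: you marginalize out $Y_\perp$ with the Schur-complement bound \eqref{eq:FisherMarginal_bound} (or \eqref{eq:FisherMarginal_bound_bis} in the $\theta$-conditioned case), identify that Schur complement as the inverse of the leading block of $W^\top\mathcal{I}_Y^{-1}W$, compare it with the leading block $M$ of $W^\top\Sigma_\mathrm{signal}W$, and read off the $(\mathrm{new},\mathrm{new})$ block by block inversion. You track the Loewner direction correctly ($\Sigma_\mathrm{signal}\preceq\mathcal{I}_Y^{-1}$, so the Schur complement is $\preceq M^{-1}$), which is cleaner than the paper's displays: these state that comparison with the inequality reversed and write an equality where an inequality is meant.
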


\begin{proof}
Consider the Fisher information matrix
$$
 \mathcal{I}_{(Y_\mathrm{new},Y_m)} =
 \begin{pmatrix}
   \E[\mathcal{I}_{Y_\mathrm{new}|Y_m}] & \mathcal{I}_{ Y_\mathrm{new} ,Y_m} \\
  \mathcal{I}_{ Y_m, Y_\mathrm{new}} & \E[\mathcal{I}_{Y_m|Y_\mathrm{new}}] \\
 \end{pmatrix}.
$$
Proposition \ref{prop:Fisher} permits to write
$$
 \mathcal{I}_{(Y_\mathrm{new},Y_m)}
 \overset{\eqref{eq:FisherMarginal_bound}}{\preceq}
 S:=
 \E[\mathcal{I}_{(Y_\mathrm{new},Y_m)|Y_\perp}] -
 \mathcal{I}_{(Y_\mathrm{new},Y_m),Y_\perp}
 \E[\mathcal{I}_{Y_\perp|(Y_m,Y_\perp)}]^{-1}
 \mathcal{I}_{Y_\perp, (Y_\mathrm{new},Y_m)} .
$$
Next, we show that $S$ is the Shur complement of a block decomposition of the Fisher information matrix $\mathcal{I}_{Y}$.
By \eqref{tmp:20756187} we have
\begin{align*}
 \nabla_{Y_\mathrm{new}}\ln\pi_{Y_\mathrm{new} , Y_m,Y_\perp}(Y_\mathrm{new} , Y_m,Y_\perp)
 &= V_\mathrm{new}^\top \nabla_Y\ln\pi( Y  ) \\
 \nabla_{Y_m}\ln\pi_{Y_\mathrm{new} , Y_m,Y_\perp}(Y_\mathrm{new} , Y_m,Y_\perp)
 &= V_m^\top \nabla_Y\ln\pi( Y  ) \\
 \nabla_{Y_\perp}\ln\pi_{Y_\mathrm{new} , Y_m,Y_\perp}(Y_\mathrm{new} , Y_m,Y_\perp)
 &= V_\perp^\top \nabla_Y\ln\pi( Y  ) ,
\end{align*}
so that
\begin{align*}
  \E[\mathcal{I}_{Y_\mathrm{new},Y_m|Y_\perp}] &= (V_\mathrm{new},V_m)^\top \mathcal{I}_{Y} (V_\mathrm{new},V_m) \\
  \mathcal{I}_{(Y_\mathrm{new},Y_m),Y_\perp} &=  (V_\mathrm{new},V_m)^\top \mathcal{I}_{Y} V_\perp \\
  \E[\mathcal{I}_{Y_\perp|(Y_m,Y_\perp)}] &=  V_\perp^\top \mathcal{I}_{Y} V_\perp .
\end{align*}
Thus, $S$ is the Shur complement of the block matrix
$$
     V^\top \mathcal{I}_Y V =\begin{pmatrix}
      V_\mathrm{new}^\top \mathcal{I}_Y V_\mathrm{new} & V_\mathrm{new}^\top \mathcal{I}_Y V_m &  V_\mathrm{new}^\top \mathcal{I}_Y V_\perp \\
      V_m^\top \mathcal{I}_Y V_\mathrm{new} & V_m^\top \mathcal{I}_Y V_m & V_m^\top \mathcal{I}_Y V_\perp \\
      V_\perp^\top \mathcal{I}_Y V_\mathrm{new} & V_\perp^\top \mathcal{I}_Y V_m & V_\perp^\top \mathcal{I}_Y V_\perp
     \end{pmatrix} ,
$$
so that
\begin{align*}
 \begin{pmatrix}
\Big(~~S^{-1}~\Big) & \begin{matrix}\times \\ \times \end{matrix} \\
\begin{matrix}\times & \times \end{matrix}  & \begin{matrix}\times  \end{matrix}
\end{pmatrix}
   &=\Big( V^\top \mathcal{I}_Y V \Big)^{-1}\\
   &= W^\top \mathcal{I}_Y^{-1} W \\
   &\preceq W^\top \Sigma_\mathrm{signal} W
   = \begin{pmatrix}
    W_\mathrm{new}\Sigma_\mathrm{signal} W_\mathrm{new} & W_\mathrm{new}\Sigma_\mathrm{signal} W_m & \times \\
     W_m\Sigma_\mathrm{signal} W_\mathrm{new} & W_m\Sigma_\mathrm{signal} W_m & \times\\
     \times& \times& \times
   \end{pmatrix}
\end{align*}
Finally we obtain
\begin{align*}
 \begin{pmatrix}
   \E[\mathcal{I}_{Y_\mathrm{new}|Y_m}] & \mathcal{I}_{ Y_\mathrm{new} ,Y_m} \\
  \mathcal{I}_{ Y_m, Y_\mathrm{new}} & \E[\mathcal{I}_{Y_m|Y_\mathrm{new}}] \\
 \end{pmatrix}
 &= \mathcal{I}_{(Y_\mathrm{new},Y_m)} \\
 &\preceq S \\
 &=\begin{pmatrix}
    W_\mathrm{new}\Sigma_\mathrm{signal} W_\mathrm{new} & W_\mathrm{new}\Sigma_\mathrm{signal} W_m  \\
     W_m\Sigma_\mathrm{signal} W_\mathrm{new} & W_m\Sigma_\mathrm{signal} W_m \\
   \end{pmatrix}^{-1}  \\
 &= \begin{pmatrix}
    \Big( W_\mathrm{new} \Sigma_\mathrm{signal}[W_m] W_\mathrm{new} \Big)^{-1} & \times   \\
     \times & \times \\
   \end{pmatrix}.
\end{align*}
where $\Sigma_\mathrm{signal}[W_m] =
\Sigma_\mathrm{signal}
- \Sigma_\mathrm{signal} W_m ( W_m^\top \Sigma_\mathrm{signal} W_m )^{-1}
W_m^\top \Sigma_\mathrm{signal}$. By considering the first block of the above decomposition, we obtain \eqref{tmp:104763}.
Relation \eqref{tmp:103571} is derived analogously, which completes the proof.
\end{proof}

\paragraph{Proof of \eqref{eq:Bound_EIG}}
We start by decomposing the EIG as follow
\begin{align}
 \mathrm{EIG}(W_m)
 &= \E_{Y} \left[\Dkl( \pi_{\theta|Y_m} || \pi_\theta ) \right] \nonumber\\
 &= \int  \log\frac{\pi_{\theta|Y_m}}{\pi_{\theta}} \d\pi_{\theta Y} \nonumber\\
 &= \int  \log\frac{\pi_{\theta|Y_m,Y_\mathrm{new}}}{\pi_{\theta}} \d\pi_{\theta Y} - \int  \log\frac{\pi_{\theta|Y_m,Y_\mathrm{new}}}{\pi_{\theta|Y_m}} \d\pi_{\theta Y} \nonumber\\
 &= \mathrm{EIG}([W_m,W_\mathrm{new}]) - \E_Y[\Dkl( \pi_{\theta|Y_m,Y_\mathrm{new}} || \pi_{\theta|Y_m} )] .
 \label{tmp:078}
\end{align}
Applying Proposition \ref{prop:TheResult} and Lemma \ref{lem:FisherMarginal} yields
\begin{align*}
 &\mathrm{EIG}(W_m) \\
 &\overset{\eqref{tmp:078}}{=}\mathrm{EIG}([W_m,W_\mathrm{new}]) - \E_Y[\Dkl( \pi_{\theta|Y_m,Y_\mathrm{new}} || \pi_{\theta|Y_m} )] \\
 &\overset{\eqref{eq:TheResult}}{\leq}  \mathrm{EIG}([W_m,W_\mathrm{new}]) - \frac{1}{2}    \ln\left(\frac{| \E[\mathcal{I}_{Y_\mathrm{new} | Y_m}]^{-1} |}{|\E[\Cov(Y_\mathrm{new}| \theta,Y_m)]|}\right) \\
 &\overset{\eqref{tmp:104763}}{\leq}
 \mathrm{EIG}([W_m,W_\mathrm{new}]) - \frac{1}{2}    \ln\left(| W_\mathrm{new}^\top \Sigma_\mathrm{signal}[W_m] W_\mathrm{new}  |  \right) + \frac{1}{2}    \ln\left( |\E[\Cov(Y_\mathrm{new}| \theta,Y_m)]|\right).
\end{align*}
To show \eqref{eq:Bound_EIG}, it remains to bound $\E[\Cov(Y_\mathrm{new}| \theta,Y_m)]$ by $W_\mathrm{new}^\top  \Sigma_{Y|\theta}[W_m]  W_\mathrm{new}$, where $\Sigma_{Y|\theta}[W_m] = \Sigma_{Y|\theta}  - \Sigma_{Y|\theta} W_m (W_m^\top \Sigma_{Y|\theta} W_m)^{-1}  W_m^\top \Sigma_{Y|\theta}$. For any $A\in\R^{m'\times m}$ we can write
\begin{align*}
 \E[\Cov(Y_\mathrm{new}| \theta,Y_m)]
 &=\E[\Cov(Y_\mathrm{new} - A Y_m| \theta,Y_m)] \\
 &\preceq \E[\Cov(Y_\mathrm{new} - A Y_m| \theta)] \\
 &= \E[\Cov( (W_\mathrm{new}^\top - A  W_m^\top) Y| \theta)] \\
 &= (W_\mathrm{new}^\top - A  W_m^\top)\E[\Cov(Y| \theta)] (W_\mathrm{new} - W_m A^\top) .
\end{align*}
By taking $A= W_\mathrm{new}^\top\E[\Cov(Y|\theta)] W_m ( W_m^\top \E[\Cov(Y| \theta)] W_m )^{-1}$ we obtain
$
\E[\Cov(Y_\mathrm{new}| \theta,Y_m)] \preceq
W_\mathrm{new}^\top  \Sigma_{Y|\theta}[W_m]  W_\mathrm{new} .
$
This completes the proof of \eqref{eq:Bound_EIG}.

\paragraph{Proof of \eqref{eq:Bound_EIG_up}}
Applying Proposition \ref{prop:TheResult} and Lemma \ref{lem:FisherMarginal} yields
\begin{align*}
 &\mathrm{EIG}(W_m) \\
 &\overset{\eqref{tmp:078}}{=}\mathrm{EIG}([W_m,W_\mathrm{new}]) - \E_Y[\Dkl( \pi_{\theta|Y_m,Y_\mathrm{new}} || \pi_{\theta|Y_m} )] \\
 &\overset{\eqref{eq:TheResult}}{\geq}  \mathrm{EIG}([W_m,W_\mathrm{new}]) - \frac{1}{2}   \ln\left( \frac{| \E[\Cov(Y_\mathrm{new}| Y_m)] |}{|\E[\mathcal{I}_{Y_\mathrm{new} | \theta,Y_m}]^{-1}|} \right) \\
 &\overset{\eqref{tmp:103571}}{\geq}  \mathrm{EIG}([W_m,W_\mathrm{new}]) + \frac{1}{2}   \ln\left( |W_\mathrm{new}^\top \Sigma_\mathrm{noise}[W_m] W_\mathrm{new}| \right) - \frac{1}{2}   \ln\left( |   \E[\Cov(Y_\mathrm{new}| Y_m)]   |  \right)
\end{align*}
By the same calculation as above, we have
$
 \E[\Cov(Y_\mathrm{new}| Y_m)] \preceq  W_\mathrm{new}^\top  \Sigma_Y[W_m] W_\mathrm{new} ,
$
where $\Sigma_Y[W_m] = \Sigma_{Y}  - \Sigma_{Y} W_m (W_m^\top \Sigma_{Y} W_m)^{-1}  W_m^\top \Sigma_{Y}$. We deduce \eqref{eq:Bound_EIG_up}, which concludes the proof.

\section{Proof of Corollary \ref{cor:Bound_EIG_conservative}}\label{proof:Bound_EIG_conservative}

We prove Corollary \ref{cor:Bound_EIG_conservative} using Theorem \ref{th:Bound_EIG}. For any $W_m$, we choose $W_\text{new}=W_\perp$ be such that $W=[W_m,W_\perp]\in\R^{p\times p}$ is invertible and
\begin{equation}\label{tmp:109586}
 W^\top \Sigma_\text{noise} W =
 \begin{pmatrix}
  W_m^\top \Sigma_\text{noise} W_m & 0 \\ 0 & I_{p-m}
 \end{pmatrix} .
\end{equation}
Such a matrix $W_\perp$ exists as the result of the Gram-Schmidt process.
In particular, since $\Sigma_\text{noise}[W_m] = \Sigma_\text{noise} - \Sigma_\text{noise} W_m ( W_m^\top \Sigma_\text{noise} W_m )^{-1} W_m^\top \Sigma_\text{noise}$ we have
\begin{align}
 W_\perp^\top \Sigma_\text{noise}[W_m] W_\perp
 &\overset{\eqref{tmp:109586}}{=} W_\perp^\top \Sigma_\text{noise} W_\perp .  \label{tmp:07856}
\end{align}
Furthermore, since $\Sigma_Y[W_m] = \Sigma_{Y}  - \Sigma_{Y} W_m (W_m^\top \Sigma_{Y} W_m)^{-1}  W_m^\top \Sigma_{Y}$ we have
\begin{align}
 \ln|W^\top \Sigma_Y W|
 &=
 \ln\left|\begin{pmatrix}
  W_m^\top \Sigma_Y W_m & W_m^\top \Sigma_Y W_\perp \\ W_\perp^\top \Sigma_Y W_m & W_\perp^\top \Sigma_Y W_\perp
 \end{pmatrix}\right| \nonumber\\
 &= \ln|W_m^\top \Sigma_Y W_m| + \ln|W_\perp \Sigma_Y[W_m] W_\perp| .  \label{tmp:13057}
\end{align}
Then, Inequality \eqref{eq:Bound_EIG_up} writes
\begin{align*}
 \mathrm{EIG}(I_d)
 &= \mathrm{EIG}(W) \\
 &\overset{\eqref{eq:Bound_EIG_up}}{\leq} \mathrm{EIG}(W_m) + \frac{1}{2}   \ln\left( \frac{ | W_\perp^\top \Sigma_Y[W_m]W_\perp | }{ |W_\perp^\top \Sigma_\text{noise}[W_m] W_\perp| }\right) \\
 &\overset{\eqref{tmp:07856}}{=} \mathrm{EIG}(W_m) + \frac{1}{2}   \ln\left( | W_\perp^\top \Sigma_Y[W_m]W_\perp | \right) \\
 &\overset{\eqref{tmp:13057}}{=}¨ \mathrm{EIG}(W_m) + \frac{1}{2}   \ln\left( \frac{ | W^\top \Sigma_Y W | }{ |W_m^\top \Sigma_Y W_m|}\right) \\
 &= \mathrm{EIG}(W_m) + \frac{1}{2}   \ln\left( \frac{ | W^\top \Sigma_\text{noise} W | }{ |W_m^\top \Sigma_Y W_m|}\right) -  \frac{1}{2}   \ln\left( \frac{ | \Sigma_\text{noise} |}{ | \Sigma_Y | }\right)\\
 &\overset{\eqref{tmp:109586}}{=} \mathrm{EIG}(W_m) + \frac{1}{2}   \ln\left( \frac{ | W_m^\top \Sigma_\text{noise} W_m | }{ |W_m^\top \Sigma_Y W_m|}\right) -  \frac{1}{2}   \ln\left( \frac{ | \Sigma_\text{noise} |}{ | \Sigma_Y | }\right) .
\end{align*}
which is implies \eqref{eq:Bound_EIG_conservative}.
Proceeding the same way, we have that \eqref{eq:Bound_EIG} implies \eqref{eq:Bound_EIG_conservative_up}. This concludes the proof.

\section{Proof of Proposition \ref{prop:quasi_opt_cons}}\label{proof:quasi_opt_cons}

 We first note that for any three real-valued functions $\underline{f},f,\overline{f}$ defined over a set $\mathcal{X}$ such that, for all $x\in\mathcal{X}$, we have
 $$
  \underline{f}(x) \overset{(a)}{\leq} f(x) \overset{(b)}{\leq} \overline{f}(x) ,
 $$
 then it holds
 \begin{equation}\label{eq:temp_f_bound}
  f( \underline{x} ) \geq \sup_{x\in\mathcal{X}} f(x) - \sup_{x\in\mathcal{X}}( \overline{f}(x)-\underline{f}(x) ) ,
 \end{equation}
 where $\underline{x}\in\argmax_{x\in\mathcal{X}} \underline{f}(x)$, assuming it exists.
 Indeed, for any $x\in\mathcal{X}$ we have
 \begin{align*}
  f( \underline{x} ) \overset{(a)}{\geq}  \underline{f}(\underline{x})
  \geq \underline{f}(x)
  \overset{(b)}{\geq}f(x) -  ( \overline{f}(x)-\underline{f}(x) )
  \geq f(x) -  \sup_{x\in\mathcal{X}}( \overline{f}(x)-\underline{f}(x) ) ,
 \end{align*}
 which yields \eqref{eq:temp_f_bound}.
 \textbf{Incremental bounds:}
 applying \eqref{eq:temp_f_bound} on the inequalities \eqref{eq:Bound_EIG_incremental}--\eqref{eq:Bound_EIG_incremental_up} yields
 \begin{align*}
  \mathrm{EIG}(W_m^{\text{inc}})
  &\geq
  \sup_{W_m \in \mathcal{K}_m} \mathrm{EIG}(W_m)
  - \sup_{W_m \in \mathcal{K}_m} \frac{1}{2}   \ln\left( \frac{ | W_m^\top \Sigma_{Y} W_m | }{ |W_m^\top \Sigma_\text{noise} W_m| } \frac{| W_m^\top \Sigma_{Y|\theta}  W_m |}{|W_m^\top \Sigma_\text{signal}  W_m|}\right) \\
  &\geq \sup_{W_m \in \mathcal{K}_m} \mathrm{EIG}(W_m)
  - \frac{1}{2}\sup_{W_m \in \R^{p\times m}}    \ln\left( \frac{ | W_m^\top \Sigma_{Y} W_m | }{|W_m^\top \Sigma_\text{signal}  W_m|}\right) \\
  &-  \frac{1}{2}\sup_{W_m \in \R^{p\times m}}    \ln\left( \frac{| W_m^\top \Sigma_{Y|\theta}  W_m |}{ |W_m^\top \Sigma_\text{noise} W_m| }\right) \\
  &= \sup_{W_m \in \mathcal{K}_m} \mathrm{EIG}(W_m) - \sum_{i=1}^m\frac{\ln(\alpha_i) + \ln(\beta_i)}{2}
 \end{align*}
 which is \eqref{eq:quasi_opt_inc}.
 The \textbf{conservative bounds} are obtained in the same way: applying \eqref{eq:temp_f_bound} on \eqref{eq:Bound_EIG_conservative}--\eqref{eq:Bound_EIG_conservative_up} we get
 \begin{align*}
  &\mathrm{EIG}(W_m^{\text{cons}})\\
  &\geq
  \sup_{W_m \in \mathcal{K}_m} \mathrm{EIG}(W_m)
  - \sup_{W_m \in \mathcal{K}_m} \frac{1}{2}   \ln\left( \frac{|W_m^\top \Sigma_\text{signal}  W_m|}{| W_m^\top \Sigma_{Y|\theta}  W_m |}\frac{ |W_m^\top \Sigma_\text{noise} W_m| }{ | W_m^\top \Sigma_{Y} W_m | } \right) \\
  &+ \frac{1}{2}\ln\left( \frac{|\Sigma_\text{signal}  |}{| \Sigma_Y |}  \frac{|\Sigma_\text{noise}| }{|\Sigma_{Y|\theta}|}\right) \\
  &\geq
  \sup_{W_m \in \mathcal{K}_m} \mathrm{EIG}(W_m)
  - \sup_{W_m \in \R^{p\times m}}  \frac{1}{2}   \ln\left( \frac{ |W_m^\top \Sigma_\text{noise} W_m| }{| W_m^\top \Sigma_{Y|\theta}  W_m |} \right)  \\
  &- \sup_{W_m \in \R^{p\times m}}  \frac{1}{2}   \ln\left( \frac{|W_m^\top \Sigma_\text{signal}  W_m|}{ | W_m^\top \Sigma_{Y} W_m | } \right)
  + \frac{1}{2}\ln\left( \frac{|\Sigma_\text{signal}  |}{| \Sigma_Y |}  \frac{|\Sigma_\text{noise}| }{|\Sigma_{Y|\theta}|}\right) \\
  &= \sup_{W_m \in \mathcal{K}_m} \mathrm{EIG}(W_m)
  - \sum_{i=d-m+1}^d \frac{\ln(\alpha_i^{-1})+\ln(\beta_i^{-1})}{2}  + \sum_{i=1}^d \frac{\ln(\alpha_i^{-1})+\ln(\beta_i^{-1})}{2} \\
  &= \sup_{W_m \in \mathcal{K}_m} \mathrm{EIG}(W_m)
  - \sum_{i=1}^{d-m} \frac{\ln(\alpha_i)+\ln(\beta_i)}{2}
 \end{align*}
 which is \eqref{eq:quasi_opt_cons}.

\section{Proof of Proposition \ref{prop:ComputeFisher}}\label{proof:ComputeFisher}

The fact that $\Sigma_Y:=\Cov(Y) = \Sigma_\text{obs} + \Cov( G(\eta) )$ and $\Sigma_{Y|\theta}:=\E[\Cov(Y|\theta)] = \Sigma_\text{obs} + \E[ \Cov( G(\eta) |\theta) ] $ are direct consequences of the independence between $\eta$ and $\varepsilon$.
To compute the Fisher information $\E[\mathcal{I}_{Y|\theta}]$, we use the fact that $Y|\eta \sim\mathcal{N}( G(\eta) ,\Sigma_\text{obs})$ is Gaussian with $\pi_{Y|\eta}(y|\eta)\propto \exp(-\frac{1}{2}\| G(\eta) - y \|^2_{\Sigma_\text{obs}^{-1}}) ,$
and then $ \nabla_Y \ln\pi_{Y|\eta}(Y|\eta)  =  \Sigma_\text{obs}^{-1}( G(\theta,\eta)-Y )$.
Thus, Proposition \ref{prop:Fisher} permits to write
\begin{align*}
 \E[\mathcal{I}_{Y|\theta}]
 &\overset{\eqref{eq:FisherMarginal_bis}}{=} \E[\mathcal{I}_{Y|\theta,\eta}] - \E[\Cov(\nabla_Y \ln\pi_{\theta,\eta,Y}(\theta,\eta,Y)|Y,\theta)] \\
 &=\E[\mathcal{I}_{Y|\theta,\eta}] - \E[\Cov(\nabla_Y \ln\pi_{Y|\theta,\eta}(Y|\theta,\eta)|Y,\theta)] \\
 &=\E[\mathcal{I}_{Y|\eta}] - \E[\Cov(\nabla_Y \ln\pi_{Y|\eta}(Y|\eta)|Y,\theta)] \\
 &= \Sigma_\text{obs}^{-1} -  \E[\Cov(  \Sigma_\text{obs}^{-1}( G(\theta,\eta)-Y ) |Y,\theta)] \\
 &= \Sigma_\text{obs}^{-1} -  \Sigma_\text{obs}^{-1}\E[\Cov(   G(\theta,\eta) |Y,\theta)] \Sigma_\text{obs}^{-1} ,
\end{align*}
where we used the fact that $(Y|\eta,\theta) = (Y|\eta)$ and that $\mathcal{I}_{Y|\eta}=\Sigma_\text{obs}^{-1}$.
To compute $\mathcal{I}_Y$, let us write
\begin{align*}
 \nabla_Y \ln\pi_{\theta,Y}(\theta,Y)
 &= \nabla_Y \ln\pi_{Y|\theta}(Y|\theta) \\
 &= \frac{\nabla_Y \int \pi_{Y,\eta|\theta}(Y,\eta|\theta) \d\eta}{\pi_{Y|\theta}(Y|\theta)} \\
 &= \frac{ \int \nabla_Y\ln\pi_{Y,\eta|\theta}(Y,\eta|\theta)  \pi_{Y,\eta|\theta}(Y,\eta|\theta) \d\eta}{\pi_{Y|\theta}(Y|\theta)} \\
 &=  \int \nabla_Y\ln\pi_{Y,\eta|\theta}(Y,\eta|\theta)  \pi_{\eta|Y,\theta}(\eta|Y,\theta) \d\eta \\
 &= \E[ \nabla_Y\ln\pi_{Y|\eta}(Y|\eta) |\theta,Y] ,
\end{align*}
where, in the last step, we used the fact that $(Y|\theta,\eta)=(Y|\eta)$. Then, using the law of total variance $\Cov(Z|Y) = \E[\Cov(Z|T)|Y] +\Cov(\E[Z|T]|Y)$ with $Z=\nabla_Y\ln\pi_{Y|\eta}(Y|\eta)$ and $T=(\theta,Y)$, we can write
\begin{align*}
 \mathcal{I}_{Y}
 &\overset{\eqref{eq:FisherMarginal}}{=} \E[\mathcal{I}_{Y|\theta}] - \E[\Cov(\nabla_Y \ln\pi_{\theta,Y}(\theta,Y)|Y)] \\
 &= \E[\mathcal{I}_{Y|\theta}] - \E\Big[\Cov( \E[ \nabla_Y\ln\pi_{Y|\eta}(Y|\eta) |\theta,Y]  |Y) \Big] \\
 &= \E[\mathcal{I}_{Y|\theta}] - \E\Big[  \Cov(\nabla_Y\ln\pi_{Y|\eta}(Y|\eta)|Y) - \E[\Cov(\nabla_Y\ln\pi_{Y|\eta}(Y|\eta)|\theta,Y)|Y] \Big] \\
 &= \underbrace{\E[\mathcal{I}_{Y|\theta}] +  \E[\Cov(\nabla_Y\ln\pi_{Y|\eta}(Y|\eta)|\theta,Y)] }_{=\Sigma_\text{obs}^{-1}} - \underbrace{\E[  \Cov(\nabla_Y\ln\pi_{Y|\eta}(Y|\eta)|Y) ]}_{=\Sigma_\text{obs}^{-1}\E[\Cov(   G(\theta,\eta) |Y)] \Sigma_\text{obs}^{-1}},
\end{align*}
where, in the last step, we used the above expression of $\E[\mathcal{I}_{Y|\theta}]$. This concludes the proof.

\section{Proof of Proposition \ref{prop:Sigma_FG}}\label{proof:Sigma_FG}

 We start by showing that $\Sigma_\text{signal}^{(\infty,\mathcal{F})}$ is well defined.
 Because $((1+\alpha) f+\beta\ell)\in\mathcal{F}$ for any $\alpha,\beta\in\R$ and any affine function $\ell$, we have
 \begin{align*}
  \E[\|G(\eta)-f(Y)\|^2]
  &\overset{\eqref{eq:conditionalExpectation_Y}}{\leq} \E[\|G(\eta)-(1+\alpha)f(Y) - \beta\ell(Y)\|^2] \\
  &= \E[\|G(\eta)-f(Y)\|^2] - 2\alpha \E[ (G(\eta)-f(Y))^\top f(Y)\|^2]  \\
  &- 2 \beta \E[ (G(\eta)-f(Y))^\top \ell(Y)\|^2]  +\mathcal{O}(\alpha^2 + \alpha\beta+ \beta^2 ) ,
 \end{align*}
 so that $\E[ (G(\eta)-f(Y))^\top f(Y)\|^2]=0$ and $\E[ (G(\eta)-f(Y))^\top \ell(Y)\|^2]=0$. We deduce
\begin{align*}
 &\E[(G(\eta) - f(Y) )^{\otimes 2}] \\
 &= \E[(G(\eta) - \ell(Y) )^{\otimes 2}] +2 \E[ (G(\eta) - \ell(Y) )(\ell(Y)  - f(Y))^\top ]_\text{sym}+ \E[(\ell(Y)  - f(Y))^{\otimes 2}] \\
 &= \E[(G(\eta) - \ell(Y) )^{\otimes 2}] +2 \E[ (f(Y) - \ell(Y) )(\ell(Y)  - f(Y))^\top ]_\text{sym}+ \E[(\ell(Y)  - f(Y))^{\otimes 2}] \\
 &= \E[(G(\eta) - \ell(Y) )^{\otimes 2}] - \E[(\ell(Y)  - f(Y))^{\otimes 2}] \\
 &\preceq \E[(G(\eta) - \ell(Y) )^{\otimes 2}].
\end{align*}
Next we take $\ell(Y)=A(Y-\E[G(\eta)])-\E[G(\eta)]$ so that
\begin{align*}
 \E[(G(\eta) - \ell(Y) )^{\otimes 2}]
 &= \E[(  G(\eta)-\E[G(\eta)] - A( G(\eta)-\E[G(\eta)] - \varepsilon ) )^{\otimes 2}] \\
 &= (I-A)\Cov(G(\eta))(I-A)^\top + A \Sigma_\text{obs}A^\top \\
 &= A(\Cov(G(\eta))+\Sigma_\text{obs})A^\top - 2 (A\Cov(G(\eta)))_\text{sym} + \Cov(G(\eta)) .
\end{align*}
Choosing $A = \Cov(G(\eta)) (\Cov(G(\eta))+\Sigma_\text{obs})^{-1}$ yields
\begin{align*}
 \E[(G(\eta) - \ell(Y) )^{\otimes 2}]
 &= \Cov(G(\eta)) A^\top - 2 (A\Cov(G(\eta)))_\text{sym} + \Cov(G(\eta)) \\
 &=  -  A\Cov(G(\eta))  + \Cov(G(\eta)) \\
 &=  \Cov(G(\eta))  -  \Cov(G(\eta)) (\Cov(G(\eta))+\Sigma_\text{obs})^{-1}\Cov(G(\eta))  \\
 &= (\Cov(G(\eta))^{-1}+\Sigma_\text{obs}^{-1})^{-1} \\
 &=  \Sigma_\text{obs}  - \Sigma_\text{obs} ( \Sigma_\text{obs} +\Cov(G(\eta)))^{-1}\Sigma_\text{obs},
\end{align*}
where in the last steps we used the Woodbury formula. Since $\Sigma_\text{obs} \succ0 $ and $\Cov(G(\eta)) \succ0$ we deduce that $\E[(G(\eta) - \ell(Y) )^{\otimes 2}]\prec \Sigma_\text{obs}$ so that $\Sigma_\text{signal}^{(\infty,\mathcal{F})}$ is well defined.

We show now that $\Sigma_\text{noise}^{(\infty,\mathcal{G})}$ is well defined. Using the same arguments, we can show that $ \E[(G(\eta) - g(Y,\theta) )^{\otimes 2}]  \preceq \E[(G(\eta) - \ell(Y,\theta) )^{\otimes 2}]$ for any affine function $\ell(Y,\theta)$. In particular with the same $\ell(Y,\theta)=\ell(Y)$ as above, we get that $\E[(G(\eta) - \ell(Y,\theta) )^{\otimes 2}]\prec \Sigma_\text{obs}$, therefore $\Sigma_\text{noise}^{(\infty,\mathcal{G})}$ is well defined.

\section{Proof of Proposition \ref{prop:BoundFisher}}\label{proof:BoundFisher}

The proof consists in applying Proposition \ref{prop:Fisher}.
\paragraph{Proof of \eqref{eq:SigmaSignal_boundFisher}.} We bound $\mathcal{I}_Y$ using
\begin{align*}
 \mathcal{I}_Y
  &\overset{\eqref{eq:FisherMarginal_bound}}{\preceq}   \E[\mathcal{I}_{Y|\eta}]  - \mathcal{I}_{Y, \eta } \E[\mathcal{I}_{\eta|Y}]^{-1}  \mathcal{I}_{\eta,Y} .
\end{align*}
In order to compute the terms above, let us recall $\pi_{Y,\eta}(Y,\eta) \propto \exp(-\frac{1}{2}\|G(\eta)-Y\|^2_{\Sigma_\text{obs}^{-1}})\pi_{\eta}(\eta)$ so that
\begin{align*}
  \nabla_Y\ln\pi_{Y,\eta}(Y,\eta) &=  \Sigma_\text{obs}^{-1}(G(\eta)-Y) \\
  \nabla_\eta\ln\pi_{Y,\eta}(Y,\eta) &=  -  \Jac  G(\eta)^\top \Sigma_\text{obs}^{-1} ( G(\eta)-Y ) + \nabla_\eta\ln\pi_{\eta}(\eta)
\end{align*}
Because $G(\eta)-Y = -\varepsilon$ we deduce
$$
 \E[\mathcal{I}_{Y|\eta}]
 = \Cov(\Sigma_\text{obs}^{-1}\varepsilon) = \Sigma_\text{obs}^{-1}
$$
and
\begin{align*}
 \mathcal{I}_{Y,\eta }
 &= \E[( \nabla_Y\ln\pi_{Y,\eta}(Y,\eta) )(\nabla_\eta\ln\pi_{Y,\eta}(Y,\eta))^\top] \\
 &= \E[( - \Sigma_\text{obs}^{-1}\varepsilon  )(\Jac  G(\eta)^\top \Sigma_\text{obs}^{-1} \varepsilon + \nabla_\eta\ln\pi_{\eta}(\eta))^\top] \\
 &= -\Sigma_\text{obs}^{-1}\E[ \Jac  G(\eta) ]  ,
\end{align*}
and
\begin{align*}
 \E[\mathcal{I}_{\eta|Y}]
 &= \E[(\nabla_\eta\ln\pi_{Y,\eta}(Y,\eta))^{\otimes 2}] \\
 &= \E[((\Jac  G(\eta)^\top \Sigma_\text{obs}^{-1} \varepsilon + \nabla_\eta\ln\pi_{\eta}(\eta)))^{\otimes 2}] \\
 &= \E[(\Jac  G(\eta)^\top \Sigma_\text{obs}^{-1} \varepsilon))^{\otimes 2}] + \E[(  \nabla_\eta\ln\pi_{\eta}(\eta))^{\otimes 2}] \\
 &= \E[ \Jac  G(\eta)^\top \Sigma_\text{obs}^{-1} \Jac  G(\eta) ] + \mathcal{I}_\eta .
\end{align*}
Combining the above expressions yields
\begin{align*}
 \mathcal{I}_Y
  &\preceq   \Sigma_\text{obs}^{-1}   - \Sigma_\text{obs}^{-1}\E[ \Jac  G(\eta) ]   \Big(  \E[ \Jac  G(\eta)^\top \Sigma_\text{obs}^{-1} \Jac  G(\eta) ] + \mathcal{I}_\eta  \Big)^{-1} \E[ \Jac  G(\eta)^\top ] \Sigma_\text{obs}^{-1}   \\
  &\overset{\eqref{eq:Hu}\&\eqref{eq:Lu}}{=}
  \Sigma_\text{obs}^{-1}   - \Sigma_\text{obs}^{-1} \mathcal{L}(G)   \Big(  \mathcal{H}(G) + \mathcal{L}(G)^\top \Sigma_\text{obs}^{-1} \mathcal{L}(G) + \mathcal{I}_\eta  \Big)^{-1} \mathcal{L}(G)^\top \Sigma_\text{obs}^{-1} \\
  &= \left( \Sigma_\text{obs}   + \mathcal{L}(G)   \Big(  \mathcal{H}(G) + \mathcal{I}_\eta  \Big)^{-1} \mathcal{L}(G)^\top \right)^{-1}
\end{align*}
where we used Woodbury formula $(A + U^\top B^{-1} U)^{-1}=A^{-1} - A^{-1} U^\top (B + UA^{-1} U^\top)^{-1} U A^{-1}$. This shows that $\Sigma_\text{signal}$ defined by \eqref{eq:SigmaSignal_boundFisher} satisfies $\mathcal{I}_Y \preceq \Sigma_\text{signal}^{-1}$.

\paragraph{Proof of \eqref{eq:SigmaNoise_boundFisher}.}
We bound $\E[\mathcal{I}_{Y|\theta}]$ using
\begin{align*}
 \E[\mathcal{I}_{Y|\theta}]
 &\overset{\eqref{eq:FisherMarginal_bound_bis}}{\preceq}
  \E[\mathcal{I}_{Y|\theta,\eta}]  - \E[\mathcal{I}_{Y,\eta|\theta}] \E[\mathcal{I}_{\eta|(Y,\theta)}]^{-1} \E[\mathcal{I}_{\eta,Y|\theta}] .
\end{align*}
In order to compute the terms above, recall that the joint density of $(Y,\eta,\theta)$ is
\begin{equation*}
   \pi_{Y,\eta,\theta}(Y,\eta,\theta) \propto   \exp\left(-\frac{1}{2}\| G(\eta) - Y\|_{\Sigma_\text{obs}^{-1}}^2\right) \exp\left(-\frac{1}{2}\| H(\eta) - \theta\|_{\Sigma_\xi^{-1}}^2 \right) \pi_{\eta}(\eta) ,
\end{equation*}
so that
\begin{align*}
 \nabla_Y \ln\pi_{Y,\eta,\theta}(Y,\eta,\theta) &= \Sigma_\text{obs}^{-1}(G(\eta)-Y) \\
 \nabla_\theta \ln\pi_{Y,\eta,\theta}(Y,\eta,\theta) &=  \Sigma_\xi^{-1}( H(\eta)-\theta) \\
 \nabla_\eta \ln\pi_{Y,\eta,\theta}(Y,\eta,\theta) &=  -  \Jac  G(\eta)^\top \Sigma_\text{obs}^{-1} ( G(\eta)-Y ) \\
 &~~~\, -  \Jac  H(\eta)^\top \Sigma_\xi^{-1} ( H(\eta)-\theta ) + \nabla_\eta\ln\pi_{\eta}(\eta) .
\end{align*}
Because $G(\eta)-Y = -\varepsilon$ and $H(\eta)-\theta = -\xi$ we have
\begin{align*}
 \E[\mathcal{I}_{Y|\theta,\eta}] = \Cov(\Sigma_\text{obs}^{-1}\varepsilon) = \Sigma_\text{obs}^{-1}
\end{align*}
and
\begin{align*}
 \E[\mathcal{I}_{Y,\eta|\theta}]
 &= \E[( \nabla_Y \ln\pi_{Y,\eta,\theta}(Y,\eta,\theta) )( \nabla_\eta \ln\pi_{Y,\eta,\theta}(Y,\eta,\theta) )^\top] \\
 &= \E[( -\Sigma_\text{obs}^{-1}\varepsilon )(  \Jac  G(\eta)^\top \Sigma_\text{obs}^{-1} \varepsilon  +  \Jac  H(\eta)^\top \Sigma_\xi^{-1} \xi + \nabla_\eta\ln\pi_{\eta}(\eta) )^\top] \\
 &= \E[( -\Sigma_\text{obs}^{-1}\varepsilon )(  \Jac  G(\eta)^\top \Sigma_\text{obs}^{-1} \varepsilon   )^\top] \\
 &= -\Sigma_\text{obs}^{-1}\E[ \Jac  G(\eta) ] ,
\end{align*}
and
\begin{align*}
 \E[\mathcal{I}_{\eta|(Y,\theta)}]
 &= \E[( \nabla_\eta \ln\pi_{Y,\eta,\theta}(Y,\eta,\theta) )^{\otimes 2} ] \\
 &= \E[(   \Jac  G(\eta)^\top \Sigma_\text{obs}^{-1} \varepsilon  +  \Jac  H(\eta)^\top \Sigma_\xi^{-1} \xi + \nabla_\eta\ln\pi_{\eta}(\eta) )^{\otimes 2} ] \\
 &= \E[(   \Jac  G(\eta)^\top \Sigma_\text{obs}^{-1} \varepsilon   )^{\otimes 2} ]
 + \E[(   \Jac  H(\eta)^\top \Sigma_\xi^{-1} \xi   )^{\otimes 2} ]
 + \E[(   \nabla_\eta\ln\pi_{\eta}(\eta) )^{\otimes 2} ]\\
 &= \E[ \Jac  G(\eta)^\top \Sigma_\text{obs}^{-1} \Jac  G(\eta) ]
 + \E[  \Jac  H(\eta)^\top \Sigma_\xi^{-1}\Jac  H(\eta) ]
 + \mathcal{I}_{\eta} \\
 &= \mathcal{H}(G) + \mathcal{L}(G)^\top\Sigma_\text{obs}^{-1}\mathcal{L}(G)  + \mathcal{J}(H)  + \mathcal{I}_{\eta}
\end{align*}
We deduce
\begin{align*}
 \E[\mathcal{I}_{Y|\theta}]
 &\preceq  \E[\mathcal{I}_{Y|\theta,\eta}]  - \E[\mathcal{I}_{Y,\eta|\theta}] \E[\mathcal{I}_{\eta|(Y,\theta)}]^{-1} \E[\mathcal{I}_{\eta,Y|\theta}] \\
 &= \Sigma_\text{obs}^{-1}  - \Sigma_\text{obs}^{-1}\mathcal{L}(G)^\top
 \Big(  \mathcal{H}(G) + \mathcal{L}(G)^\top\Sigma_\text{obs}^{-1}\mathcal{L}(G)  + \mathcal{J}(H)  + \mathcal{I}_{\eta} \Big)^{-1}
 \mathcal{L}(G) \Sigma_\text{obs}^{-1} \\
 &=\left( \Sigma_\text{obs}   + \mathcal{L}(G)   \Big(  \mathcal{H}(G) + \mathcal{J}(H) + \mathcal{I}_\eta  \Big)^{-1} \mathcal{L}(G)^\top  \right)^{-1}
\end{align*}
This shows that $\Sigma_\text{noise}$ defined by \eqref{eq:SigmaNoise_boundFisher} satisfies $\E[\mathcal{I}_{Y|\theta}] \preceq \Sigma_\text{noise}^{-1}$ and concludes the proof.

\section{Proof of Equation \eqref{eq:EIG_Gaussian}}\label{proof:EIG_Gaussian}

 By construction, the joint random variable $(\theta,Y)$ is Gaussian with
 $$
 \begin{pmatrix}
  \theta\\Y
 \end{pmatrix}
 \sim \mathcal{N}\left(
 \begin{pmatrix}
  m_\theta \\ A m_\theta
 \end{pmatrix},
 \begin{pmatrix}
  \Sigma_\theta  &   \Sigma_\theta A^\top \\ A\Sigma_\theta  &  \Cov(Y)
 \end{pmatrix}
 \right),
 $$
 where  $\Cov(Y) =  A \Sigma_\theta A^\top  + \Sigma_\text{obs}$.
 We deduce
 $$
 \begin{pmatrix}
  \theta\\Y_m
 \end{pmatrix}
 \sim \mathcal{N}\left(
 \begin{pmatrix}
  m_\theta \\ W_m^\top(  A m_\theta  )
 \end{pmatrix},
 \begin{pmatrix}
  \Sigma_\theta  &   \Sigma_\theta A^\top  W_m\\ W_m^\top A\Sigma_\theta  &  W_m^\top \Cov(Y) W_m
 \end{pmatrix}
 \right),
 $$
 Recall the formula
 $\Dkl(\mu_1||\mu_2) = \frac{1}{2} ( \trace(\Sigma_2^{-1}\Sigma_1)-d+\ln\tfrac{|\Sigma_2|}{|\Sigma_1|} )$
 for any Gaussian measures $\mu_1=\mathcal{N}(m_0,\Sigma_1)$ and $\mu_2=\mathcal{N}(m_0,\Sigma_2)$ with $m_0\in\R^d $ and $ \Sigma_1,\Sigma_2\in\R^{d\times d}$.
 Then
 \begin{align*}
  &\mathrm{EIG}(W_m)\\
  &= \Dkl( \pi_{\theta Y_m} || \pi_\theta\pi_{Y_m} ) \\
  &= \frac{1}{2}\left(\trace \left(\Sigma_{\theta Y_m}
  \begin{pmatrix}
   \Sigma_\theta & 0 \\ 0 & W_m^\top \Cov(Y) W_m
  \end{pmatrix}^{-1} \right)
  -(d+m)+\ln\frac{\left|\begin{pmatrix}
   \Sigma_\theta & 0 \\ 0 & W_m^\top \Cov(Y) W_m
  \end{pmatrix}\right|}{|\Sigma_{\theta Y_m}|}\right) \\
  &= \frac{1}{2}\ln\left|\begin{pmatrix}
   \Sigma_\theta & 0 \\ 0 & W_m^\top \Cov(Y) W_m
  \end{pmatrix}\right|  - \frac{1}{2} \ln \left|
  \begin{pmatrix}
  \Sigma_\theta  &   \Sigma_\theta A^\top  W_m\\ W_m^\top A\Sigma_\theta  &  W_m^\top \Cov(Y) W_m
 \end{pmatrix}
  \right|\\
 &= \frac{1}{2}\ln \left( |\Sigma_\theta|  | W_m^\top \Cov(Y) W_m| \right) \\
 & \qquad - \frac{1}{2} \ln \left( | \Sigma_\theta - \Sigma_\theta A^\top W_m (W_m^\top \Cov(Y) W_m)^{-1} W_m^\top A \Sigma_\theta | | W_m^\top \Cov(Y) W_m| \right) \\
 &= - \frac{1}{2}\ln \left( | I_d - \Sigma_\theta^{1/2} A^\top W_m \left( W_m^\top \Cov(Y) W_m \right)^{-1} W_m^\top A \Sigma_\theta^{1/2} | \right) \\
 &= - \frac{1}{2}\ln \left( | I_d - W_m^\top A  \Sigma_\theta A^\top W_m \left( W_m^\top \Cov(Y) W_m \right)^{-1}  | \right),
 \end{align*}
 where, in the last step, we used the matrix determinant lemma $|I-AB| = | I - BA|$.
 Finally, because $\Cov(Y) = A \Sigma_\theta A^\top  + \Sigma_\text{obs} $, we can write
 \begin{align*}
  \mathrm{EIG}(W_m)
  &= - \frac{1}{2}\ln \left( |  W_m^\top (\Cov(Y) - A  \Sigma_\theta A^\top ) W_m \left( W_m^\top \Cov(Y) W_m \right)^{-1}  | \right) \\
  &= - \frac{1}{2}\ln \left( |  W_m^\top \Sigma_\text{obs}   W_m \left( W_m^\top \Cov(Y) W_m \right)^{-1}  | \right),
 \end{align*}
 which is \eqref{eq:EIG_Gaussian}.

\bibliographystyle{siam}
\bibliography{references}

\end{document}